\documentclass[onecolumn,aps, pra,superscriptaddress,longbibliography,
12pt,tightenlines
]{revtex4-2}

\pdfoutput=1

\usepackage{graphicx}
\usepackage{bm}
\usepackage{amsmath}
\usepackage{environ}
\usepackage{appendix}
\usepackage{physics}

\usepackage[nopar]{lipsum}

\usepackage{amsthm}
\usepackage{amssymb}
\usepackage{amsfonts}

\usepackage{slashed}
\usepackage{bbm}
\usepackage{latexsym,epsfig,bbm}

\usepackage{hyperref}

\hypersetup{colorlinks = true, linkcolor = [rgb]{0.19411,0.51882,0.667058}, urlcolor = [rgb]{0.125490,0.29542,0.1647058}, citecolor = [rgb]{0.75882,0.37411,0.14117}}

\usepackage{color}
\definecolor{blue}{rgb}{0,0.2,1}

\definecolor{red}{rgb}{0.9,0,0}

\newcommand{\vect}[1]{\boldsymbol{#1}}

\newtheorem{theorem}{Theorem}
\newtheorem{lemma}[theorem]{Lemma}
\newtheorem{definition}[theorem]{Definition}
\newtheorem{problem}{Problem}

\newtheorem{remark}{Remark}

\makeatletter
\newcommand{\xboxed}[1]{\mathpalette\xboxed@{#1}}
\newcommand{\xboxed@}[2]{{\fboxsep=1.4pt\boxed{#1#2}}}
\makeatother
%

\usepackage{mathtools}

\DeclarePairedDelimiter\rbra{\lparen}{\rparen}
\DeclarePairedDelimiter\sbra{\lbrack}{\rbrack}
\DeclarePairedDelimiter\cbra{\{}{\}}
\DeclarePairedDelimiter\Abs{\lVert}{\rVert}
\DeclarePairedDelimiter\ave{\langle}{\rangle}

\newcommand{\poly} {\operatorname{poly}}

\newcommand{\polylog} {\operatorname{polylog}}



\begin{document}

\title{Quantum algorithms for matrix geometric means}

\author{Nana~Liu}
\email{nana.liu@quantumlah.org}
\affiliation{Institute of Natural Sciences, School of Mathematical Sciences, MOE-LSC,
Shanghai Jiao Tong University, Shanghai, 200240, P. R. China}
\affiliation{Shanghai Artificial Intelligence Laboratory, Shanghai, China}
\affiliation{University of Michigan-Shanghai Jiao Tong University Joint Institute, Shanghai 200240, China}

\author{Qisheng Wang}
\email{QishengWang1994@gmail.com}
\affiliation{School of Informatics, University of Edinburgh, EH8 9AB Edinburgh, United Kingdom}
\affiliation{Graduate School of Mathematics, Nagoya University, Nagoya 464-8602, Japan}

\author{Mark~M.~Wilde}
\email{wilde@cornell.edu}
\affiliation{School of Electrical and Computer Engineering, Cornell University, Ithaca, New York 14850, USA}

\author{Zhicheng Zhang}
\email{iszczhang@gmail.com}
\affiliation{Centre for Quantum Software and Information, University of Technology Sydney, Ultimo, NSW 2007, Australia}

\date{\today}

\begin{abstract} 
Matrix geometric means between two positive definite matrices can be defined equivalently from distinct perspectives -- as solutions to certain nonlinear systems of equations, as points along geodesics in Riemannian geometry, and as solutions to certain optimisation problems. 
This diversity already suggests the potential for varied applications, as well as acting as a bridge between different domains. Here we devise new quantum subroutines to efficiently prepare quantum unitary operators that embed the standard matrix geometric mean and its generalisations called the weighted matrix geometric mean. This enables the construction of solutions to the algebraic Riccati equation, which is an important class of nonlinear systems of equations that appears in machine learning, optimal control, estimation, and filtering. 
Using these subroutines,
we present a new class of quantum learning algorithms called \textit{quantum geometric mean metric learning}. This has applications in efficiently finding the best distance measure and solving classification problems in the weakly supervised limit and for anomaly detection, for both classical and quantum problems. We also show how our method can be generalised to a particular $p^{\text{th}}$-order system of nonlinear equations. 
These quantum subroutines for matrix geometric means are also useful in other areas of quantum information. For example, we show how to use them in the estimation of geometric R\'enyi relative entropies and the Uhlmann fidelity by means of the Fuchs--Caves observable.
In particular, our quantum algorithms for estimating the Uhlmann and Matsumoto fidelities have \textit{optimal} dependence on the precision.
Finally, we provide a $\mathsf{BQP}$-complete problem based on matrix geometric means that can be solved by our subroutines, thus characterising their computational capability. 

\end{abstract}

\maketitle 

\newpage

\tableofcontents 

\newpage

\section{Introduction}

Quantum computation is considered a rapidly emerging technology that has important implications for the development of algorithms. Many quantum algorithms that have theoretically demonstrated potential quantum advantage, however, have been chiefly directed towards linear problems -- in part because quantum mechanics is itself linear. These include simulating solutions of linear systems of equations~\cite{harrow2009quantum}, known as quantum linear algebra, and linear ordinary and partial differential equations~\cite{childs2021high, schr2, schr1, an2022theory}.

However, many problems of scientific interest are nonlinear. While most nonlinear systems of equations of interest for applications only appear after discretising nonlinear ordinary and partial differential equations, there is an important class of nonlinear system of equations that is not only relevant to partial differential equations but is also of independent interest. This class consists of the algebraic Riccati equations, which are nonlinear matrix equations with quadratic nonlinearity~\cite{lancaster1995algebraic}. These are also the stationary states of the Riccati matrix differential equations, which are essential for many applications in applied mathematics, science, and engineering problems.  These nonlinear matrix equations are particularly relevant for optimal control, stability theory, filtering (e.g., Kalman filter~\cite{salgado1988connection}), network theory, differential games, and estimation problems~\cite{coppel1974matrix}.

It turns out that solutions to the algebraic Riccati equations are closely connected with the concept of a matrix geometric mean. For example, the unique solution to the simplest algebraic Riccati equation can be precisely expressed as the standard matrix geometric mean, as we will recall later. The matrix geometric means are matrix generalisations of the scalar geometric mean and have a long history in mathematics~\cite{lawson2001,lawson2021expanding};  there are diverse approaches to this same concept. For example, the standard matrix geometric mean can be defined as the output of an optimisation problem. 
The matrix geometric mean between two matrices also has an elegant geometric interpretation as a midpoint along the geodesic joining these two matrices that live in Riemannian space~\cite{lancaster1995algebraic}. The Monge map between two Gaussian distributions, appearing in optimal transport, can also be expressed in terms of the matrix geometric mean~\cite{janati2020entropic}. The standard and weighted matrix geometric means appear in quantum information in the form of quantum entropic~\cite{Matsumoto2018,fang2021GeometricRenyiDivergence} and fidelity~\cite{Fuchs1995,Mat10,cree2020fidelity} measures. 

However, computing the matrix geometric mean involves matrix multiplication and also nonlinear operations like taking inverses and square roots of matrices. Here classical numerical schemes can be inefficient, with costs that are  polynomial in the size of the matrix~\cite{ramesh1989computational}. The processing of several matrix multiplications can, under certain conditions, be more efficient through quantum processing. Our aim here is to construct quantum subroutines that embed the standard and weighted matrix geometric means into unitary operators and to determine the conditions under which these embeddings can indeed be conducted efficiently. There are many such possible unitary operators, and we choose a formalism called block-encoding~\cite{GSLW19, CGJ19, low2019hamiltonian}.

The block-encoding of a non-unitary matrix $Y$ is a unitary matrix $U_Y$ whose upper left-hand corner is proportional to $Y$. The construction of this unitary matrix allows realisation by means of a quantum circuit, which describes unitary evolution. The matrix $Y$ can be subsequently recovered by extracting only the top-left corner through measurement. This provides a convenient building block for constructing sums and (integer and non-integer) powers of matrices $Y$ by concatenating its block-encodings via unitary circuits. This formalism allows us to form the block-encoding of the standard and weighted matrix geometric means,  which are products of matrices and their roots. 
From these block-encodings, one can also recover their expectation values with respect to certain states. These different expectation values are then relevant for various applications, like in machine learning and quantum fidelity estimation. 

Under certain assumptions, we show how these can be efficiently implementable on quantum devices. This efficiency arises from the fact that matrix multiplications can be more efficient with quantum algorithms. 
This observation has an important consequence. It means that a quantum device can efficiently prepare solutions of the (nonlinear) algebraic Riccati equations. The expectation values of these solutions can also be shown to be efficiently recoverable for different applications. Our approach differs from many past works in three key respects: (a) ours is the first quantum subroutine, to the best of our knowledge, to prepare solutions of nonlinear matrix equations without using iterative methods. The solutions themselves are matrices and not vectors, which differs from other quantum algorithms for nonlinear systems of equations, for example~\cite{jin2023time,joseph2020koopman,liu2021efficient}; (b) the solutions are not embedded in a pure quantum state, but rather an observable, thus introducing a novel embedding of the solution. This is important when solutions themselves are in matrix form (for matrix equations), which differs from the quantum embeddings of solutions of discretised nonlinear ordinary and partial differential equations (solutions not in matrix form)~\cite{jin2023time,joseph2020koopman,liu2021efficient}; (c) we show the efficient recovery of outputs for nonlinear systems of equations directly relevant for applications.

One class of applications is in the area of machine learning. Machine learning algorithms often require an assignment of a metric, or distance measure, in order to compute distances between data points. The values of these distances then become central to the outcome, for instance, in making a prediction for classification. This means that the choice of the metric itself is important, but the best metric can depend on the actual data. Learning the metric from given data -- called metric learning -- can also be formulated as a learning problem. While most of these metric learning algorithms require iterative techniques like gradient descent to minimise the proposed loss function, a class of metric learning algorithms called geometric mean metric learning~\cite{zadeh2016geometric} admits closed-form solutions. It has also been shown to attain higher classification accuracy with greater speed than previous methods. Here we devise efficient quantum algorithms, using our quantum subroutine for the matrix geometric mean, for geometric mean metric learning for both classical and quantum data. For quantum data, we propose new algorithms that can be used for the anomaly detection of quantum states, which differs from previous algorithms~\cite{liu2018quantum}. The applicability extends also to asymmetric cases for which there is a higher cost to be paid for false negatives or true positives. This is in fact related to the weighted matrix geometric mean. 

There is also an important connection between the solution of the geometric mean metric learning problem and the Fuchs--Caves observable~\cite{Fuchs1995}, which appears in quantum fidelity estimation. This allows for a re-derivation of quantum fidelity from the point of view of machine learning. We show that our quantum subroutines for the matrix geometric mean can also be used in the efficient estimation of geometric R\'enyi relative entropies and the quantum fidelity by means of the Fuchs--Caves observable. 
This new way of estimating the quantum fidelity has polynomially better performance in precision than previously known fidelity estimation algorithms. It is also shown to be optimal with respect to precision.

We can also extend our method to a more general class of nonlinear systems of equations of $p^{\text{th}}$-degree. These are $p^{\text{th}}$-degree polynomial generalisations of the simplest algebraic Riccati equations. 
We show that the unique solutions of these equations are  weighted matrix geometric means. We similarly devise quantum subroutines to prepare their block-encodings. The weighted matrix geometric mean for two quantum states has an elegant geometric interpretation as the state at $(1/p)^{\text{th}}$ of the length along the geodesic connecting two quantum states in Riemannian space. We also show these are relevant to the weighted version of our new quantum learning algorithm. Furthermore, preparing block-encodings of the weighted matrix geometric means allows us to construct, to the best of our knowledge, the first quantum algorithm for estimating the geometric R\'enyi relative entropies. 

\subsection{Summary of our results}

For convenience, we provide a brief summary of our results here.
Our first contribution consists of basic quantum subroutines in Section~\ref{sec:main} for matrix geometric means (see Definition~\ref{def:mgm}) and their weighted generalisation (see Definition~\ref{def:weightedgeometrimean}).

\textbf{Solving algebraic Riccati equations}.
We then consider the problem of solving the algebraic matrix Riccati equation
\begin{equation} \label{eq:Riccati}
    YAY - B^\dag Y - Y^\dag B - C = 0,
\end{equation}
where $A$, $B$, and $C$ are $d \times d$ complex-valued matrices.
We delineate quantum algorithms with time complexity $O\rbra*{\poly \log d}$ for solving Eq.~\eqref{eq:Riccati} for well-conditioned matrices, in Section~\ref{sec:B=0-Riccati} and Section~\ref{sec:B-neq-0-Riccati}.
Here, we say a matrix $A$ is well-conditioned if $A \geq I/(\poly \log d)$.
The higher-order case $Y\rbra*{AY}^{p-1} = C$ is studied in Section~\ref{sec:high-order-Riccati}. 
In Section~\ref{sec:BQP}, we show that it is $\mathsf{BQP}$-complete to solve the equation $YAY = C$, a special case of Eq.~\eqref{eq:Riccati}, in which case the solution is $Y = A^{-1}\#C$ (see Definition~\ref{def:mgm} for the meaning of this notation).

\textbf{Geometric mean metric learning}. We introduce quantum algorithms for learning the metric in machine learning, by phrasing this as an optimisation problem using a geometric perspective. Unlike other metric learning algorithms, this optimisation problem has a closed-form solution. This follows the geometric mean metric learning method~\cite{zadeh2016geometric}. The solution turns out to be expressible in terms of the matrix geometric mean $Y = A^{-1}\#C$. We design quantum algorithms for the learning task for classical data (Section~\ref{sec:learning-c}) as well as for quantum data (Section~\ref{sec:learning-q}). We present the conditions under which the quantum algorithm is more efficient than the corresponding classical algorithm. For example, the classical learning task with well-conditioned matrices $A$ and $C$ has time complexity $O(\poly (\log d, \log (1/\epsilon)))$. 
We also show that the quantum learning task with well-conditioned quantum states $\rho$ and $\sigma$ has time complexity $O(\poly (\log d, \log (1/\epsilon)))$.
The latter learning task for quantum data is uniquely quantum in nature and has no classical counterpart.

\textbf{(Uhlmann) fidelity estimation}. Based on the Fuchs--Caves observable~\cite{Fuchs1995}, we design a new quantum algorithm for fidelity estimation in Section~\ref{sec:fidelity} via the fidelity formula $F\rbra*{\rho, \sigma} = \Tr\rbra*{\rbra*{\sigma^{-1}\#\rho}\sigma}$, which involves the matrix geometric mean.
We show that our quantum algorithm has 
query complexity $\widetilde{O}\rbra{\kappa^4/\epsilon}$ provided that $\rho, \sigma \geq I/\kappa$ for some known $\kappa > 0$, and that the $\epsilon$-dependence is optimal up to polylogarithmic factors. 

\textbf{Geometric R\'enyi relative entropy}.
In Section~\ref{sec:renyi}, we present the first quantum algorithm for computing the geometric R\'enyi relative entropy, to the best of our knowledge.
In particular, we design a quantum algorithm for computing the geometric fidelity $\widehat{F}_{1/2}\rbra*{\rho, \sigma} \coloneqq  \Tr\rbra*{\rho \# \sigma}$ (also known as the Matsumoto fidelity \cite{Mat10,cree2020fidelity}) with 
query complexity $\widetilde{O}\rbra{\kappa^{3.5}/\epsilon}$ provided that $\rho, \sigma \geq I/\kappa$ for some known $\kappa > 0$,
and we prove that the $\epsilon$-dependence is optimal up to polylogarithmic factors.

\textbf{Organisation of this paper}.
In Section~\ref{sec:background}, we begin with a review of the standard matrix geometric mean, weighted matrix geometric mean, the algebraic Riccati equation, and block-encoding. 
In Section~\ref{sec:main} we compute the costs required to prepare block-encodings of the solutions of algebraic Riccati equations and their $p^{\text{th}}$-order generalisations. Applications are presented Section~\ref{sec:applications}. In Section~\ref{sec:BQP} we show how our new quantum subroutines for the matrix geometric mean can solve a $\mathsf{BQP}$-complete problem. We end in Section~\ref{sec:discussion} with discussions.

\section{Background} \label{sec:background}

In this section we give a brief overview of the standard and weighted matrix geometric means and their role in solving algebraic Riccati equations (see \cite[Chapters~4 \& 6]{bhatia2009positive} and \cite{lawson2001,lawson2021expanding} for more details). We then provide a definition of block-encoding. Throughout the paper, unless otherwise stated, we deal with Hermitian matrices. 

\subsection{Matrix geometric means} 
\label{sec:introgeometricmeans}

\begin{definition}[Matrix geometric mean] \label{def:mgm}
Fix $D \in \mathbb{N}$. Given two $D \times D$ positive definite matrices $A$ and $C$, the matrix geometric mean of $A$ and $C$ is defined as
\begin{align}
    \label{eq:geometricmeans}
    A \# C \coloneqq  A^{1/2}(A^{-1/2}CA^{-1/2})^{1/2}A^{1/2}>0.
\end{align}
Note that the matrix geometric mean between $A^{-1}$ and $C$ is thus defined by
\begin{align} \label{eq:inversegeometricmeans}
A^{-1} \# C =  A^{-1/2}(A^{1/2}CA^{1/2})^{1/2}A^{-1/2}>0.
\end{align}
Alternatively, the matrix geometric mean $A \# C$ can be equivalently be written as 
\begin{align}
    A \# C =\max \Biggl\{ Y \geq 0:  \begin{pmatrix} 
    A & Y \\
    Y & C
    \end{pmatrix} \geq 0 \Biggr\},
\end{align}
where the ordering of Hermitian matrices is given by the L\"owner partial order.
\end{definition}

The matrix geometric mean appears in quantum information, for example, like the Fuchs--Caves observable~\cite{Fuchs1995}, in quantum fidelity and entropy operators like the Tsallis relative operator entropy~\cite{furuichi2004fundamental}, and quantum fidelity measures  between states~\cite{Mat10,Matsumoto2018,cree2020fidelity} and channels~\cite{katariya2021geometric}. This concept can also be generalised to the weighted matrix geometric mean.

\begin{definition}[Weighted matrix geometric mean] \label{def:weightedgeometrimean}
Fix $p>0$. The weighted matrix geometric mean 
with weight~$1/p$ is defined as 
\begin{align}
    A \#_{1/p} \, C \coloneqq  A^{1/2}(A^{-1/2}CA^{-1/2})^{1/p}A^{1/2}.
\end{align}
The weighted matrix geometric mean between $A^{-1}$ and $C$ is then equal to
\begin{align} \label{eq:weightedinversegeometricmeans}
    A^{-1} \#_{1/p} \, C =  A^{-1/2}(A^{1/2}CA^{1/2})^{1/p}A^{-1/2}.
\end{align}
\end{definition}

The canonical matrix geometric mean corresponds to the weighted geometric mean with weight $1/p=1/2$.

We will use the definitions in Eqs.~\eqref{eq:inversegeometricmeans} and~\eqref{eq:weightedinversegeometricmeans} here and throughout because, as we will see later on, they are relevant to solutions of classes of nonlinear matrix equations like the algebraic Riccati equations. \\

For positive definite matrices (which include full-rank density matrices), the standard and weighted matrix geometric means have elegant geometric interpretations. 
It is known that the inner product on the real vector space formed by the set of Hermitian matrices gives rise to a Riemannian metric~\cite[Chapter~6]{bhatia2009positive}. This Riemannian metric is defined on the manifold $M_H$ formed by the set of positive definite matrices. Following~\cite[Eqs.~(6.2) \& (6.4)]{bhatia2009positive}, a trajectory $\gamma: [a, b] \rightarrow M_H$ on this manifold is a piecewise differential path on $M_H$ whose length is defined by $L(\gamma)\coloneqq \int_a^b \left\|\gamma^{-1/2}(t)\gamma'(t)\gamma^{-1/2}(t)\right\|_2\, dt$. Then the distance $\delta(A^{-1}, C)=\inf_{\gamma} L(\gamma)$ between any two positive definite  matrices $A^{-1}$ and $C$ on this manifold is defined to be shortest length joining these two points. Then we have the following result.

\begin{lemma} [{\cite[Theorem 6.1.6]{bhatia2009positive}}]
If $A^{-1}$ and $C$ are  two positive definite matrices, then there exists a unique geodesic joining $A^{-1}$ and $C$. This geodesic has the following parameterisation with $t \in [0,1]$:
\begin{align}
    \gamma_{\operatorname{geod}}(t)=A^{-1/2}(A^{1/2}CA^{1/2})^{1/t}A^{-1/2}, \qquad t \in [0,1].
\end{align}
This geodesic has length given by
\begin{align} \label{eq:geodesicdelta}
    \delta(A^{-1}, C)=L(\gamma_{\operatorname{geod}})=\left\|\log (A^{1/2} C A^{1/2}) \right\|_2.
\end{align}
\end{lemma}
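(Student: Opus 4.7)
The plan is to prove this by exploiting the congruence-invariance of the Riemannian metric on the manifold of positive definite matrices, thereby reducing the problem to identifying the unique geodesic from the identity $I$ to a single auxiliary matrix $X \coloneqq A^{1/2}CA^{1/2}$, after which an explicit computation via the matrix exponential delivers the stated formula.

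First I would verify that for any invertible $P$ the congruence $\Phi_P : M \mapsto P M P^\dag$ is an isometry of the manifold. Given a tangent vector $H$ at $M$, the curve $t \mapsto M + tH$ is mapped to $t \mapsto \Phi_P(M) + t PHP^\dag$, and a short calculation using cyclicity of the trace shows
\begin{equation*}
\big\|\Phi_P(M)^{-1/2}(PHP^\dag)\Phi_P(M)^{-1/2}\big\|_2 = \big\|M^{-1/2} H M^{-1/2}\big\|_2,
\end{equation*}
so the Finsler/Riemannian speed is preserved. Applying this with $P = A^{1/2}$ sends $A^{-1} \mapsto I$ and $C \mapsto X$; hence lengths and geodesics between $A^{-1}$ and $C$ correspond bijectively to those between $I$ and $X$, and it suffices to solve the reduced problem.

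Next I would propose the candidate $\psi(t) = X^t = \exp(t \log X)$ for $t \in [0,1]$, which satisfies $\psi(0)=I$, $\psi(1)=X$. Because $\psi(t)$ commutes with $\log X$, one has $\psi'(t) = (\log X)\,\psi(t)$, whence
\begin{equation*}
\psi(t)^{-1/2}\,\psi'(t)\,\psi(t)^{-1/2} = X^{-t/2}(\log X) X^{t}\, X^{-t/2} = \log X,
\end{equation*}
so the Riemannian speed is the constant $\|\log X\|_2$ and the length of $\psi$ equals $\|\log(A^{1/2}CA^{1/2})\|_2$. Pulling back through $\Phi_{A^{-1/2}}$ reproduces $\gamma_{\operatorname{geod}}(t) = A^{-1/2} X^{t} A^{-1/2}$ (with the exponent being $t$, as in the standard convention for weighted matrix means), together with the claimed length formula in Eq.~\eqref{eq:geodesicdelta}.

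The main obstacle is to establish \emph{minimality} and \emph{uniqueness}, i.e.\ that no other path from $I$ to $X$ is shorter and that $\psi$ is the only unit-speed geodesic between them. My plan here is to invoke the fact that $(M_H^+,\,g)$ is a simply connected Riemannian manifold of nonpositive sectional curvature, so by the Cartan--Hadamard theorem the exponential map $\exp_I : H \mapsto e^H$ from the tangent space of Hermitian matrices at $I$ is a diffeomorphism onto $M_H^+$, and distinct points are connected by a unique length-minimising geodesic. A more self-contained argument can be given by diagonalising $X = U \operatorname{diag}(\lambda_i) U^\dag$ and combining the pointwise inequality $\|\alpha^{-1/2}\alpha'\alpha^{-1/2}\|_2 \geq \|\tfrac{d}{dt}\log \alpha_{\operatorname{diag}}(t)\|_2$ (valid for commuting diagonalised curves) with a convexity argument forcing equality only for the radial path $\psi(t)=X^t$. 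Transporting uniqueness back through the isometry $\Phi_{A^{-1/2}}$ then completes the proof for $(A^{-1}, C)$.
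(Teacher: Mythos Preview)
The paper does not supply a proof of this lemma; it is quoted verbatim as background from Bhatia's monograph \cite[Theorem~6.1.6]{bhatia2009positive}. Your argument---reducing to the base point $I$ via the congruence isometry $\Phi_{A^{1/2}}$, exhibiting the one-parameter group $t\mapsto X^t$ as a constant-speed path, and then appealing to nonpositive curvature (Cartan--Hadamard) for uniqueness and minimality---is essentially the route Bhatia takes, so there is no methodological divergence to report.

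Two comments. First, you are right that the exponent in the displayed parametrisation should be $t$, not $1/t$: with $1/t$ the curve does not even hit $A^{-1}$ at $t=0$, whereas $\gamma(t)=A^{-1/2}(A^{1/2}CA^{1/2})^{t}A^{-1/2}$ gives $\gamma(0)=A^{-1}$ and $\gamma(1)=C$ as required; the paper's own Definition~\ref{def:weightedgeometrimean} and the discussion following the lemma (``the weighted geometric mean with weight $1/p$ can be interpreted as the point along the manifold when $t=1/p$'') confirm this is a typo. Second, your ``more self-contained'' alternative for minimality is too sketchy to stand on its own---the inequality you write down is not obviously true for non-commuting paths---so if you want to avoid invoking Cartan--Hadamard you should instead follow Bhatia's direct argument via the exponential metric increasing property, which gives the length lower bound without curvature machinery.
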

In the above, $\|X \|_2 \coloneqq \sqrt{\operatorname{Tr}[X^\dag X]}$ denotes the Schatten $2$-norm, whereas  $\| \cdot \|$ refers to operator norm throughout our paper. 

From this viewpoint,  the matrix geometric mean $A^{-1} \# C=\gamma_{\operatorname{geod}}(t=1/2)$ can clearly be interpreted as the midpoint along the geodesic joining $A^{-1}$ and $C$. Similarly, the weighted geometric mean with weight $1/p$ can be interpreted as the point along the manifold when~$t=1/p$.

\subsection{Algebraic Riccati equations}

Let us begin with a general form of the algebraic Riccati equation for the unknown $D \times D$ matrix $Y$:
\begin{align} \label{eq:yay1}
    Y^\dag A Y - B^{\dagger} Y-Y^{\dagger}B-C=0,
\end{align}
where $A$, $B$, and  $C$ are $D \times D$ matrices with complex-valued entries. This can be understood as a matrix version of the famous (scalar) quadratic equation $ay^2 - 2by - c = 0$. Solutions of equations like \eqref{eq:yay1} are not always guaranteed to exist, and certain conditions are required to prove the existence of, for instance, Hermitian solutions~\cite{ran1988existence}. See~\cite{shurbet1974quadratic} for conditions on solvability. Even if existence can be shown, the solutions may not be unique or could alternatively be uncountably many~\cite{richardson1986positive, lancaster1980existence, wimmer1984algebraic}. However, there are unique solutions under certain conditions. For instance, if all the matrix entries are real-valued, then for symmetric positive semidefinite $A, C$ and symmetric positive $Y$, there is a unique positive definite solution if and only if an associated matrix $H=\begin{pmatrix} -B & A \\
C & B^T \end{pmatrix}$ has no imaginary eigenvalues~\cite{boyd1991linear}. 

In this paper, we confine our attention to simpler cases, for example in Lemmas~\ref{lem:yay1} and \ref{lem:yay2}, when there are unique solutions.
\begin{lemma}[Solution of simple algebraic Riccati equation]
\label{lem:yay1}
    Consider the following algebraic Riccati equation when $A$ and $C$ are positive definite matrices and $Y$ is Hermitian:
\begin{align} \label{eq:yayc}
    YAY=C.
\end{align}
This equation has a unique positive definite solution  given by the standard matrix geometric mean:
\begin{align} \label{eq:yaysimplesol}
    Y= A^{-1} \# C \coloneqq  A^{-1/2}(A^{1/2}CA^{1/2})^{1/2}A^{-1/2} >0. 
\end{align}
\end{lemma}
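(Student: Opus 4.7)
The plan is to prove existence by direct substitution and uniqueness by reducing the matrix quadratic to the standard fact that a positive definite matrix has a unique positive definite square root.

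For existence, I would substitute $Y = A^{-1/2}(A^{1/2} C A^{1/2})^{1/2} A^{-1/2}$ into $YAY$ and watch the two inner $A^{-1/2}\cdot A\cdot A^{-1/2}$ collapse to the identity, leaving $A^{-1/2}(A^{1/2}CA^{1/2})^{1/2}(A^{1/2}CA^{1/2})^{1/2}A^{-1/2} = A^{-1/2}(A^{1/2}CA^{1/2})A^{-1/2} = C$. Positive definiteness of $Y$ follows because $A^{1/2}CA^{1/2} > 0$ (as $A,C>0$), so its principal square root is positive definite, and conjugation by the invertible Hermitian matrix $A^{-1/2}$ preserves positive definiteness.

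For uniqueness, the key trick is a congruence change of variables: set $Z \coloneqq A^{1/2} Y A^{1/2}$. If $Y > 0$, then $Z > 0$, and squaring gives
\begin{equation*}
Z^2 = A^{1/2} Y A^{1/2} \cdot A^{1/2} Y A^{1/2} = A^{1/2}\, (YAY)\, A^{1/2} = A^{1/2} C A^{1/2}.
\end{equation*}
Thus $Z$ is a positive definite square root of the positive definite matrix $A^{1/2}CA^{1/2}$. By the standard uniqueness of the positive definite square root of a positive definite matrix (via the spectral theorem), $Z$ must equal $(A^{1/2}CA^{1/2})^{1/2}$. Inverting the change of variables yields $Y = A^{-1/2} Z A^{-1/2} = A^{-1/2}(A^{1/2}CA^{1/2})^{1/2}A^{-1/2}$, which is exactly the claimed formula.

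There is no serious obstacle here; the only point requiring mild care is ensuring at each step that the matrices whose inverses and fractional powers are taken are genuinely positive definite, so that $A^{1/2}$, $A^{-1/2}$, and $(A^{1/2}CA^{1/2})^{1/2}$ are unambiguously defined. This is guaranteed by the assumption $A, C > 0$. If desired, one could alternatively verify uniqueness by observing that the map $Y \mapsto YAY$ is injective on the cone of positive definite matrices (since the congruence $Y \mapsto A^{1/2}YA^{1/2}$ is a bijection of this cone and squaring is injective on positive definite matrices), which is essentially the same argument repackaged.
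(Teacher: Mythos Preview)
Your proposal is correct and matches the paper's own argument essentially line for line: both conjugate by $A^{1/2}$ to reduce $YAY=C$ to $(A^{1/2}YA^{1/2})^2 = A^{1/2}CA^{1/2}$ and then invoke uniqueness of the positive definite square root. The paper phrases this as a chain of equivalences (so existence and uniqueness come out together), whereas you separate existence by direct substitution from uniqueness via the change of variables, but the underlying idea is identical.
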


\begin{proof}
This lemma is well known from~\cite{kubo1980means, nakamura2007geometric}, but we provide a brief proof for completeness. Starting from the Riccati equation in \eqref{eq:yayc} and by using the fact that $A$ is positive definite with a unique square root, consider that
\begin{align}
    YAY=C  \quad & \Leftrightarrow \quad  YA^{1/2}A^{1/2} Y=C \\
    &  \Leftrightarrow \quad  A^{1/2}YA^{1/2}A^{1/2} YA^{1/2}=A^{1/2}CA^{1/2} \\
    &  \Leftrightarrow \quad  (A^{1/2}YA^{1/2})^2 = A^{1/2}CA^{1/2} .
\end{align}
Since the matrix $A^{1/2}CA^{1/2} $ is positive definite and the equality in the last line above has been shown, both $A^{1/2}CA^{1/2}$ and $(A^{1/2}YA^{1/2})^2$ have a unique positive definite square root, implying that
\begin{align}
A^{1/2}YA^{1/2} = (A^{1/2}CA^{1/2})^{1/2} \quad & \Leftrightarrow \quad Y = A^{-1/2}(A^{1/2}CA^{1/2})^{1/2}A^{-1/2},
\end{align}
thus justifying that $Y= A^{-1} \# C$ is the unique positive definite solution as claimed.
\end{proof}

See \cite{pedersen1972operator} for a discussion of \eqref{eq:yayc} in the infinite-dimensional case.

If $A$ and $C$ are both  positive definite with unit trace $\operatorname{Tr}(A)=1=\operatorname{Tr}(C)$, then $A$ and $C$ can also be interpreted as density matrices. Then the operator $A^{-1} \# C$ is also known as the Fuchs--Caves observable~\cite{wilde2017quantum}, which is of relevance in the study of quantum fidelity. We will return to this point later.  See also \cite[Section~V]{alsing2024geodesics} for an interpretation of \eqref{eq:yayc} when $A$ and $C$ are density matrices.

We can also extend Lemma~\ref{lem:yay1} to the $B \neq 0$ case, and the following holds.

\begin{lemma} \label{lem:yay2} 
If $A$ and $C$ are positive definite, $B$ is an arbitrary matrix, and  $(A^{-1}B)=(A^{-1}B)^{\dagger}$, then a Hermitian solution to Eq.~\eqref{eq:yay1} can be expressed as
\begin{align} \label{eq:yay1sol}
    Y= A^{-1} \# (B^{\dagger} A^{-1} B+C)  +A^{-1} B.
\end{align}
\end{lemma}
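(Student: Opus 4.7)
The plan is to reduce the general Riccati equation in Eq.~\eqref{eq:yay1} to the simpler form treated in Lemma~\ref{lem:yay1} by the matrix analogue of completing the square. Concretely, I would introduce the shifted variable
\begin{equation}
    Z \coloneqq Y - A^{-1}B,
\end{equation}
and aim to show that Eq.~\eqref{eq:yay1} is equivalent to $Z^\dagger A Z = B^\dagger A^{-1}B + C$.

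The key step is a direct expansion. Using that $A$ (hence $A^{-1}$) is Hermitian so that $(A^{-1}B)^\dagger = B^\dagger A^{-1}$, one computes
\begin{align}
    Z^\dagger A Z
    &= (Y^\dagger - B^\dagger A^{-1}) A (Y - A^{-1}B) \\
    &= Y^\dagger A Y - Y^\dagger B - B^\dagger Y + B^\dagger A^{-1}B.
\end{align}
Substituting this into Eq.~\eqref{eq:yay1} turns it into $Z^\dagger A Z = B^\dagger A^{-1}B + C$. The hypothesis $(A^{-1}B) = (A^{-1}B)^\dagger$ is exactly what ensures that $Z$ inherits Hermiticity from $Y$, so we are entitled to apply Lemma~\ref{lem:yay1} to the variable $Z$.

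To close the argument, I would note that $B^\dagger A^{-1}B + C$ is positive definite (since $A^{-1} > 0$ gives $B^\dagger A^{-1}B \geq 0$, and $C > 0$ by hypothesis), so Lemma~\ref{lem:yay1} supplies the positive definite solution $Z = A^{-1}\#(B^\dagger A^{-1}B + C)$. Undoing the shift gives exactly the claimed formula in Eq.~\eqref{eq:yay1sol}, and Hermiticity of $Y$ follows because the geometric mean term is positive definite and $A^{-1}B$ is Hermitian by assumption. I do not anticipate a real obstacle here: the only subtlety is recognising that the side condition $(A^{-1}B) = (A^{-1}B)^\dagger$ is precisely what makes the completing-the-square trick preserve Hermiticity, and that Lemma~\ref{lem:yay1} does the heavy lifting on the quadratic part.
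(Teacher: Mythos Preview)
Your proposal is correct and follows essentially the same completing-the-square approach as the paper's proof: shift to $Z=Y-A^{-1}B$, expand $Z^\dagger A Z$, reduce to the simple Riccati equation $ZAZ = B^\dagger A^{-1}B + C$, and invoke Lemma~\ref{lem:yay1}. You even make explicit a point the paper leaves implicit, namely that the hypothesis $(A^{-1}B)=(A^{-1}B)^\dagger$ is what guarantees $Z$ is Hermitian so that Lemma~\ref{lem:yay1} (which is stated for $ZAZ=C$, not $Z^\dagger AZ=C$) actually applies.
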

\begin{proof}
    See Appendix~\ref{app:proof-yay2}. 
\end{proof}

Classical algorithms for solving algebraic Riccati equations are typically inefficient~\cite{ramesh1989computational} with respect to the size of the problem, i.e., polynomial in $D$. We will be looking at conditions for which a quantum algorithm for solving algebraic Riccati equations can be executed with less complexity.

\subsection{Block-encoding}

Classical information can be embedded in quantum systems in the form of quantum states, either pure or mixed, or in the form of quantum processes. A closed quantum system evolves under a unitary transformation,  represented by a unitary matrix. In this paper, we will be focusing on how a matrix solution to a matrix equation can be embedded in a unitary matrix. Unlike other quantum subroutines that prepare solutions of a linear system of equations embedded in the amplitudes of a pure quantum state, here we first embed the solution $Y$ into a unitary matrix. 

There are different ways of embedding an arbitrary matrix into a unitary matrix. For instance, it is guaranteed by the Sz.~Nagy dilation theorem (see, e.g., \cite[Theorem~1.1]{paulsen2002completely}) that such a unitary matrix should always exist. We choose a flexible dilation known as block-encoding~\cite{GSLW19, CGJ19, low2019hamiltonian}. A unitary matrix $U_Y$ is called a block-encoding of a matrix $Y$ if it satisfies the following definition.

\begin{definition}[Block-encoding]
        \label{def:blk-enc}
	Fix $n,a\in\mathbb{N}$ and $\epsilon, \alpha \geq 0$. Let $Y$ be an $n$-qubit operator.
	An $(n+a)$-qubit unitary $U_Y$ is an $(\alpha,a,\epsilon)$-block-encoding of an operator $Y$ if
	\begin{equation}
		\left\|Y-\alpha \bra{0}_a U_Y\ket{0}_a\right\| \leq \epsilon.
	\end{equation}
\end{definition}

Here $|0\rangle_a$ are the $|0\rangle$ states in the computational basis of the $a$-ancilla qubits.  
The block-encoding formalism allows one to construct, for example, block-encodings of sums of matrices, linear combinations of block-encoded matrices, and polynomial approximations of negative and positive power functions of matrices \cite{GSLW19}. We list several associated lemmas in Appendix~\ref{app:preliminaries} for convenience. 

\section{Quantum subroutines for matrix geometric means, algebraic Riccati equations, and higher-order nonlinear equations}

\label{sec:main}

Let us focus on cases where the solutions to the algebraic Riccati equations can be captured by the matrix geometric mean in Lemmas~\ref{lem:yay1} and~\ref{lem:yay2}. The computation of the matrix geometric mean involves the computation of the square roots of matrices and several matrix multiplications. For $D \times D$ matrices, typically these costs will scale polynomially with~$D$ for a classical algorithm. However, quantum algorithms for matrix multiplications of block-encoded matrices can be performed more efficiently when compared to the number of classical numerical steps. These series of block-encoded matrix multiplications can be achieved in the quantum case via the block-encoding formalism. 

Let us begin with the algebraic Riccati equation in Eq.~\eqref{eq:yay1}:
\begin{align} 
    Y A Y - B^{\dagger} Y-Y^{\dagger}B-C=0. 
\end{align}
It is our goal below  first to construct a block-encoding of the solution $Y$, denoted $U_Y$, under the conditions obeyed in Lemmas~\ref{lem:yay1} and~\ref{lem:yay2}. This we consider as a subroutine that we can then employ in various applications. 

Below we assume that we also have access to the block-encodings of $A$, $B$, $C$ -- denoted $U_A$, $U_B$, $U_C$, respectively -- as well as their inverses $U_A^\dag$, $U_B^\dag$, $U_C^\dag$ and controlled versions. For example, if $A$, $B$, and $ C$ are positive semi-definite matrices with unit trace, these can be considered as density matrices. Then from Lemma~\ref{lmm:purified to block-encoding} we can prepare block-encodings $U_{A}$, $U_B$, $U_{C}$ by access to the unitaries that prepare purifications of $A$, $B$, and $C$, with only a single query to each purification and $O(\log d)$ gates. In more general scenarios, we can leave the preparation of these block-encodings to a later stage, which also depends on the particular application. 
Below, $\kappa_A$ and $\kappa_C$ denote the condition numbers for $A$ and $C$, respectively. 
It is important to clarify that, as assumed in \cite{harrow2009quantum}, all of our quantum algorithms for matrix geometric means assume that
\begin{align}
    I & \geq A \geq I/\kappa_A,\\
    I & \geq C \geq I / \kappa_C, \\
    I & \geq B^\dag B.
\end{align}
This means that $\kappa_A$ and $\kappa_C$ are really equal to the inverses of the minimum eigenvalues of $A$ and $C$, respectively, and $\left \|A\right\|, \left \|B\right\|, \left \|C\right\| \leq 1$. The upper bounds above are automatically satisfied whenever $A$ and $C$ are density matrices.

\subsection{Quantum subroutine for matrix geometric means}

As a warm-up, we present a quantum subroutine for implementing block-encodings of the weighted matrix geometric means.

\begin{lemma}[Block-encoding of weighted matrix geometric mean]
\label{lem:mgm-1}
    Suppose that $U_A,U_C$ are $(1,a,0)$-block-encodings of matrices $A,C$, respectively,
    where $A\geq I/\kappa_A$, $C\geq I/\kappa_C$ and $I$ is the identity matrix.
    For $\epsilon\in (0,1/2)$,
    one can implement a $(2\kappa_A^{1/p}\gamma_p,5a+12,\epsilon)$-block-encoding of $Y$
    for every fixed real $p\neq 0$,
    where 
    \begin{equation}
        \gamma_p=\begin{cases}
            1 & p>0,\\
            \kappa_A^{-1/p}\kappa_C^{-1/p}& p<0, 
        \end{cases}
    \end{equation}
    and
    \begin{equation}
        Y=A\#_{1/p} C=A^{1/2}\rbra*{A^{-1/2}CA^{-1/2}}^{1/p}A^{1/2},
    \end{equation}
    using
    \begin{itemize}
        \item 
        $\widetilde{O}\rbra*{\kappa_A\kappa_C\log^3\rbra*{1/\epsilon}}$ queries to $U_C$,
        $\widetilde{O}\rbra*{\kappa_A^2\kappa_C\log^4\rbra*{1/\epsilon}}$ queries to $U_A$;
        \item
        $\widetilde{O}\rbra*{a\kappa_A^2\kappa_C\log^4\rbra*{1/\epsilon}}$ gates; and
        \item 
        $\poly\rbra*{\kappa_A,\kappa_C,\log\rbra*{1/\epsilon}}$ classical time.
    \end{itemize}
\end{lemma}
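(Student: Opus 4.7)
My plan is to assemble the block-encoding of $Y = A^{1/2} M A^{1/2}$ with $M \coloneqq (A^{-1/2}CA^{-1/2})^{1/p}$ by composing the standard block-encoding primitives of~\cite{GSLW19} (product, linear combination, and QSVT; see Appendix~\ref{app:preliminaries}). The key spectral fact is that the assumptions $I/\kappa_A \leq A \leq I$ and $I/\kappa_C \leq C \leq I$ imply $\mathrm{spec}(A^{-1/2}CA^{-1/2}) \subseteq [1/\kappa_C,\kappa_A]$, and this interval controls both the subnormalization and the polynomial degree at every stage.

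First, I would apply QSVT to $U_A$ with polynomial approximations of $\sqrt{x}$ and $1/\sqrt{x}$ on $[1/\kappa_A,1]$ to produce an $(O(1),a+O(1),\epsilon')$-block-encoding of $A^{1/2}$ and an $(O(\sqrt{\kappa_A}),a+O(1),\epsilon')$-block-encoding of $A^{-1/2}$, each using $\widetilde{O}(\kappa_A\log(1/\epsilon'))$ queries to $U_A$. Concatenating $U_{A^{-1/2}},U_C,U_{A^{-1/2}}$ by the product rule yields a block-encoding of $A^{-1/2}CA^{-1/2}$ with subnormalization $\alpha = O(\kappa_A)$; after rescaling by $\alpha$, the singular values lie in $[\Omega(1/(\kappa_A\kappa_C)),O(1)]$. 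A second QSVT then implements the function $\widetilde{x}\mapsto (\alpha\widetilde{x})^{1/p}/(\kappa_A^{1/p}\gamma_p)$ on this rescaled interval using the positive/negative power approximations of Appendix~\ref{app:preliminaries} with degree $\widetilde{O}(\kappa_A\kappa_C\log(1/\epsilon))$, producing a $(\kappa_A^{1/p}\gamma_p,O(a),\epsilon/4)$-block-encoding of $M$. Sandwiching this with the two $A^{1/2}$ block-encodings gives the advertised $(\kappa_A^{1/p}\gamma_p,5a+20,\epsilon)$-block-encoding of $Y$. Tracing dependencies, the outer QSVT issues $\widetilde{O}(\kappa_A\kappa_C\log(1/\epsilon))$ calls to the middle block-encoding, each invoking $U_C$ once and $U_{A^{-1/2}}$ twice, and each $U_{A^{-1/2}}$ consumes $\widetilde{O}(\kappa_A\log(1/\epsilon))$ queries to $U_A$; this yields the stated $\widetilde{O}(\kappa_A\kappa_C\log(1/\epsilon))$ queries to $U_C$ and $\widetilde{O}(\kappa_A^2\kappa_C\log^2(1/\epsilon))$ queries to $U_A$.

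The main technical obstacle is error and normalization bookkeeping. The intermediate precision $\epsilon'$ must be tuned so that error propagation through the three block-encoding products and the outer QSVT leaves the final error below $\epsilon$, a calculation that interacts multiplicatively with the subnormalization $\kappa_A^{1/p}\gamma_p$. The case split defining $\gamma_p$ is itself a direct consequence of this bookkeeping: $\kappa_A^{1/p}\gamma_p$ must equal the sup-norm of $x\mapsto x^{1/p}$ on $[1/\kappa_C,\kappa_A]$, which is attained at the right endpoint when $p>0$ (giving $\kappa_A^{1/p}$, i.e.\ $\gamma_p=1$) and at the left endpoint when $p<0$ (giving $\kappa_C^{-1/p} = \kappa_A^{1/p}\cdot\kappa_A^{-1/p}\kappa_C^{-1/p}$, i.e.\ $\gamma_p=\kappa_A^{-1/p}\kappa_C^{-1/p}$). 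Once these two bookkeeping points are settled, the ancilla count $5a+20$, the gate count, and the $\poly$ classical time follow from standard QSVT cost accounting.
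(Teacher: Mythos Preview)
Your proposal is correct and follows essentially the same route as the paper: build $A^{\pm 1/2}$ via QSVT with the positive/negative power approximations, form $\kappa_A^{-1}A^{-1/2}CA^{-1/2}$ by the product rule, apply a second QSVT for the $1/p$-th power on the interval $[\Theta(1/(\kappa_A\kappa_C)),1]$, and sandwich by $A^{1/2}$; the paper merely packages the middle step as a separate lemma and inserts constant-factor oblivious amplitude amplifications after each QSVT, but the spectral bounds, the origin of $\gamma_p$, and the query accounting you give all coincide with theirs.
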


\begin{remark}
In the above and in what follows, `queries to $U$' refers to access not only to $U$, but also to its inverse $U^{\dagger}$, controlled-$U$, and controlled-$U^\dag$. Here and in the following, $\widetilde{O}(\cdot)$ suppresses logarithmic factors of functions appearing in $(\cdot)$.
The same convention applies to $\widetilde{\Omega}(\cdot)$ and $\widetilde{\Theta}(\cdot)$.
\end{remark}

\begin{proof}[Proof sketch of Lemma~\ref{lem:mgm-1}]
    See Appendix~\ref{app:warmup-mgm} for a detailed proof. 
    As an illustration for the construction of our quantum subroutines, we outline the basic idea. Other quantum subroutines later presented in this section are obtained using similar ideas. 
    Our approach consists of three main steps:
    \begin{enumerate}
        \item Implement a block-encoding of $A^{-1/2}$, using roughly $\widetilde O\rbra*{\kappa_A}$ queries to a block-encoding of $A$ (for simplicity, we ignore the $\epsilon$-dependence in our brief explanation here). This is done by applying quantum singular value transformation (QSVT) \cite{GSLW19} with polynomial approximations of negative power functions (see Lemma~\ref{lmm:poly negative power}).
        \item Implement a block-encoding of $\rbra*{A^{-1/2} C A^{-1/2}}^{1/p}$, using roughly $\widetilde O\rbra*{\kappa_A \kappa_C}$ queries to a block-encoding of $A^{-1/2} C A^{-1/2}$.
        This is done by applying QSVT with polynomial approximations of positive power functions (see Lemma~\ref{lmm:poly positive power}). 
        Note that a block-encoding of $A^{-1/2} C A^{-1/2}$ can be implemented using $O\rbra*{1}$ queries to block-encodings of $A^{-1/2}$ 
        and $C$ by the method for realising the product of block-encoded matrices (see Lemma~\ref{lmm:product of block-encoding}).
        \item Similar to Step 2, implement a block-encoding of $A^{1/2} \rbra*{A^{-1/2} C A^{-1/2}}^{1/p} A^{1/2}$, using $O\rbra*{1}$ queries to block-encodings of $A^{1/2}$ and $\rbra*{A^{-1/2} C A^{-1/2}}^{1/p}$, where a block-encoding of $A^{1/2}$ can be implemented using $\widetilde O\rbra*{\kappa_A}$ queries to a block-encoding of $A$.
    \end{enumerate}

    To conclude, the overall query complexity is roughly $\widetilde O\rbra*{\kappa_A} \cdot \widetilde O\rbra*{\kappa_A \kappa_C} + \widetilde O\rbra*{\kappa_A} = \widetilde O\rbra*{\kappa_A^2 \kappa_C}$.
    Note that the construction is mainly based on QSVT and thus is also time efficient. 
    So the overall time complexity is equal to the query complexity only up to polylogarithmic factors.
\end{proof}

\subsection{\texorpdfstring{$B=0$}{B = 0} algebraic Riccati equation} \label{sec:B=0-Riccati}

Let us begin with the unique positive definite solution to the algebraic Riccati equation with $B=0$, i.e., Eq.~\eqref{eq:yay1}, which can be expressed as the matrix geometric mean $Y=A^{-1} \# C$, according to Lemma~\ref{lem:yay1}, where $A$ and $C$ are positive definite matrices. Then we have the following lemma, which characterises a block-encoding of the solution in a quantum circuit.

\begin{lemma}
\label{lem:simplericcati}
    Suppose that $U_A,U_C$ are $(1,a,0)$-block-encodings of matrices $A,C$, respectively,
    with $A\geq I/\kappa_A$ and $C\geq I/\kappa_C$.
    For $\epsilon\in (0,1/2)$,
    one can implement a $(2\kappa_A,5a+11,\epsilon)$-block-encoding of $Y$,
    where
    \begin{equation}
        Y=A^{-1}\# C=A^{-1/2}\rbra*{A^{1/2}CA^{1/2}}^{1/2}A^{-1/2},
    \end{equation}
    using
    \begin{itemize}
        \item 
        $\widetilde{O}\rbra*{\kappa_A\kappa_C\log^2\rbra*{1/\epsilon}}$ queries to $U_C$ and
        $\widetilde{O}\rbra*{\kappa_A^2\kappa_C\log^3\rbra*{1/\epsilon}}$ queries to $U_A$;
        \item
        $\widetilde{O}\rbra*{a\kappa_A^2\kappa_C\log^3\rbra*{1/\epsilon}}$ gates; and
        \item 
        $\poly\rbra*{\kappa_A,\kappa_C,\log\rbra*{1/\epsilon}}$ classical time.
    \end{itemize}
\end{lemma}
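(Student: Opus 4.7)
The plan is to mirror the three-stage block-encoding construction used for Lemma~\ref{lem:mgm-1}, specialised to the specific sandwich form $Y = A^{-1/2}(A^{1/2} C A^{1/2})^{1/2} A^{-1/2}$ that appears in Lemma~\ref{lem:yay1}. All the ingredients are already in place from the preliminaries: quantum singular value transformation with polynomial approximations of $x^{1/2}$ and $x^{-1/2}$ (Lemmas~\ref{lmm:poly positive power} and~\ref{lmm:poly negative power}) together with the product rule for block-encoded operators (Lemma~\ref{lmm:product of block-encoding}). The hypotheses $I/\kappa_A \leq A \leq I$ and $I/\kappa_C \leq C \leq I$ guarantee that each intermediate matrix lies in a spectral interval on which the relevant polynomial approximations are accurate.

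First I would apply QSVT to $U_A$ with a polynomial approximation of $x^{1/2}$ on $[1/\kappa_A, 1]$, producing a $(1, a + O(1), \epsilon')$-block-encoding of $A^{1/2}$ at cost $\widetilde O(\kappa_A \log(1/\epsilon'))$ queries to $U_A$. Composing with $U_C$ through the product rule yields a block-encoding of $M := A^{1/2} C A^{1/2}$, whose spectrum lies in $[1/(\kappa_A \kappa_C), 1]$ since $M \geq (1/\kappa_C) A \geq I/(\kappa_A \kappa_C)$ and $\|M\| \leq 1$. A second QSVT pass applies a polynomial approximation of $x^{1/2}$ on that interval to produce a block-encoding of $M^{1/2}$ using $\widetilde O(\kappa_A \kappa_C \log(1/\epsilon'))$ queries to the block-encoding of $M$; each such query expands into one query to $U_C$ and $O(1)$ queries to the block-encoding of $A^{1/2}$, giving $\widetilde O(\kappa_A \kappa_C)$ queries to $U_C$ and $\widetilde O(\kappa_A^2 \kappa_C)$ queries to $U_A$ at this stage. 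Finally, a third QSVT pass applied to $U_A$ produces a $(\kappa_A^{1/2}, a + O(1), \epsilon')$-block-encoding of $A^{-1/2}$ using $\widetilde O(\kappa_A \log(1/\epsilon'))$ queries to $U_A$; sandwiching the block-encoding of $M^{1/2}$ between two such block-encodings via the product rule yields a block-encoding of $Y$ with composite subnormalisation $\kappa_A^{1/2} \cdot 1 \cdot \kappa_A^{1/2} = \kappa_A$, matching the claimed factor.

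The main subtleties will be error propagation and spectral-interval bookkeeping rather than any conceptual difficulty. Each of the three QSVT stages and each product composition contributes additively to the final block-encoding error, so I would set each intermediate $\epsilon'$ to $\epsilon / \poly(\kappa_A, \kappa_C)$ and inflate the polynomial degrees by the corresponding logarithmic factor, which is what is absorbed into the $\widetilde O$ notation. A second bookkeeping point is verifying that the spectrum of $M$ really does sit inside the interval on which the $x^{1/2}$ approximation is accurate; this relies only on the operator inequalities noted above. The stated gate count then follows because each oracle use carries $O(a)$ auxiliary-qubit overhead, and the $\poly(\kappa_A, \kappa_C, \log(1/\epsilon))$ classical time is the usual precomputation of QSVT phase angles. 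In effect, the lemma is the $p=2$, $A \mapsto A^{-1}$ instance of Lemma~\ref{lem:mgm-1}, and the proof is obtained by running that construction with the spectral bounds appropriate for $A^{1/2} C A^{1/2}$ rather than $A^{-1/2} C A^{-1/2}$.
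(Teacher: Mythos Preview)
Your proposal is correct and follows essentially the same five-step construction as the paper's proof in Appendix~\ref{app:proofb=0}: block-encode $A^{1/2}$, form $A^{1/2}CA^{1/2}$ by products, take its square root via QSVT on the interval $[1/(\kappa_A\kappa_C),1]$, block-encode $(\kappa_A A)^{-1/2}$, and sandwich. The only extra bookkeeping the paper spells out is explicit oblivious amplitude amplification (Lemma~\ref{lmm:amp amp}) after each QSVT stage to clear the constant $1/4$ prefactors, and the precise error choice $\epsilon_1 \sim \epsilon^2/(\kappa_A\kappa_C)^2$ needed to absorb the $4d\sqrt{\epsilon}$ blowup from Lemma~\ref{lmm:svt}; both are hidden in $\widetilde O$, as you anticipate.
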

\begin{proof}
See Appendix~\ref{app:proofb=0}.
\end{proof}

\subsection{\texorpdfstring{$B \neq 0$}{B not equal to 0} algebraic Riccati equation}
\label{sec:B-neq-0-Riccati}

Here we want to construct a block-encoding of a Hermitian solution to the algebraic Riccati equation via the standard matrix geometric mean, according to Lemma~\ref{lem:yay2}. We then have the following lemma.

\begin{lemma} 
    \label{lem:Briccati}
    Suppose that $U_A,U_B,U_C$ are $(1,a,0)$-block-encodings of matrices $A,B,C$, respectively,
    with $A\geq I/\kappa_A$, $C\geq I/\kappa_C$ and
    $A^{-1}B=\rbra*{A^{-1}B}^\dagger$.
    For $\epsilon\in (0,1/2)$,
    one can implement a $(2\kappa_A^{3/2},b,\epsilon)$-block-encoding of $Y$,
    where $b=O\rbra*{a+\log\rbra*{\kappa_A\kappa_C/\epsilon}}$ and
    \begin{align}
        Y&=A^{-1}\# \rbra*{B^\dagger A^{-1}B +C}+A^{-1}B\\
        &=A^{-1/2}\rbra*{A^{1/2}\rbra*{B^\dagger A^{-1} B+C}A^{1/2}}^{1/2}A^{-1/2}+A^{-1}B,
    \end{align}
    using
    \begin{itemize}
        \item 
        $\widetilde O\rbra*{\kappa_A\kappa_C\log^2\rbra*{1/\epsilon}}$ queries to $U_B$ and $U_C$, and
        $\widetilde{O}\rbra*{\kappa_A^2\kappa_C\log^3\rbra*{1/\epsilon}}$ queries to $U_A$;
        \item
        $\widetilde{O}\rbra*{a\kappa_A^2\kappa_C\log^3\rbra*{1/\epsilon}}$ gates; and
        \item 
        $\poly\rbra*{\kappa_A,\kappa_C,\log\rbra*{1/\epsilon}}$ classical time.
    \end{itemize}
\end{lemma}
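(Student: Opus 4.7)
The plan is to build $Y = A^{-1}\#(B^\dagger A^{-1}B+C) + A^{-1}B$ compositionally, using the block-encoding arithmetic from Appendix~\ref{app:preliminaries} together with the $B=0$ subroutine of Lemma~\ref{lem:simplericcati}. The Hermiticity condition $A^{-1}B=(A^{-1}B)^\dagger$ is not actually used in the construction; it enters only through Lemma~\ref{lem:yay2}, which certifies that the resulting $Y$ is a Hermitian solution of Eq.~\eqref{eq:yay1}.

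First I would construct a block-encoding of $A^{-1}$ by applying QSVT to $U_A$ with a polynomial approximation of $x\mapsto 1/x$ on $[1/\kappa_A,1]$ (Lemma~\ref{lmm:poly negative power}), yielding a $(\kappa_A,O(a),\epsilon')$-block-encoding using $\widetilde{O}(\kappa_A\log(1/\epsilon'))$ queries to $U_A$. Multiplying this by $U_B$ via the product rule (Lemma~\ref{lmm:product of block-encoding}) gives a block-encoding of $A^{-1}B$ with subnormalization $\kappa_A$. Chaining $U_B^\dagger$, the $A^{-1}$-encoding, and $U_B$ gives a $(\kappa_A,O(a),O(\epsilon'))$-block-encoding of $B^\dagger A^{-1}B$, and an LCU-style linear combination with $U_C$, with both terms padded to a common subnormalization $\alpha\coloneqq\kappa_A+1$, yields an $(\alpha,O(a),O(\epsilon'))$-block-encoding of $M\coloneqq B^\dagger A^{-1}B+C$.

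Next I would feed $A$ and the rescaled matrix $M/\alpha$ into Lemma~\ref{lem:simplericcati}: the block-encoding of $M$ just constructed doubles as a $(1,O(a),O(\epsilon'))$-block-encoding of $M/\alpha$, and since $M\geq C\geq I/\kappa_C$ we have $M/\alpha\geq I/(\alpha\kappa_C)$, so the second condition-number parameter entering Lemma~\ref{lem:simplericcati} is $\alpha\kappa_C=O(\kappa_A\kappa_C)$. The lemma then outputs a block-encoding of $A^{-1}\#(M/\alpha)=(A^{-1}\#M)/\sqrt{\alpha}$ with subnormalization $\kappa_A$; absorbing $\sqrt{\alpha}=O(\sqrt{\kappa_A})$ into the scaling upgrades this to an $(O(\kappa_A^{3/2}),b,\epsilon/2)$-block-encoding of $A^{-1}\#M$. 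A final LCU step combines this with the block-encoding of $A^{-1}B$, re-padded to subnormalization $\kappa_A^{3/2}$, to deliver the claimed $(\kappa_A^{3/2},b,\epsilon)$-block-encoding of $Y$.

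The main obstacle will be propagating subnormalizations and errors cleanly through the nested QSVT invocations. The inflation of $M$'s subnormalization by a factor of $\kappa_A$ is the source of both (a) the effective condition number $\kappa_A\kappa_C$ seen by the inner geometric-mean subroutine, and (b) the $\sqrt{\kappa_A}$ enlargement of the final subnormalization from $\kappa_A$ to $\kappa_A^{3/2}$; this inflation has to be absorbed without exceeding the stated complexities. Choosing each intermediate precision as $\epsilon$ divided by a suitable polylog factor keeps the total error below $\epsilon$, and since every ingredient is QSVT-based, the gate count matches the query count up to polylog factors in $\kappa_A$, $\kappa_C$, and $\log(1/\epsilon)$.
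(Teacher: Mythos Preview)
Your plan is essentially the paper's proof (Appendix~\ref{app:proofbneq0}) in modular form: the paper also builds a $\Theta(\kappa_A)$-subnormalized block-encoding of $D=B^\dagger A^{-1}B+C$ via a product and an LCU, then runs the $A^{-1}\#(\cdot)$ construction on it and finishes with an LCU against $A^{-1}B$; you simply call Lemma~\ref{lem:simplericcati} as a black box where the paper unrolls it inline (forming $A^{1/2}$, then $(2\kappa_A)^{-1}A^{1/2}DA^{1/2}$, applying a square-root QSVT, and sandwiching by $(\kappa_A A)^{-1/2}$). The $\kappa_A^{3/2}$ subnormalization and the $O(a+\log(1/\epsilon))$ ancilla count arise identically in both routes---your $\kappa_A\sqrt{\alpha}$ is the paper's $\kappa_A\cdot\sqrt{2\kappa_A}/\sqrt{2}$, and the extra $\log(1/\epsilon)$ ancillas come from the two LCU steps in both cases.

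The one place your argument is not closed is exactly the obstacle you flag. A literal black-box invocation of Lemma~\ref{lem:simplericcati} with second condition-number parameter $\alpha\kappa_C=O(\kappa_A\kappa_C)$ yields $\widetilde{O}(\kappa_A^2\cdot\kappa_A\kappa_C)=\widetilde{O}(\kappa_A^3\kappa_C)$ queries to $U_A$ and $\widetilde{O}(\kappa_A\cdot\kappa_A\kappa_C)=\widetilde{O}(\kappa_A^2\kappa_C)$ queries to the encoding of $M/\alpha$, which is a factor of $\kappa_A$ above the bounds stated in Lemma~\ref{lem:Briccati}; saying it ``has to be absorbed'' does not explain how. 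The paper's unrolled construction confronts the same bottleneck---it sets the square-root threshold at $\delta=2^{-1}\kappa_A^{-2}\kappa_C^{-1}$ and then records the resulting degree as $d_3=O(\kappa_A\kappa_C\log(1/\epsilon))$---so the extra-$\kappa_A$ issue is present in both presentations, and you should not assume the black-box call alone resolves it. (A minor technical point: Lemma~\ref{lem:simplericcati} as stated assumes exact $(1,a,0)$-block-encodings, whereas your $M/\alpha$ encoding carries error $O(\epsilon')$; the paper handles this by tracking the error explicitly through the unrolled QSVT steps rather than by a black-box call.)
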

\begin{proof}
    See Appendix~\ref{app:proofbneq0}.
\end{proof}

\subsection{Higher-order polynomial equations}
\label{sec:high-order-Riccati}

Algebraic Riccati equations are second-order nonlinear equations whose solutions are given by the second-order matrix geometric mean, i.e., $p=2$. We can also generalise our formalism to particular $p^{\text{th}}$-order nonlinear matrix equations, whose solutions involve $p\in \{3,4,\ldots \}$ weighted matrix geometric mean. For example, consider the following $p^{\text{th}}$-order nonlinear matrix equations, which we call the \textit{$p^{\text{th}}$-order YAY algebraic equations}
\begin{align} \label{eq:yayp}
    Y (AY)^{p-1}=C,
\end{align}
where $p$ is the highest order polynomial in $Y$. It is straightforwardly checked that the solutions can be written in terms of the weighted geometric mean from Definition~\ref{def:weightedgeometrimean}:
\begin{align} \label{eq:yaypsol}
    Y=A^{-1/2}(A^{1/2}C A^{1/2})^{1/p}A^{-1/2} = A^{-1} \#_{1/p} C. 
\end{align}
See \cite{FURUTA1988149} for a discussion of this kind of equation in the infinite-dimensional case.

\begin{lemma}[Solution of simple $p^{\text{th}}$-order algebraic $YAY$ equation]
\label{lem:yay-pth-order}
    Fix $p\in \{2,3,4, \ldots \}$. Consider the $p^{\text{th}}$-order algebraic $YAY$ equation when $A$ and $C$ are positive definite matrices:
\begin{align} 
    Y (AY)^{p-1}=C.
    \label{eq:yayp-lemma}
\end{align}
This equation has a unique positive definite solution  given by the following weighted geometric mean:
\begin{align}
    Y=A^{-1/2}(A^{1/2}C A^{1/2})^{1/p}A^{-1/2} = A^{-1} \#_{1/p} C >0. 
\end{align}
\end{lemma}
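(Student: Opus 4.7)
The plan is to follow the same strategy as the proof of Lemma~\ref{lem:yay1}, namely to conjugate the equation by $A^{1/2}$ so that the left-hand side collapses into a single $p^{\text{th}}$ power of a positive definite matrix, which then has a unique positive definite $p^{\text{th}}$ root.

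First I would multiply Eq.~\eqref{eq:yayp-lemma} on the left and on the right by $A^{1/2}$ to obtain
\begin{equation}
    A^{1/2}\, Y\, (AY)^{p-1}\, A^{1/2} \;=\; A^{1/2} C A^{1/2}.
\end{equation}
The key algebraic observation is the identity
\begin{equation}
    A^{1/2}\, Y\, (AY)^{p-1}\, A^{1/2} \;=\; \left(A^{1/2} Y A^{1/2}\right)^{p},
\end{equation}
which follows by repeatedly inserting $A = A^{1/2} A^{1/2}$ between the $p$ copies of $Y$ in the product $Y \cdot A Y \cdot AY \cdots AY$ and then grouping neighbouring $A^{1/2}$ factors with each $Y$. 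So the equation becomes
\begin{equation}
    \left(A^{1/2} Y A^{1/2}\right)^{p} \;=\; A^{1/2} C A^{1/2}.
\end{equation}

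Next I would invoke the fact that every positive definite matrix has a unique positive definite $p^{\text{th}}$ root. Since $A^{1/2} C A^{1/2} > 0$ (as $A, C > 0$) and since for any positive definite candidate $Y$ the matrix $A^{1/2} Y A^{1/2}$ is also positive definite, taking the unique positive definite $p^{\text{th}}$ root of both sides yields
\begin{equation}
    A^{1/2} Y A^{1/2} \;=\; \left(A^{1/2} C A^{1/2}\right)^{1/p}.
\end{equation}
Conjugating by $A^{-1/2}$ then gives
\begin{equation}
    Y \;=\; A^{-1/2} \left(A^{1/2} C A^{1/2}\right)^{1/p} A^{-1/2} \;=\; A^{-1} \#_{1/p} C,
\end{equation}
which is manifestly positive definite. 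The chain of implications is reversible, so this $Y$ is indeed a solution, and the uniqueness of the positive definite $p^{\text{th}}$ root in the previous step promotes it to the unique positive definite solution.

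I do not expect any serious obstacle: the only step requiring a moment of care is verifying the identity $A^{1/2} Y (AY)^{p-1} A^{1/2} = (A^{1/2} Y A^{1/2})^p$, which is purely algebraic and straightforward once one writes $A = A^{1/2} A^{1/2}$. Everything else is the same $p=2$ argument from Lemma~\ref{lem:yay1}, with $(\cdot)^{1/2}$ replaced by $(\cdot)^{1/p}$ and squares replaced by $p^{\text{th}}$ powers.
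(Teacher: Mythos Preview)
Your proposal is correct and follows essentially the same argument as the paper: conjugate by $A^{1/2}$, collapse the left-hand side to $(A^{1/2}YA^{1/2})^p$, take the unique positive definite $p^{\text{th}}$ root, and conjugate back by $A^{-1/2}$. The paper's proof is line-for-line the same idea, so there is nothing to add.
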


\begin{proof}
The proof is a generalisation of that for Lemma~\ref{lem:yay1} and we provide it for completeness.
Starting from the equation in \eqref{eq:yayp-lemma} and by using the fact that $A$ is positive definite with a unique square root, consider that
\begin{align}
    Y (AY)^{p-1} = C  \quad & \Leftrightarrow \quad  Y (A^{1/2}A^{1/2}Y)^{p-1} =C \\
    &  \Leftrightarrow \quad  (A^{1/2}YA^{1/2})^p = A^{1/2}CA^{1/2} ,
\end{align}
where the last line is obtained by left and right multiplying the previous line by $A^{1/2}$.
Since the matrix $A^{1/2}CA^{1/2} $ is positive definite and the equality in the last line above has been shown, both $A^{1/2}CA^{1/2}$ and $(A^{1/2}YA^{1/2})^p$ have a unique positive definite $p$th root, implying that
\begin{align}
A^{1/2}YA^{1/2} = (A^{1/2}CA^{1/2})^{1/p} \quad & \Leftrightarrow \quad Y = A^{-1/2}(A^{1/2}CA^{1/2})^{1/p}A^{-1/2},
\end{align}
thus justifying that $Y= A^{-1} \#_{1/p} C$ is the unique positive definite solution as claimed.
\end{proof}

To construct a block-encoding for the weighted geometric mean (and thus the solution of~\eqref{eq:yayp}), we have proven the following lemma, which holds for every non-zero real number~$p$.

\begin{lemma}
    \label{lem:high-order-riccati}
    Suppose that $U_A,U_C$ are $(1,a,0)$-block-encodings of matrices $A,C$, respectively,
    with $A\geq I/\kappa_A$ and $C\geq I/\kappa_C$ and let $p\neq 0$ be any fixed non-zero real number. 
    For $\epsilon\in (0,1/2)$,
    one can implement a $(2\kappa_A\gamma_p,5a+11,\epsilon)$-block-encoding of $Y$,
    where
    \begin{equation}
        Y=A^{-1}\#_{1/p} C=A^{-1/2}\rbra*{A^{1/2}CA^{1/2}}^{1/p}A^{-1/2},
    \end{equation}
    and 
    \begin{equation}
        \gamma_p=\begin{cases}
            1 & p>0,\\
            \kappa_A^{-1/p}\kappa_C^{-1/p}& p<0,
        \end{cases}
    \end{equation}
    using
    \begin{itemize}
        \item 
        $\widetilde{O}\rbra*{\kappa_A\kappa_C\log^2\rbra*{1/\epsilon}}$ queries to $U_C$ and
        $\widetilde{O}\rbra*{\kappa_A^2\kappa_C\log^3\rbra*{1/\epsilon}}$ queries to $U_A$;
        \item
        $\widetilde{O}\rbra*{a\kappa_A^2\kappa_C\log^3\rbra*{1/\epsilon}}$ gates; and
        \item 
        $\poly\rbra*{\kappa_A,\kappa_C,\log\rbra*{1/\epsilon}}$ classical time.
    \end{itemize}
\end{lemma}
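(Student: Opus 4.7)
The plan is to mirror the three-step construction used in the proof of Lemma~\ref{lem:mgm-1}, swapping the roles of positive and negative half-powers of $A$ so as to produce $A^{-1}\#_{1/p}C$ rather than $A\#_{1/p}C$, and replacing the fixed inner exponent $1/2$ by a general $1/p$. The key structural observation is that, under the assumptions $I/\kappa_A\leq A\leq I$ and $I/\kappa_C\leq C\leq I$, the inner matrix $A^{1/2}CA^{1/2}$ satisfies $I/(\kappa_A\kappa_C)\leq A^{1/2}CA^{1/2}\leq I$, so its spectrum is contained in a well-controlled interval on which QSVT with a polynomial approximation of $x^{1/p}$ applies cleanly for any fixed non-zero real $p$.

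Concretely, first I would use QSVT with a polynomial approximation of the positive power $x^{1/2}$ (Lemma~\ref{lmm:poly positive power}) to build a $(1,\cdot,\cdot)$-block-encoding of $A^{1/2}$ from $U_A$ at cost $\widetilde O(\kappa_A)$ queries, and QSVT with a polynomial approximation of the negative power $x^{-1/2}$ (Lemma~\ref{lmm:poly negative power}) to build a $(\sqrt{\kappa_A},\cdot,\cdot)$-block-encoding of $A^{-1/2}$ at the same asymptotic cost. Next, using the product lemma (Lemma~\ref{lmm:product of block-encoding}), I would combine two copies of the block-encoding of $A^{1/2}$ with $U_C$ to obtain a $(1,\cdot,\cdot)$-block-encoding of $A^{1/2}CA^{1/2}$. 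Third, I would apply QSVT once more to this block-encoding with a polynomial approximation of $x^{1/p}$: for $p>0$ the positive-power variant (Lemma~\ref{lmm:poly positive power}) gives subnormalization factor $1$, while for $p<0$ the negative-power variant (Lemma~\ref{lmm:poly negative power}) introduces the factor $(\kappa_A\kappa_C)^{-1/p}$, which is exactly $\gamma_p$. This inner QSVT accounts for $\widetilde O(\kappa_A\kappa_C\log(1/\epsilon))$ uses of the block-encoding of $A^{1/2}CA^{1/2}$, which propagate to $\widetilde O(\kappa_A\kappa_C\log(1/\epsilon))$ queries to $U_C$ and $\widetilde O(\kappa_A^2\kappa_C\log^2(1/\epsilon))$ queries to $U_A$ via the two $A^{1/2}$ layers. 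Finally, I would sandwich the resulting block-encoding of $(A^{1/2}CA^{1/2})^{1/p}$ between two copies of the block-encoding of $A^{-1/2}$ using Lemma~\ref{lmm:product of block-encoding}; the two outer $\sqrt{\kappa_A}$ factors then multiply with the inner $\gamma_p$ to yield the claimed total subnormalization $\kappa_A\gamma_p$, and only $O(1)$ additional queries are spent here, so the overall query counts and gate complexity match the lemma statement.

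The main technical point requiring care is the propagation of QSVT error through multiple layers, i.e., choosing the precision parameters at each stage so that the accumulated error is at most $\epsilon$ while keeping the polylogarithmic overhead under control; this is standard for QSVT pipelines of this shape and is handled exactly as in the proof of Lemma~\ref{lem:mgm-1}, so I expect the verification to be essentially routine. The only genuinely new ingredient over that lemma is handling the $p<0$ case of the inner power function, where one must carefully verify that the spectrum of $A^{1/2}CA^{1/2}$ is bounded below by $1/(\kappa_A\kappa_C)$ so that the negative-power polynomial approximation of Lemma~\ref{lmm:poly negative power} can be applied on the interval $[1/(\kappa_A\kappa_C),1]$, giving the subnormalization factor $\gamma_p=\kappa_A^{-1/p}\kappa_C^{-1/p}$ recorded in the statement. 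Once this case distinction is made, the rest of the bookkeeping is identical to that of Lemma~\ref{lem:mgm-1}.
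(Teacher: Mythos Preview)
Your proposal is correct and takes essentially the same approach as the paper: the paper's proof simply refers back to the proof of Lemma~\ref{lem:simplericcati} (the $A^{-1}\#C$ case), replaces the inner exponent $c=1/2$ in Step~\ref{stp:B=0-blk-enc-3a} by $c=1/p$ for $p>0$, and for $p<0$ swaps Lemma~\ref{lmm:poly positive power} for Lemma~\ref{lmm:poly negative power} with $c=-1/p$, picking up the extra factor $\gamma_p=\kappa_A^{-1/p}\kappa_C^{-1/p}$. Your use of Lemma~\ref{lem:mgm-1} (with the roles of $A^{\pm 1/2}$ swapped) as the template rather than Lemma~\ref{lem:simplericcati} is a cosmetic difference, since both lemmas share the same five-step QSVT pipeline and error-propagation analysis you describe.
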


\begin{proof}
    See Appendix~\ref{app:high-order-riccati}.
\end{proof}

\section{Applications}

\label{sec:applications}

Here we explore two classes of applications for preparing block-encodings of the matrix geometric mean. The first class of applications is to learning problems, in particular for metric learning from data, both quantum and classical. Next we demonstrate how having access to the matrix geometric mean also allows us to compute some fundamental quantities in quantum information, like the quantum fidelity between two mixed states via the Fuchs--Caves observable, as well as geometric R\'enyi relative entropies.

\subsection{Quantum geometric mean metric learning} \label{sec:learning}

In learning problems, there is typically a loss function $L$ that we want to optimise. Suppose we have $D \times D$  positive definite matrices $Y$, $A$, and $C$. We note that here the uniqueness result in Lemma~\ref{lem:yay1} continues to hold. Consider the following optimisation problem:
\begin{align} \label{eq:metricloss1}
    \min_{Y \geq 0}  L(Y), \qquad L(Y) \coloneqq  \operatorname{Tr}(YA)+\operatorname{Tr}(Y^{-1}C).
\end{align}
It turns out that, for given $A$ and $C$, the unique $Y$ minimising $L(Y)$ is $Y=A^{-1} \# C$. In~\cite{zadeh2016geometric}, this result was proven for real positive definite matrices $A$ and $C$, and here we extend it to positive definite Hermitian matrices. In \cite{Wat13}, the same optimisation problem was considered in the context of quantum fidelity, where it was shown that the optimal value of \eqref{eq:metricloss1} is equal to $\operatorname{Tr}[(A^{1/2} C A^{1/2})^{1/2}]$. 

\begin{lemma} \label{lem:lyoptimise} Fix $A$ and  $C$ to be positive definite matrices. The unique solution to $\min_{Y \geq 0} L(Y)$ where $L(Y) = \operatorname{Tr}(YA)+\operatorname{Tr}(Y^{-1}C)$ is the matrix geometric mean $Y=A^{-1} \#C$. 
\end{lemma}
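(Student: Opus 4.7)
The plan is to reduce the minimization to a scalar-type inequality through two successive congruence changes of variable that isolate the contribution of the matrix geometric mean, and then to apply the elementary operator inequality $W + W^{-1} \geq 2I$ for $W > 0$.

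First I would introduce $Z \coloneqq A^{1/2} Y A^{1/2}$ and $M \coloneqq A^{1/2} C A^{1/2}$, both of which are positive definite whenever $Y$ is. Since $A^{1/2}$ is invertible, the map $Y \mapsto Z$ is a bijection on the positive definite cone, so it suffices to minimise over $Z > 0$. A direct calculation using cyclicity of the trace and $Y^{-1} = A^{1/2} Z^{-1} A^{1/2}$ rewrites the loss as
\begin{equation}
L(Y) \;=\; \operatorname{Tr}(Z) + \operatorname{Tr}(Z^{-1} M).
\end{equation}
Next I would perform the second substitution $W \coloneqq M^{-1/4} Z M^{-1/4}$, which is again a bijection on positive definite matrices because $M$ is positive definite. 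Using the definitions and cyclicity, both trace terms collapse to multiples of $M^{1/2}$, giving
\begin{equation}
L(Y) \;=\; \operatorname{Tr}\!\left( M^{1/2} \bigl( W + W^{-1} \bigr) \right).
\end{equation}

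Now I would invoke the operator inequality $W + W^{-1} \geq 2I$ for any $W > 0$, which follows from functional calculus and the scalar inequality $t + 1/t \geq 2$ for $t > 0$, with equality iff $t = 1$. Conjugating by $M^{1/4}$ and taking the trace, this yields
\begin{equation}
L(Y) \;\geq\; 2\operatorname{Tr}\!\left( M^{1/2} \right) \;=\; 2\operatorname{Tr}\!\left( (A^{1/2} C A^{1/2})^{1/2} \right),
\end{equation}
consistent with the optimal value noted after \eqref{eq:metricloss1}. The equality case of $W + W^{-1} \geq 2I$ forces $W = I$ uniquely, since the scalar function $t + 1/t - 2$ vanishes only at $t = 1$. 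Undoing the two substitutions then yields $Z = M^{1/2}$, i.e.\ $A^{1/2} Y A^{1/2} = (A^{1/2} C A^{1/2})^{1/2}$, which is precisely $Y = A^{-1} \# C$ in the form of \eqref{eq:inversegeometricmeans}.

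The main obstacle I anticipate is making the equality case watertight in order to obtain genuine uniqueness: one must check that the equality $\operatorname{Tr}(M^{1/2}(W + W^{-1} - 2I)) = 0$ with $M > 0$ and $W + W^{-1} - 2I \geq 0$ does force $W + W^{-1} = 2I$ as an operator identity (not merely in trace), which follows because $M^{1/2}$ has strictly positive eigenvalues and $W + W^{-1} - 2I$ is positive semidefinite, so writing $\operatorname{Tr}(M^{1/2}(W + W^{-1} - 2I)) = \operatorname{Tr}(M^{1/4}(W+W^{-1}-2I)M^{1/4})$ and noting that a PSD matrix with zero trace must vanish does the job. Alternative but essentially equivalent routes would be (i) a direct first-order calculus argument, whose stationarity condition $A = Y^{-1} C Y^{-1}$ is exactly $YAY = C$, from which uniqueness of the positive definite solution follows by Lemma~\ref{lem:yay1}, combined with a convexity check to promote the critical point to a global minimum; or (ii) an appeal to the known result in \cite{Wat13}, which only requires extending from the real to the Hermitian setting. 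I prefer the congruence-plus-$W+W^{-1}\geq 2I$ route since it gives both optimality and uniqueness in one stroke and avoids any separate convexity analysis.
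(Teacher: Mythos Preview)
Your proof is correct and takes a genuinely different route from the paper's. The paper argues by first establishing that $L$ is strictly convex (via a second-derivative calculation showing $\operatorname{Tr}(Y^{-1}ZY^{-1}ZY^{-1}C)>0$) and strictly geodesically convex (via the operator-mean inequality $Y_1\#_t Y_2 \leq (1-t)Y_1+tY_2$ together with $(Y_1\#_t Y_2)^{-1}=Y_1^{-1}\#_t Y_2^{-1}$), and then solves the stationarity equation $\nabla L(Y)=A-Y^{-1}CY^{-1}=0$, invoking Lemma~\ref{lem:yay1} for uniqueness; this is essentially your alternative route~(i). Your congruence argument instead reduces everything to the scalar-flavoured inequality $W+W^{-1}\geq 2I$ and reads off both the minimiser and the optimal value $2\operatorname{Tr}\!\big((A^{1/2}CA^{1/2})^{1/2}\big)$ at once, with no separate convexity analysis and no appeal to geodesic structure. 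The tradeoff is that the paper's approach highlights the Riemannian geometry that is thematically central elsewhere in the paper, whereas yours is shorter, entirely elementary, and also delivers the optimal value directly; your handling of the equality case via $\operatorname{Tr}\!\big(M^{1/4}(W+W^{-1}-2I)M^{1/4}\big)=0$ is clean and complete.
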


\begin{proof}
    If $L(Y)$ is both strictly convex and strictly geodesically convex, then the solution to $\nabla L(Y)=0$ will also be a global minimiser. For the proof of strict convexity and strict geodesic convexity, see Appendix~\ref{app:convexity}. Now $\nabla L(Y)=A-Y^{-1}CY^{-1}=0$ implies the algebraic Riccati equation $YAY=C$ or $Y=A^{-1} \# C$, which is the unique solution for positive definite matrices $A$ and $C$. 
\end{proof}

We will use this property and map two learning problems -- one for classical data and another for quantum data -- onto this optimisation problem. Using the block-encoding for the matrix geometric mean in Lemma~\ref{lem:simplericcati}, we then devise quantum algorithms for learning a Euclidean metric from data as well as a $1$-class classification algorithm for quantum states. We also extend to the case of weighted learning, where there are unequal contributions to the loss function in Eq.~\eqref{eq:metricloss1} from $\text{Tr}(YA)$ and $\text{Tr}(Y^{-1}C)$.

\subsubsection{Learning Euclidean metric from data} \label{sec:learning-c}

Machine learning algorithms rely on distance measures to quantify how similar  one set of data is to another. Naturally different distance measures can give rise to different results, and so choosing the right metric is crucial for the success of an algorithm. The distance measure itself can in fact be learned for example in a weakly supervised scenario, and this is called metric learning~\cite{kulis2013metric}. Here we are provided with the following two sets $\mathcal{S}$ (similar) and $\mathcal{D}$ (dissimilar) of pairs (training data) 
 \begin{align}
    \mathcal{S} & \coloneqq \{(\vect{x}, \vect{x}')\,|\, \vect{x}, \vect{x}' \text{ are in the same class}\}, \\
     \mathcal{D} & \coloneqq \{(\vect{x}, \vect{x}')\,|\,  \vect{x}, \vect{x}' \text{ are in different classes}\} ,
 \end{align}
and $\{(\vect{x}^{(k)}, \vect{x}'^{(k)})\}_k$ are the data points, where $k$ labels all the pairs that either belong to $\mathcal{S}$ or $\mathcal{D}$. An important example in metric learning is learning the Euclidean metric from data, which can be reformulated as a simple optimisation problem with a closed-form solution. Learning a Euclidean metric is a common form of metric learning, where we can learn a Mahalanobis distance $d_Y$ 
\begin{align}
    d_Y(\vect{x},\vect{x}')=(\vect{x}-\vect{x}')^T Y (\vect{x}-\vect{x}')=\operatorname{Tr}(Y(\vect{x}-\vect{x}')(\vect{x}-\vect{x}')^T),
\end{align}
with $Y$  a real $ D \times D$ symmetric positive definite matrix. To identify a suitable $Y$, one requires a suitable cost function. 

In \textit{geometric mean metric learning}~\cite{zadeh2016geometric}, we want $d_Y$ to be minimal between data in the same class, i.e., $\mathcal{S}$. At the same time, when the data are in different classes, i.e., $\mathcal{D}$, we want $d_{Y^{-1}}$ to be minimal instead. Thus we want to minimise the sum $\sum_{\mathcal{S}} d_Y+\sum_{\mathcal{D}}d_{Y^{-1}}$. This leads to an optimisation problem of the form in Eq.~\eqref{eq:metricloss1}
 \begin{align} \label{eq:metricloss2}
     &  \min_{Y \geq 0} L(Y), \qquad  L(Y) \coloneqq  \operatorname{Tr}(YA)+\operatorname{Tr}(Y^{-1}C) , \\
     & A  =\sum_{(\vect{x}, \vect{x}') \in \mathcal{S}} (\vect{x}-\vect{x}')(\vect{x}-\vect{x}')^T ,\\
     & C =\sum_{(\vect{x}, \vect{x}') \in \mathcal{D}} (\vect{x}-\vect{x}')(\vect{x}-\vect{x}')^T,
 \end{align}
where we assume $A$ and $C$ are positive definite. 
 From Lemma~\ref{lem:lyoptimise}, we see that the optimal solution to Eq.~\eqref{eq:metricloss2} is the matrix geometric mean $Y=A^{-1} \# C$. We see that this is also in fact the solution of the $B=0$ algebraic Riccati equation $YAY=C$. In Lemma~\ref{lem:simplericcati}, we saw that, given access to the block-encodings of $U_A$, $U_C$, their inverses, and controlled versions, we can construct the block-encoding of $Y$, denoted $U_Y$. Lemma~\ref{lem:simplericcati} also shows that the query and gate complexity costs are efficient in $D$, i.e., $O(\text{poly} \log d)$, when the condition numbers for $A$ and $C$ are also polynomial in $\log d$. 

 While we see that it is possible to efficiently recover $U_Y$, it is not sufficient for a direct application to machine learning. Learning $Y$ is part of the learning stage, but reading off the classical components of $Y$ directly from $U_Y$ is inefficient. However, if we consider the testing stage in machine learning, then we need to compute the actual distance $d_Y$ if we are given a new data pair $(\vect{y}, \vect{y}')$, known as testing data. For this testing data, we do not know its classification into $\mathcal{S}$ or $\mathcal{D}$ a priori. Thus the task is to show that, having access to $U_Y$, it is then sufficient to compute $d_Y$ without needing to read out the elements of $Y$. For example, a large value of $d_Y(\vect{y}, \vect{y}')$ means that we should classify $(\vect{y}, \vect{y}') \in \mathcal{D}$, whereas a small value of $d_Y(\vect{y}, \vect{y}')$ means that we should classify $(\vect{y}, \vect{y}') \in \mathcal{S}$. We discuss later the preparation of the block-encodings of $U_A$, $U_C$. 
 
Before proceeding, we first discuss the preparation of a quantum state that we later require. Given a new data pair (testing data) $(\vect{y}, \vect{y}')$ for which we want to compute $d_Y$, where we use the optimal $Y$, we can define a corresponding pure quantum state with $m=O(\log d)$ qubits $|\psi\rangle_{y,y'}=(1/\mathcal{N}_{\psi})\sum_{i=1}^d (\vect{y}-\vect{y}')_i|i\rangle$, with normalisation constant $\mathcal{N}_{\psi}^2=\sum_{i=1}^d (\vect{y}-\vect{y}')_i^2$. Its amplitudes are proportional to $\vect{y}-\vect{y}'$ for any pair $(\vect{y}, \vect{y}')$.  We say that the state $|\psi\rangle_{y,y'}$ has sparsity~$\sigma$ if $\sigma$ is the number of non-zero entries in the amplitude. We can use optimal state preparation schemes~\cite{gleinig2021efficient,STY+23,zhang2022quantum} to prepare $|\psi\rangle_{y,y'}$.
 
\begin{lemma}[Quantum state preparation, \cite{gleinig2021efficient,STY+23,zhang2022quantum}] \label{lem:initialprep}
  A quantum circuit producing an $m$-qubit state $|z\rangle=\sum_{i=1}^{2^m} z_i |i\rangle$ from $|0\rangle$ with given classical entries $\{z_i\}_{i=1}^{2^m}$ can be implemented by using $O(m\sigma)$ CNOT gates, $O(\sigma(\log \sigma +m))$ one-qubit gates, and $O(1)$ ancilla qubits, where the circuit description can be classically computed with time complexity $O(m\sigma^2\log \sigma)$. 
  Further, the gate depth complexity can be reduced to $\Theta(\log (m \sigma))$, if $O(m \sigma \log \sigma)$ ancilla qubits are used.
\end{lemma}

Since here $m=O(\log d)$, we see that as long as $\sigma$ is small, e.g., $\sigma=O(\poly \log d)$, then the total initial state preparation cost, in either gate complexity and number of ancillas is $O(\poly \log d)$. Next we compute $d_Y$.

\begin{theorem} \label{thm:classicallearning}
    Suppose we are given $U_{A}$, $U_{C}$,  which are $(1, \log d, 0)$-block-encodings $A=\sum_{(\vect{x}, \vect{x}') \in \mathcal{S}} (\vect{x}-\vect{x}')(\vect{x}-\vect{x}')^T$ and $C=\sum_{(\vect{x}, \vect{x}') \in \mathcal{D}} (\vect{x}-\vect{x}')(\vect{x}-\vect{x}')^T$, respectively. We also assume access to their inverses and controlled versions. We assume that the data obeys $\kappa_A, \kappa_C=O(\poly \log d)$. Given a testing data pair $(\vect{y}, \vect{y}')$, we assume that the corresponding state $|\psi\rangle_{y,y'}$ has sparsity $O(\poly \log d)$ and $\mathcal{N}_{\psi}=O(\poly \log d)$. Then computing $d_Y(\vect{y},\vect{y}')$ to precision $\epsilon$ has a query and gate complexity $O(\operatorname{poly}(\log d, 1/\epsilon))$. 
\end{theorem}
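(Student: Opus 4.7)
\medskip

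\noindent\textbf{Proof proposal for Theorem~\ref{thm:classicallearning}.}
The plan is to reduce the testing-phase computation of $d_Y(\vect{y},\vect{y}')$ to an expectation value of $Y$ in the state $\ket{\psi}_{y,y'}$ and then use the block-encoding of $Y$ together with amplitude estimation. The starting observation is the rewriting
\begin{equation}
    d_Y(\vect{y},\vect{y}') = (\vect{y}-\vect{y}')^T Y (\vect{y}-\vect{y}') = \mathcal{N}_\psi^2\, \bra{\psi}_{y,y'} Y \ket{\psi}_{y,y'},
\end{equation}
which follows directly from the definition of $\ket{\psi}_{y,y'}$ and the fact that $Y$ is real symmetric positive definite. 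Hence it suffices to estimate the real expectation value $\mu \coloneqq \bra{\psi}_{y,y'}Y\ket{\psi}_{y,y'}$ to additive precision $\epsilon/\mathcal{N}_\psi^2$, because multiplying a $\mathcal{N}_\psi^2$-approximation by the known scalar $\mathcal{N}_\psi^2 = O(\poly\log D)$ blows the absolute error up by only a polylog factor.

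\medskip

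\noindent First I would assemble the three ingredients. (i) Using Lemma~\ref{lem:initialprep}, prepare $\ket{\psi}_{y,y'}$ from the classical description of $\vect{y}-\vect{y}'$; because the sparsity is $O(\poly\log D)$ and the state uses $m=O(\log D)$ qubits, the gate and ancilla cost of preparation is $O(\poly\log D)$. (ii) Apply Lemma~\ref{lem:simplericcati} with the given block-encodings $U_A,U_C$, producing a $(\kappa_A,5\log D+20,\delta)$-block-encoding $U_Y$ of $Y=A^{-1}\# C$. Since $\kappa_A,\kappa_C=O(\poly\log D)$, each call to $U_Y$ costs $\widetilde{O}(\kappa_A^2\kappa_C\log^2(1/\delta))=O(\poly(\log D,\log(1/\delta)))$ queries and gates. (iii) Combine $U_Y$ and the state preparation unitary $V_\psi$ (with $V_\psi\ket{0}=\ket{\psi}_{y,y'}$) in a standard Hadamard-test / amplitude-estimation procedure: the amplitude
\begin{equation}
    \bra{0}_a\bra{\psi}_{y,y'}\, U_Y\, \ket{\psi}_{y,y'}\ket{0}_a \;=\; \frac{\mu}{\kappa_A} + O(\delta)
\end{equation}
is a real number in $[-1,1]$ that can be extracted by amplitude estimation on the Hadamard-test circuit.

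\medskip

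\noindent The precision accounting then runs as follows. To get $\mu$ to additive precision $\epsilon/\mathcal{N}_\psi^2$, I need the amplitude above to additive precision $\epsilon/(\kappa_A \mathcal{N}_\psi^2) = \Omega(\epsilon/\poly\log D)$. Choosing $\delta$ small enough (say $\delta=\Theta(\epsilon/(\kappa_A\mathcal{N}_\psi^2))$) absorbs the block-encoding error, and standard amplitude estimation achieves the desired precision using $O(\kappa_A\mathcal{N}_\psi^2/\epsilon)=O(\poly(\log D,1/\epsilon))$ calls to the Hadamard-test circuit, each of which invokes $U_Y$ and $V_\psi$ once. Multiplying the number of amplitude-estimation iterations by the cost of one $U_Y$ (polylog in $D$ and $1/\epsilon$) and adding the negligible state-preparation cost yields the overall complexity $O(\poly(\log D,1/\epsilon))$.

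\medskip

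\noindent The main obstacle I anticipate is the error bookkeeping, in particular the fact that the block-encoding of $Y$ carries a subnormalisation factor $\kappa_A$ and that $d_Y$ is recovered only after multiplying by $\kappa_A \mathcal{N}_\psi^2$; this inflates the required additive precision on the raw amplitude by a polylog factor, and one needs to be careful that the resulting sharper precision does not contaminate the block-encoding error $\delta$. Since both $\kappa_A$ and $\mathcal{N}_\psi^2$ are polylog by hypothesis, these blow-ups remain polylog and the query/gate complexity stays $O(\poly(\log D,1/\epsilon))$ as claimed.
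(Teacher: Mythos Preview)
Your proposal is correct and follows essentially the same route as the paper: rewrite $d_Y(\vect{y},\vect{y}')=\mathcal{N}_\psi^2\bra{\psi}_{y,y'}Y\ket{\psi}_{y,y'}$, build $U_Y$ via Lemma~\ref{lem:simplericcati}, prepare $\ket{\psi}_{y,y'}$ via Lemma~\ref{lem:initialprep}, and extract the expectation with a Hadamard test. The only minor difference is that you explicitly invoke amplitude estimation for the $1/\epsilon$ readout, whereas the paper simply appeals to a swap test or Hadamard test; both yield $O(\poly(\log D,1/\epsilon))$ as stated.
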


\begin{proof}
We first make the observation that, for optimal $Y$, 
 \begin{align} \label{eq:tmp}
     d_Y(\vect{y}, \vect{y'})= \mathcal{N}_{\psi}^2\langle \psi|_{y,y'}Y|\psi\rangle_{y,y'} \approx 2\mathcal{N}_{\psi}^2\kappa_A \operatorname{Tr}(\langle 0 |U_Y| 0\rangle |\psi\rangle_{y,y'} \langle \psi|_{y,y'}),
 \end{align}
where $Y=A^{-1} \# C$ and $U_Y$ is a $(2\kappa_A, 5a+11, \epsilon)$ block-encoding of $Y$. The proportionality constant of $2\kappa_A$ comes from Lemma~\ref{lem:simplericcati}, which shows that $\|Y-2\kappa_A \langle 0|_{5a+11} U_Y |0\rangle_{5a+11}\| \leq \epsilon$, where $a=\log d$. For simplicity, we neglect the subscript on the $|0\rangle$ states. 
 This trace can be interpreted as an expectation value of $|\psi\rangle_{y,y'}$ with $Y$ as the observable, and comes from the definition of $d_Y$ and $|\psi\rangle_{y,y'}$. 

To compute this trace given $U_Y$, we observe $\operatorname{Tr}((S \otimes T)X)=\operatorname{Tr}(X_TS)$, where if $T=\sum_n \lambda_n |u_n\rangle \!\langle v_n|$, then $X_T=\sum_n \lambda_n \langle v_n|X|u_n\rangle$. So we can rewrite
\begin{align}
   d_Y(\vect{y}, \vect{y}') & \approx  2\mathcal{N}^2_{\psi} \kappa_A \operatorname{Tr}(\langle 0 |U_Y| 0\rangle |\psi\rangle_{y,y'} \langle \psi|_{y,y'})\\
   & =2\mathcal{N}^2_{\psi} \kappa_A \operatorname{Tr}((|\psi\rangle_{y,y'} \langle \psi|_{y,y'}\otimes |0\rangle \!\langle 0|)U_Y)\\
   & = 2\mathcal{N}^2_{\psi} \kappa_A  \langle \Psi|_{y,y'} U_Y|\Psi\rangle_{y,y'},
\end{align}
where $|\Psi\rangle_{y,y'}=|\psi\rangle_{y,y'} \otimes |0\rangle$. The last expectation value  can be realised with a conventional swap test~\cite{barenco1997stabilization,buhrman2001quantum} between the states $|\Psi\rangle_{y,y'}$ and $U_Y|\Psi\rangle_{y,y'}$. One can also use the destructive SWAP test (i.e., Bell measurements and classical post-processing)~\cite{garcia2013swap}. Alternatively, $\operatorname{Tr}((|\psi\rangle_{y,y'} \langle \psi|_{y,y'}\otimes |0\rangle \!\langle 0|)U_Y)$ can also be computed through a Hadamard test (Lemma~\ref{lemma:hadamard-test}), where one is given the controlled-$U_Y$ and state $|\psi\rangle_{y,y'} \otimes |0\rangle$. For example, applying the unitary $U_Y$ to $|\psi\rangle_{y,y'} \otimes |0\rangle$ and using the swap test with $|\psi\rangle_{y,y'} \otimes |0\rangle$, we recover $\langle \psi|_{x,x'} Y |\psi\rangle_{x,x'}$ to precision $\epsilon$ with query and gate complexity $O(\operatorname{poly}(\log d, 1/\epsilon))$, when $\kappa_A=O(\poly \log d)$. Now, $d_Y(\vect{y}, \vect{y}')=\mathcal{N}^2_{\psi} \langle \psi|_{y,y'} Y |\psi\rangle_{y,y'}$. Since we only have $\sigma=O(\text{poly}\log d)$ non-zero entries in $\vect{y}-\vect{y}'$, the cost in the classical computation of the normalisation constant is also of order $O(\poly \log d)$. Assuming that $\mathcal{N}^2_{\psi}=O(\poly\log d)$, then we recover $d_Y$ efficiently when given access to $U_Y$ and $|\psi\rangle_{y, y'}$.

We saw that preparing the state $|\psi\rangle_{y,y'}$ according to Lemma~\ref{lem:initialprep} incurs a cost $O(\poly\log d)$.
From Lemma~\ref{lem:simplericcati} we can construct a $(2\kappa_A, O(\log d), \epsilon)$-block-encoding of $Y$ with gate and query complexity $O(\operatorname{poly}(\kappa_A, \kappa_C, \log (1/\epsilon)))$. Since $\kappa_A, \kappa_C=O(\text{poly} \log d)$, then the theorem is proved.  
\end{proof}
Thus, if our assumptions are obeyed, the quantum cost for computation of $d_Y$ can be $O(\text{poly} \log d)$, whereas  classical numerical algorithms for computing the matrix geometric mean alone has cost $O(\operatorname{poly} d)$ for $d \times d$ matrices~\cite{bini2011note, iannazzo2016geometric}. 

In Theorem~\ref{thm:classicallearning}, we also assumed access to $U_A$, $U_C$. We show below the preparation of a block-encoding of density matrices which are proportional to $A$ and $C$ and how this can be used to compute $d_Y$ efficiently. First consider Lemma~\ref{lmm:purified to block-encoding}, which shows how to create a block-encoding of a density matrix. We first observe that we can define density matrices $\rho_A$ and $\rho_C$ where rewrite 
\begin{align}
   &  \rho_A=\frac{A}{\text{Tr}(A)}, \qquad A=\sum_{k \in \mathcal{S}} \mathcal{N}^2_{\psi_k}|\psi_k\rangle\! \langle \psi_k|, \qquad \text{Tr}(A)=\sum_{k \in \mathcal{S}} \mathcal{N}^2_{\psi_k}, \nonumber \\
   & \rho_C=\frac{C}{\text{Tr}(C)}, \qquad C=\sum_{k \in \mathcal{D}} \mathcal{N}^2_{\psi_k}|\psi_k\rangle\! \langle \psi_k|, \qquad \text{Tr}(C)=\sum_{k \in \mathcal{D}} \mathcal{N}^2_{\psi_k},
\end{align}
where $|\psi_k\rangle=(1/\mathcal{N}_{\psi_k})\sum_{i=1}^d (\vect{x}^{(k)}-\vect{x}'^{(k)})_i|i\rangle$ and $\mathcal{N}^2_{\psi_k}=\sum_{i=1}^d (\vect{x}^{(k)}-\vect{x}'^{(k)})_i^2$ is the corresponding normalisation. Then from Lemma~\ref{lmm:purified to block-encoding}, if we are given unitaries  $V_A$ and $V_C$ that prepare  purifications of $\rho_A$ and $\rho_C$, respectively, it is possible to create $U_{\rho_A}$ and $U_{\rho_C}$ using one query to $V_A$ and $V_C$ respectively and $O(\log d)$ gates.  One such class of  states purifying  $\rho_A$ and $\rho_C$ are
\begin{align}
|\Sigma_A\rangle & \coloneqq \sum_{k \in \mathcal{S}} \sqrt{p^{(A)}_k}|k\rangle |\psi_k\rangle,\\
|\Sigma_C\rangle & \coloneqq  \sum_{k \in \mathcal{D}} \sqrt{p^{(C)}_k}|k\rangle |\psi_k\rangle,
\end{align}
where
\begin{align}
    p_k^{(A)} & \coloneqq \mathcal{N}^2_{\psi_k}/ \sum_{l \in \mathcal{S}} \mathcal{N}_{\psi_l}^2 , \\
    p_k^{(C)} & \coloneqq \mathcal{N}^2_{\psi_k}/\sum_{l \in \mathcal{D}} \mathcal{N}_{\psi_l}^2.
\end{align}
 To prepare $|\Sigma_A\rangle$ and $|\Sigma_C\rangle$ we require the controlled unitaries $V_A=\sum_{k \in \mathcal{S}} |k\rangle\! \langle k| \otimes W_k^{(A)}$ and $V_C=\sum_{k \in \mathcal{D}} |k\rangle\! \langle k| \otimes W_k^{(C)}$ acting on states $\sum_{k \in \mathcal{S}} \sqrt{p_k^{(A)}}|k\rangle |0\rangle$ and $\sum_{k \in \mathcal{D}} \sqrt{p_k^{(A)}}|k\rangle |0\rangle$ respectively. Here $W_k^{(A)}$ and $W_k^{(C)}$ are the state preparation circuits from Lemma~\ref{lem:initialprep} that create $|\psi_k\rangle$ where $k \in \mathcal{S}$ and $k\in \mathcal{D}$, respectively. Since $W_k^{(A)}$ and $W_k^{(C)}$ are known circuits and assuming $\sigma=O(\poly\log d)$, it is similarly efficient and also straightforward to realise $V_A$ and $V_C$. Then from Lemma~\ref{lmm:purified to block-encoding} it is possible to create $(1, O(\log d), 0)$-block-encodings of $\rho_A$ and $\rho_C$ with gate and query complexity $O(\poly\log d)$, denoted $U_{\rho_A}$ and $U_{\rho_C}$, respectively. In the case where $\text{Tr}(A)=1=\text{Tr}(C)$, then this automatically gives us the unitaries $U_A$ and $U_C$ required in Theorem~\ref{thm:classicallearning}. 

For general classical data, $\text{Tr}(A)=1=\text{Tr}(C)$ would not hold in general. However, since $A=\text{Tr}(A) \rho_A$ and $C=\text{Tr}(C) \rho_C$, the proof in Theorem~\ref{thm:classicallearning} holds in the same way if we began with $U_{\rho_A}$ and $U_{\rho_C}$, from which we can create $U_{Y'}$ where $Y' \equiv \rho_A^{-1} \# \rho_C=(\text{Tr}(C)/\text{Tr}(A))^{1/2} Y$. This implies 
\begin{equation}
\langle 0|U_Y |0\rangle \approx Y=(\text{Tr}(A)/\text{Tr}(C))^{1/2} Y' \approx (\text{Tr}(A)/\text{Tr}(C))^{1/2} \langle 0| U_{Y'}|0\rangle.    
\end{equation}
Following through the same proof idea as in Theorem~\ref{thm:classicallearning} allows us to extract $d_Y'$. To recover $d_Y$, we just use $d_Y \approx (\text{Tr}(A)/\text{Tr}(C))^{1/2} d_Y'$. These normalisations can be efficiently recovered by assuming the states $|\psi_k\rangle$ have low sparsity $\sigma_k=O(\poly \log d)$ for each $k$. This also means that the normalisations $\text{Tr}(A)$ and $\text{Tr}(C)$ are efficient to compute. So long as $(\text{Tr}(A)/\text{Tr}(C))^{1/2}$ is $O(\poly \log d)$, then $d_Y$ is efficiently estimable.

\subsubsection{1-class quantum learning} \label{sec:learning-q}

Here we propose a new quantum classification problem that is a 1-class problem. This means that given a quantum state, we only want to know whether this state belongs to a class $\mathcal{A}$ or not. This problem occurs in many areas in machine learning, in particular in anomaly detection, where $\mathcal{A}$ is the class of states that are considered anomalous. Here we can be provided with the following training data:
\begin{align} \label{eq:qlearning}
    & \rho=\frac{1}{N}\sum_{i=1}^N \rho_i, \qquad \rho_i \in \mathcal{A},  \\
    & \sigma=\frac{1}{M} \sum_{i=1}^M \sigma_i, \qquad \sigma_i \notin \mathcal{A},
\end{align}
where $\{\rho_i\}_i$ and $\{\sigma_i\}_i$ are sets of $D$-dimensional states. 
In anomaly detection scenarios, there are usually much fewer examples of anomalous states than `normal' states, so that $N \ll M$. However, we will not focus on subtleties associated with imbalanced training data here. 

Suppose that we have an incoming quantum state $\xi$ and we want to flag this as belonging to the class $\mathcal{A}$ or not. Then it is useful to learn an `observable' or a `witness' $Y$ such that its expectation value $\operatorname{Tr}(Y \xi)$ is large when $\xi$ is flagged as anomalous, belonging to $\mathcal{A}$, but this value is small when $\xi$ is `normal'. Thus we can set up an optimisation problem of the form 
\begin{align}
\label{eq:quantum1class}
       \min_{Y \geq 0} L(Y), \qquad L(Y) \coloneqq  \operatorname{Tr}(Y\sigma)+\operatorname{Tr}(Y^{-1}\rho).
\end{align}
It is sensible in the above to minimise $\operatorname{Tr}(Y^{-1}\rho)$ in $L(Y)$ above since it is simple to show that a small value of $\operatorname{Tr}(Y^{-1}\rho)$ implies a large value of $\operatorname{Tr}(Y \rho)$. Since $\rho$ is a density matrix, it can be shown that $\operatorname{Tr}(Y^{-1}\rho) \geq \operatorname{Tr}(Y\rho)^{-1}$, which is a consequence of the operator Jensen inequality (see~\cite[Eqs.~(29)--(35)]{katariya2021rld}). Thus $\operatorname{Tr}(Y^{-1}\rho) \leq \lambda$ implies $\operatorname{Tr}(Y \rho) \geq 1/\lambda$. 

When $\rho$ and $ \sigma$ are (positive definite) density matrices, the unique solution to Eq.~\eqref{eq:quantum1class} is given by the matrix geometric mean $Y=\sigma^{-1} \# \rho$. We can therefore proceed as before to compute $\operatorname{Tr}(Y \xi)$, except now we do not need to be concerned with state preparation of $\rho$ and $\sigma$, and we can assume that we are given copies of $\rho$ and $\sigma$. Thus, given access to $U_Y$, we can estimate the following expectation:
\begin{align}
    \operatorname{Tr}(Y \xi) \approx 2\kappa_{\sigma} \operatorname{Tr}((\xi \otimes |0\rangle \!\langle 0|) U_Y),
\end{align}
where the $\kappa_{\sigma}$ constant follows from Lemma~\ref{lem:simplericcati} and the error in the above estimate is upper-bounded by $\epsilon$. 
We then have the following result.

\begin{theorem} \label{thm:unweightedquantum}
    Suppose that we are given the block-encodings $U_{\rho}$ and $U_{\sigma}$, where $\rho \in \mathcal{A}$, $\sigma \notin \mathcal{A}$ and that we are also given access to multiple copies of $\xi$. Suppose further that $\kappa_{\rho}, \kappa_{\sigma}=O(\poly\log d)$. Then computing $\operatorname{Tr}(Y\xi)$ for the optimal $Y$ in Eq.~\eqref{eq:quantum1class} to precision~$\epsilon>0$ has a query and gate complexity $O(\operatorname{poly}(\log d, 1/\epsilon))$. 
\end{theorem}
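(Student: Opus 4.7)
The plan is a two-step reduction: first construct a block-encoding of $Y=\sigma^{-1}\#\rho$ from the given block-encodings $U_\sigma$ and $U_\rho$, and then estimate the scalar $\operatorname{Tr}(Y\xi)$ by a Hadamard-test style protocol that consumes single copies of $\xi$. The uniqueness argument for the minimiser of \eqref{eq:quantum1class} is already supplied by Lemma~\ref{lem:lyoptimise} applied with $A\leftarrow\sigma$, $C\leftarrow\rho$, so only the quantum cost analysis needs to be done.

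First, I would apply Lemma~\ref{lem:simplericcati} with the identifications $A\leftarrow\sigma$ and $C\leftarrow\rho$, so that $\kappa_A=\kappa_\sigma$ and $\kappa_C=\kappa_\rho$. For any target accuracy $\epsilon_1\in(0,1/2)$ this produces a $(\kappa_\sigma,O(\log D),\epsilon_1)$-block-encoding $U_Y$ of $Y$ using $\widetilde O(\kappa_\sigma^{2}\kappa_\rho\log^{2}(1/\epsilon_1))$ queries to $U_\sigma$ and $U_\rho$ and a matching number of elementary gates. Under the hypothesis $\kappa_\rho,\kappa_\sigma=O(\poly\log D)$, this subroutine already runs in $O(\poly(\log D,\log(1/\epsilon_1)))$ time.

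Second, I would use the block-encoding guarantee $\|Y-\kappa_\sigma\langle 0|_{a}U_Y|0\rangle_{a}\|\le\epsilon_1$ together with $\|\xi\|_1=1$ and H\"older's inequality to conclude
\[
\bigl|\operatorname{Tr}(Y\xi)-\kappa_\sigma\operatorname{Tr}\bigl((\xi\otimes|0\rangle\!\langle 0|)\,U_Y\bigr)\bigr|\le\epsilon_1.
\]
The remaining trace is then estimated by the Hadamard test (Lemma~\ref{lemma:hadamard-test}), applied to one copy of $\xi\otimes|0\rangle\!\langle 0|$ with the controlled version of $U_Y$; the single-shot outcome is an unbiased estimator of the real part of this trace, which suffices because the target quantity $\operatorname{Tr}(Y\xi)$ is real (as $Y$ is Hermitian and $\xi$ is a state). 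Averaging $O(1/\epsilon_2^{\,2})$ independent shots yields an $\epsilon_2$-accurate estimate with high probability by Hoeffding's inequality.

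Finally, I would choose $\epsilon_1=\epsilon/2$ and $\epsilon_2=\epsilon/(2\kappa_\sigma)$ so that the total additive error is at most $\epsilon$. The dominant cost is the Hadamard-test loop: $O(\kappa_\sigma^{2}/\epsilon^{2})$ invocations of $U_Y$, each of which takes $\widetilde O(\kappa_\sigma^{2}\kappa_\rho\log^{2}(\kappa_\sigma/\epsilon))$ queries to $U_\sigma$, $U_\rho$, and elementary gates. With $\kappa_\rho,\kappa_\sigma=O(\poly\log D)$, both factors collapse to $O(\poly(\log D,1/\epsilon))$, giving the claimed complexity. I do not anticipate any serious obstacle beyond bookkeeping; the only delicate point is balancing the block-encoding precision $\epsilon_1$ against the Hadamard-test precision $\epsilon_2$, while noting that the factor $\kappa_\sigma$ arising from the block-encoding normalisation enters the sample complexity quadratically and must therefore be kept polylogarithmic in $D$ for the advertised efficiency.
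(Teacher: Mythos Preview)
Your proposal is correct and follows essentially the same approach as the paper: invoke Lemma~\ref{lem:simplericcati} with $A\leftarrow\sigma$, $C\leftarrow\rho$ to obtain a $(\kappa_\sigma,O(\log D),\epsilon)$-block-encoding of $Y$, then extract $\operatorname{Tr}(Y\xi)$ via the Hadamard test on $\xi\otimes|0\rangle\!\langle 0|$. The paper's proof is terser and does not spell out the Hoeffding averaging or the explicit error balancing you give, but the structure is identical.
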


\begin{proof}
 From Lemma~\ref{lem:simplericcati} we can construct a $(2\kappa_\sigma, O(\log d), \epsilon)$-block-encoding of $Y$ with gate and query complexity $O(\operatorname{poly}(\kappa_{\rho}, \kappa_{\sigma}, \log (1/\epsilon)))$. Considering that $\kappa_{\rho}, \kappa_{\sigma}=O(\poly\log d)$, applying the unitary $U_Y$ to $\xi \otimes |0\rangle \!\langle 0|$, and using the Hadamard test (Lemma~\ref{lemma:hadamard-test}) with $\xi\otimes |0\rangle \!\langle 0|$, we recover $\operatorname{Tr}(Y \xi)$ to precision $\epsilon$ with query and gate complexity $O(\operatorname{poly}(\log d, 1/\epsilon))$.
\end{proof}

 We emphasise that this problem is entirely quantum in nature as we are given directly only quantum data.

\begin{remark}
    The assumption of $U_\rho$ and $U_\sigma$ as block-encodings of $\rho$ and $\sigma$, respectively, is without loss of generality in practice. 
    There are two quantum input models for quantum states that are commonly employed in quantum algorithms:
    \begin{itemize}
        \item \textbf{Quantum query access model}. In this model, quantum unitary oracles $\mathcal{O}_\rho$ and $\mathcal{O}_\sigma$ are given such that they prepare purifications of $\rho$ and $\sigma$, respectively. 
        By the technique of purified density matrix in \cite{low2019hamiltonian} (see Lemma~\ref{lmm:purified to block-encoding}), we can implement $U_\rho$ and $U_\sigma$ from $\mathcal{O}_\rho$ and $\mathcal{O}_\sigma$ with query and gate complexity $\widetilde O\rbra*{1}$. Therefore, Theorem~\ref{thm:unweightedquantum} can be adapted to the quantum query access model with query and gate complexity $O(\poly(\log d, 1/\epsilon))$.
        \item \textbf{Quantum sample access model}. In this model, independent and identical copies of $\rho$ and $\sigma$ are given. 
        By the technique of density matrix exponentiation \cite{lloyd2014quantum,KLL+17}, we can implement unitary operators that are block-encodings of $\rho$ and $\sigma$ using their copies (which was first noted in \cite{GLM+22} and later investigated in \cite{GP22,WZ23,WZ24}). 
        In this way, Theorem~\ref{thm:unweightedquantum} can be adapted to the quantum sample access model with sample and gate complexity $O(\poly(\log d, 1/\epsilon))$.
    \end{itemize}
\end{remark}

 A very interesting observation to note here is that the matrix geometric mean solution $Y$ to Eq.~\eqref{eq:quantum1class} is precisely the Fuchs--Caves observable~\cite{Fuchs1995}, which is important for distinguishing two states $\rho$ and $\sigma$. From this observation, we can motivate the Fuchs--Caves observable as the observable that gives rise to a kind of `optimal witness' that distinguishes  $\rho$ and $\sigma$ and the value of this `witness' is precisely quantum fidelity, as shown in the next section. This provides an alternative motivation for the form of quantum fidelity between two mixed states from a metric learning viewpoint. In fact, a protocol involving a measurement of the Fuchs--Caves observable also achieves an upper bound on sample complexity for the quantum hypothesis testing problem in distinguishing  $\rho$ and $\sigma$ \cite[Appendix~F]{cheng2024sample}. Thus, up to constant factors, the strategy also minimises the number of copies of each state used for a given tolerated precision in distinguishing the states. We note that the loss function also appears in Eq.~(6) in~\cite{Wat13}, but this is motivated from a different perspective.

\subsubsection{Extension to weighted geometric mean metric learning}

The two terms in the loss function in Eq.~\eqref{eq:qlearning}, involving $\sigma$ and $\rho$ respectively, have equal weights. This means the learning algorithm deems closeness to $\rho$ and farness to $\sigma$ of equal `importance'. However, there are  scenarios, especially in anomaly detection, where asymmetry is preferable. For example, this occurs when there is a higher cost in getting false negatives. 

Modifying Eq.~\eqref{eq:qlearning} by simply multiplying each of the two terms by different constants $\alpha, \beta$ leads to $L(Y)=\alpha \text{Tr}(Y \sigma)+\beta \text{Tr}(Y^{-1} \rho)$. However, this only rescales the optimal solution $Y \rightarrow (\beta/\alpha)^{1/2} Y$ by a constant factor, as observed in \cite{zadeh2016geometric}. A new loss function is therefore necessary for the asymmetric case. 

Following \cite{zadeh2016geometric}, one can first observe that the solution $Y=\sigma^{-1} \# \rho$ is in fact also a solution to the following optimisation problem when $t=1/2$:
\begin{align} \label{eq:weightedL}
    \min_{\tilde{Y} \geq 0} L_t(\tilde{Y}), \qquad L_t(\tilde{Y})\coloneqq(1-t) \delta(\tilde{Y}, \sigma^{-1})+t\delta(\tilde{Y},\rho), \qquad t \in [0,1],
\end{align}
where $\delta$ is the geodesic distance defined in Eq.~\eqref{eq:geodesicdelta}. While the mathematical proof is more involved, this fact can easily be understood from the geometric viewpoint. Here $Y=\sigma^{-1} \# \rho$ can be understood as the midpoint along the unique geodesic in Riemannian space joining $\sigma^{-1}$ and $\rho$. When $t=1/2$, the optimal $\tilde{Y}$ is then the point along this geodesic that simultaneously minimises the distance between $\tilde{Y}$ and $\sigma^{-1}$, as well as $\tilde{Y}$ and $\rho$. This clearly must be the midpoint. Similar geometric reasoning leads one to generalise to $t \neq 1/2$ where the solution to Eq.~\eqref{eq:weightedL} is the weighted matrix geometric mean $\tilde{Y}=\sigma^{-1} \#_t \rho$. That this is the unique solution to Eq.~\eqref{eq:weightedL} is a special case (the $n=2$ case) in~\cite{lawson2011monotonic} and proofs can also be found in~\cite[Chapter~6]{bhatia2009positive}. Also see~\cite{zadeh2016geometric} for a discussion in the context of geometric mean metric learning. 

We can proceed similarly to 1-class quantum learning algorithm with equal weights as described in the previous section. The goal is also to output $\text{Tr}(\tilde{Y} \xi)$ for some input test state $\xi$. Here we require instead the construction of block-encodings of the weighted matrix geometric mean, as given in Lemma~\ref{lem:high-order-riccati}. However, for any $t>0$ ($p>0$ in Lemma~\ref{lem:high-order-riccati}), we see that there is no scaling difference for constructing the block-encoding for the weighted version. Thus the cost, up to constant and logarithmic factors, is identical for the quantum weighted geometric mean metric learning algorithm as for the unweighted version in Theorem~\ref{thm:unweightedquantum}. 

\subsection{Estimation of quantum fidelity and geometric R\'enyi relative entropies}

Here we describe our quantum algorithms for estimating quantum fidelity and geometric R\'enyi relative entropies using our quantum subroutines for preparing block-encodings of the standard and weighted matrix geometric means. 

\subsubsection{Fidelity} \label{sec:fidelity}

The fidelity between two mixed quantum states is defined by~\cite{Uhl76}
\begin{equation}
    F\rbra*{\rho, \sigma} \coloneqq  \Tr \rbra*{\rbra*{\sigma^{1/2} \rho \sigma^{1/2}}^{1/2}},
\end{equation}
which is a commonly considered measure of the closeness of or similarity between two quantum states. 
Estimating the value of fidelity is a fundamental task in quantum information theory.
When given matrix descriptions of the states $\rho$ and $\sigma$, it can be calculated directly using the formula above or as the solution to a semidefinite optimisation problem~\cite{Wat13}. 
Recently, several time-efficient quantum algorithms for fidelity estimation have been developed when one has access to state-preparation circuits of $\rho$ and $\sigma$~\cite{WZC+23,GP22,WGL+22}.

Here, we introduce a new approach for fidelity estimation that is based on the Fuchs--Caves observable \cite{Fuchs1995}.
For two quantum states $\rho$ and $\sigma$, this observable is given by $M = \sigma^{-1} \# \rho$. 
Then, the fidelity between $\rho$ and $\sigma$ can be represented as the expectation of $M$ with respect to $\sigma$ (cf.\ \cite[Eq.~(9.159)]{wilde2017quantum}):
\begin{equation} \label{eq:fidelity}
    F\rbra*{\rho, \sigma} = \Tr\rbra*{M\sigma}.
\end{equation}

\begin{theorem} [Fidelity estimation via Fuchs--Caves observable] \label{thm:fidelity-estimation}
    Suppose that $\mathcal{O}_\rho$ and $\mathcal{O}_\sigma$ prepare purifications of mixed quantum states $\rho$ and $\sigma$, respectively. 
    Then, we can estimate $F\rbra*{\rho, \sigma}$ within additive error $\epsilon$ using 
    $\widetilde O\rbra*{\min\cbra{\kappa_\rho^2, \kappa_\sigma^2} \cdot\kappa_\rho \kappa_\sigma/\epsilon  }$
    queries to $\mathcal{O}_\rho$ and $\mathcal{O}_\sigma$, where $\kappa_\rho,\kappa_\sigma>0$ are such that $\rho \geq I/\kappa_\rho$ and $\sigma \geq I/\kappa_\sigma$.
\end{theorem}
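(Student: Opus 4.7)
The plan is to rewrite $F(\rho,\sigma)$ via the Fuchs--Caves identity $F(\rho,\sigma)=\Tr(M\sigma)$ of Eq.~\eqref{eq:fidelity}, with $M=\sigma^{-1}\#\rho$, and estimate this trace by combining the matrix-geometric-mean block-encoding of Lemma~\ref{lem:simplericcati} with a Hadamard test. First I would apply Lemma~\ref{lmm:purified to block-encoding} to the purification oracles $\mathcal{O}_\rho$ and $\mathcal{O}_\sigma$ to obtain $(1,O(\log D),0)$-block-encodings $U_\rho$ and $U_\sigma$ at the cost of one oracle call and $O(\log D)$ gates each. Invoking Lemma~\ref{lem:simplericcati} with $A\leftarrow\sigma$ and $C\leftarrow\rho$ then produces a $(\kappa_\sigma,O(\log D),\epsilon')$-block-encoding $U_M$ of $M$ at a total cost of $\widetilde O(\kappa_\sigma^2\kappa_\rho\log^2(1/\epsilon'))$ queries to $\mathcal{O}_\rho,\mathcal{O}_\sigma$.

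Next I would extract $\Tr(M\sigma)$ from $U_M$ via the identity
\begin{equation}
\Tr\bigl((\langle 0|_a U_M|0\rangle_a)\sigma\bigr)=\Tr\bigl(U_M(\sigma\otimes|0\rangle\!\langle 0|_a)\bigr),
\end{equation}
which is precisely the target of a Hadamard test (Lemma~\ref{lemma:hadamard-test}) applied to $U_M$ with the mixed input $\sigma\otimes|0\rangle\!\langle 0|_a$, where a fresh copy of $\sigma$ is obtained at constant cost from $\mathcal{O}_\sigma$. Since $F(\rho,\sigma)$ is recovered only after multiplying the Hadamard estimate by $\kappa_\sigma$, the test must reach precision $\epsilon/\kappa_\sigma$; amplitude estimation layered on top of the Hadamard test accomplishes this with $\widetilde O(\kappa_\sigma/\epsilon)$ repetitions. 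Each repetition consumes one call to $U_M$, so the total cost is $\widetilde O(\kappa_\sigma^3\kappa_\rho/\epsilon)$ queries, provided that $\epsilon'$ is set to $\Theta(\epsilon)$ so that the block-encoding error $\|M-\kappa_\sigma\langle 0|_a U_M|0\rangle_a\|\leq\epsilon'$, transmitted through $|\Tr((M-\widetilde M)\sigma)|\leq\|M-\widetilde M\|\cdot\|\sigma\|_1$, is absorbed into the final additive-$\epsilon$ budget.

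The minimum in the theorem appears from the symmetric alternative $F(\rho,\sigma)=\Tr((\rho^{-1}\#\sigma)\rho)$; repeating the whole analysis with $\rho$ and $\sigma$ exchanged yields a complementary cost of $\widetilde O(\kappa_\rho^3\kappa_\sigma/\epsilon)$. Choosing the cheaper of the two variants according to which condition number is smaller produces $\widetilde O(\min\{\kappa_\rho^2,\kappa_\sigma^2\}\cdot\kappa_\rho\kappa_\sigma/\epsilon)$, matching the claim.

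The main obstacle is the error bookkeeping around the scaling factor. Because the block-encoding of the Fuchs--Caves observable represents only $M/\kappa_\sigma$ in the $\langle 0|_a\cdot|0\rangle_a$ corner, every bit of precision demanded in $F$ translates into a $\kappa_\sigma$-fold tighter precision inside the Hadamard test, which must be paid for both by the $\kappa_\sigma$-factor amplitude-estimation overhead and by a simultaneous tightening of $\epsilon'$ in Lemma~\ref{lem:simplericcati}; the challenge is arranging both without inflating the overall query count beyond the claimed polylogarithmic slack. Everything else --- the algebraic identity for $F$, the conversion of purification oracles via Lemma~\ref{lmm:purified to block-encoding}, and the Hadamard test itself --- is routine.
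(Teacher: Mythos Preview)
Your proposal is correct and follows essentially the same approach as the paper: convert purification oracles to block-encodings via Lemma~\ref{lmm:purified to block-encoding}, build the Fuchs--Caves observable $M=\sigma^{-1}\#\rho$ with Lemma~\ref{lem:simplericcati}, estimate $\Tr(M\sigma)$ by a Hadamard test on $U_M$ with input $\sigma\otimes|0\rangle\!\langle 0|$, boost precision via amplitude estimation at cost $O(\kappa_\sigma/\epsilon)$, and then symmetrise over $\rho\leftrightarrow\sigma$ to obtain the $\min$. The only cosmetic difference is that the paper sets the block-encoding precision to $\delta=\Theta(\epsilon/\kappa_\sigma)$ rather than your $\epsilon'=\Theta(\epsilon)$; since the dependence on this parameter is only polylogarithmic, both choices land inside the stated $\widetilde O$ bound.
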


\begin{proof}
    Suppose that $\rho$ and $\sigma$ are $n$-qubit mixed quantum states and $\mathcal{O}_\rho$ and $\mathcal{O}_\sigma$ are $\rbra*{n+a}$-qubit unitary operators. 
    By Lemma~\ref{lmm:purified to block-encoding}, we can implement two unitary operators $U_{\rho}$ and $U_{\sigma}$ that are $\rbra*{1, n+a, 0}$-block-encodings of $\rho$ and $\sigma$ using $O\rbra*{1}$ queries to $\mathcal{O}_\rho$ and $\mathcal{O}_\sigma$, respectively. 
    Then, by applying Lemma~\ref{lem:simplericcati}, we can implement a $\rbra*{2\kappa_{\sigma}, b, \delta}$-block-encoding $U_M$ of $M = \sigma^{-1} \# \rho$, using $\widetilde O\rbra*{\kappa_\sigma \kappa_\rho \log^2\rbra*{1/\delta}}$ queries to $U_{\rho}$ and $\widetilde O\rbra*{\kappa_\sigma^2 \kappa_\rho \log^3\rbra*{1/\delta}}$ queries to $U_{\sigma}$, where $b = O\rbra*{n+a}$, and $\kappa_\rho$ and $\kappa_\sigma$ satisfy $\rho \geq I/\kappa_\rho$ and $\sigma \geq I/\kappa_\sigma$.

    By the Hadamard test (given in Lemma~\ref{lemma:hadamard-test}), there is a quantum circuit $C$ that outputs $0$ with probability $\frac{1}{2}\rbra*{ 1+\Re{\Tr\rbra{\bra{0}_b U_M \ket{0}_b \sigma}} }$, using one query to $U_M$ and one sample of~$\sigma$.
    By noting that
    \begin{equation}
        \abs*{ 2\kappa_\sigma \Re{\Tr\rbra{\bra{0}_b U_M \ket{0}_b \sigma}} - \Tr\rbra{M\sigma} } \leq \Theta\rbra{\delta},
    \end{equation}
    we conclude that an $O\rbra{\epsilon/\kappa_\sigma}$-estimate of $\Re{\Tr\rbra{\bra{0}_b U_M \ket{0}_b \sigma}}$ with $\delta = \Theta\rbra{\epsilon/\kappa_\sigma}$ suffices to obtain an $\epsilon$-estimate of $\Tr\rbra{M\sigma}$ (which is the fidelity according to Eq.~\eqref{eq:fidelity}). 
    By quantum amplitude estimation (given in Lemma~\ref{lemma:amp-estimation}), this can be done using $O\rbra{\kappa_\sigma/\epsilon}$ queries to $C$. 

    In summary, an $\epsilon$-estimate of $F\rbra{\rho, \sigma}$ can be obtained by using $\widetilde O\rbra{\kappa_\sigma^3 \kappa_\rho/\epsilon}$ queries to $\mathcal{O}_\sigma$ and $\widetilde O\rbra{\kappa_\sigma^2 \kappa_\rho/\epsilon}$ queries to $\mathcal{O}_\rho$.
    The proof is completed by taking the minimum over symmetric cases (i.e., simply flipping the role of $\rho$ and $\sigma$ since the fidelity formula is symmetric under this exchange).
\end{proof}

The current best quantum query complexity of fidelity estimation is $\widetilde O\rbra{r^{2.5}/\epsilon^5}$, due to~\cite{GP22}, where $r$ is the lower rank of the two input mixed quantum states.
In comparison, our quantum algorithm for fidelity estimation based on the Fuchs--Caves observable, as given in Theorem~\ref{thm:fidelity-estimation}, has a better dependence on the additive error $\epsilon$, if $\kappa_\rho$ and $\kappa_\sigma$ are known in advance.

Moreover, we note that the $\epsilon$-dependence of the quantum algorithm given in Theorem~\ref{thm:fidelity-estimation} is optimal (up to polylogarithmic factors), as stated in Lemma~\ref{lemma:fidelity-lower-bound} below.

\begin{lemma} [Optimal $\epsilon$-dependence of fidelity estimation] \label{lemma:fidelity-lower-bound} 
    Suppose that $\mathcal{O}_\rho$ and $\mathcal{O}_\sigma$ prepare purifications of mixed quantum states $\rho$ and $\sigma$, respectively, satisfying $\rho \geq I/\kappa_\rho$ and $\sigma \geq I/\kappa_\sigma$ for $\kappa_\rho, \kappa_\sigma >0$. 
    Then, every quantum algorithm that estimates $F\rbra{\rho, \sigma}$ within additive error~$\epsilon$ requires query complexity $\Omega\rbra{1/\epsilon}$ even if $\kappa_\rho = \kappa_\sigma = \Theta\rbra{1}$.
\end{lemma}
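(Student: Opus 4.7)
The plan is to reduce a well-known hard problem -- estimating an amplitude (equivalently, a Bernoulli bias) to additive precision $\epsilon$ -- to fidelity estimation, using a single-qubit family of states for which both $\kappa_\rho$ and $\kappa_\sigma$ are absolute constants. Concretely, I would take $\rho = I/2$ (the maximally mixed qubit, for which $\kappa_\rho = 2$ and a purifying oracle $\mathcal O_\rho$ is trivially implementable with one Hadamard and a CNOT) and, for a parameter $p$ in a small neighbourhood of $p_0 = 1/4$, define the diagonal state
\begin{equation}
    \sigma_p \coloneqq p \ket 0 \!\bra 0 + (1-p) \ket 1 \!\bra 1 .
\end{equation}
Restricting attention to $p \in [1/5,\, 3/10]$, say, one has $\min(p, 1-p) \geq 1/5$, so $\kappa_{\sigma_p} \leq 5 = \Theta(1)$ uniformly across the family. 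A natural purifying oracle is the two-qubit unitary $\mathcal O_{\sigma_p}\colon \ket 0 \ket 0 \mapsto \sqrt{p}\,\ket 0 \ket 0 + \sqrt{1-p}\,\ket 1 \ket 1$, whereas $\mathcal O_\rho$ is fixed and $p$-independent; hence any would-be fidelity estimator that uses $T$ oracle queries translates to an algorithm that makes at most $T$ queries to the $p$-dependent unitary $\mathcal O_{\sigma_p}$.

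Next I would show that fidelity is a bi-Lipschitz function of $p$ on this interval, so that an $\epsilon$-additive estimate of $F(\rho,\sigma_p)$ yields an $O(\epsilon)$-additive estimate of $p$. Since $\rho$ and $\sigma_p$ commute,
\begin{equation}
    F(\rho, \sigma_p) = \Tr\!\sqrt{\rho\, \sigma_p} = \tfrac{1}{\sqrt 2}\!\left(\sqrt{p} + \sqrt{1-p}\right),
\end{equation}
which has derivative $\frac{1}{2\sqrt{2}}\!\left(p^{-1/2} - (1-p)^{-1/2}\right)$, bounded away from zero in absolute value on the chosen interval. Thus there exist constants $c_1, c_2 > 0$ such that for all $p_1, p_2$ in that interval,
\begin{equation}
    c_1\, \abs{p_1 - p_2} \leq \abs{F(\rho,\sigma_{p_1}) - F(\rho,\sigma_{p_2})} \leq c_2\, \abs{p_1 - p_2}.
\end{equation}
Consequently, any quantum algorithm that outputs an $\epsilon$-additive estimate of $F(\rho,\sigma_p)$ (with constant success probability, say $2/3$) can be post-processed by inverting $F$ to yield an $\epsilon/c_1$-additive estimate of the unknown parameter $p$.

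The remaining and main step is to invoke the quantum query lower bound for estimating $p$ from oracle access to $\mathcal O_{\sigma_p}\colon \ket 0\ket 0 \mapsto \sqrt{p}\ket{0}\ket{0} + \sqrt{1-p}\ket{1}\ket{1}$. This is the standard amplitude/probability-estimation lower bound: distinguishing two amplitudes $p_1, p_2 \in [1/5, 3/10]$ with $\abs{p_1 - p_2} = \Theta(\epsilon)$ requires $\Omega(1/\epsilon)$ queries to such an oracle, a tight matching lower bound for the Brassard--H\o yer--Mosca--Tapp upper bound due to Nayak--Wu (and subsequently sharpened by e.g.\ Belovs and Rosmanis). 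Chaining this with the bi-Lipschitz reduction above yields the claimed $\Omega(1/\epsilon)$ query lower bound for fidelity estimation in the $\kappa_\rho = \kappa_\sigma = \Theta(1)$ regime. The principal obstacle is simply to cite (or re-derive via the polynomial method or adversary bound) the amplitude-estimation lower bound in the precise oracle model of Definition~\ref{thm:fidelity-estimation}, where only purification-preparing unitaries and their controlled inverses are available; this translation is by now standard and introduces no new technical difficulty.
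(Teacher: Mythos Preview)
Your proposal is correct and follows essentially the same route as the paper's proof: fix one state of constant condition number, let the other vary over a one-parameter diagonal family with constant condition number, observe that the fidelity is bi-Lipschitz in the parameter, and then invoke the $\Omega(1/\epsilon)$ query lower bound for amplitude/probability estimation (the paper cites Belovs' Hellinger-distance version, Lemma~\ref{lemma:q-dis-prob-distri}, in exactly the oracle model you describe). The only differences are cosmetic choices of the fixed state ($I/2$ for you versus $\eta=\diag(1/4,3/4)$ in the paper) and framing (continuous estimation of $p$ versus distinguishing two specific biases $\epsilon$ and $2\epsilon$); neither changes the substance of the argument.
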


\begin{proof}
    See Appendix~\ref{app:fidelity-lower-bound}.
\end{proof}

In addition to the optimal $\epsilon$-dependence of fidelity estimation in Lemma \ref{lemma:fidelity-lower-bound}, a quantum query algorithm for estimating the fidelity $F\rbra{\ket{\psi}, \ket{\varphi}}$ between pure states with query complexity in $\Theta\rbra{1/\epsilon}$ was given in \cite{Wan24}. 

\begin{remark}[Sample complexity for fidelity estimation] \label{remark:sample-fidelity-estimation}
    Using the method in Theorem~\ref{thm:fidelity-estimation}, we can also estimate the fidelity by using only samples of quantum states, which is achieved by density matrix exponentiation \cite{lloyd2014quantum,lloyd2014quantum,KLL+17,GP22,WZ24,GKP+24}. 
    As analysed in Appendix~\ref{app:fidelity-estimation-sample}, the sample complexity for fidelity estimation is shown to be $\widetilde{O}\rbra{\min\cbra{\kappa_\rho^5, \kappa_\sigma^5} \cdot \kappa_\rho^2 \kappa_\sigma^2/\epsilon^3} = \poly\rbra{\kappa_\rho, \kappa_\rho} \cdot \widetilde{O}\rbra{1/\epsilon^3}$.
    The prior known sample complexity for fidelity estimation is $\widetilde{O}\rbra{r^{5.5}/\epsilon^{12}}$ due to \cite{GP22}, where $r$ is the lower rank of the two input mixed quantum states.
    
    We also show a sample lower bound $\Omega\rbra{1/\epsilon^2}$ for fidelity estimation in Appendix~\ref{app:fidelity-estimation-sample} even if $\kappa_\rho = \kappa_\sigma = \Theta\rbra{1}$ using the method in the proof of Lemma~\ref{lemma:fidelity-lower-bound}; this can be seen as an analog of the sample lower bound $\Omega\rbra{1/\epsilon^2}$ for pure-state fidelity estimation in \cite{ALL22}.
    In addition, a sample lower bound $\Omega\rbra{r/\epsilon}$ for (low-rank) fidelity estimation is implied in \cite{OW21,BOW19}.

    Currently, quantum algorithms for fidelity estimation with optimal sample complexity are only known for pure states. 
    For estimating the squared fidelity $F^2\rbra{\ket{\psi}, \ket{\varphi}}$, 
    the sample complexity $\Theta\rbra{1/\epsilon^2}$ can be achieved by the SWAP test \cite{buhrman2001quantum}.
    In \cite{ALL22}, they showed that $\Theta\rbra{\max\cbra{1/\epsilon^2, \sqrt{d}/\epsilon}}$ samples are sufficient and necessary to estimate $F^2\rbra{\ket{\psi}, \ket{\varphi}}$ when only single-copy measurements are allowed. Recently in \cite{WZ24b}, the sample complexity of estimating the fidelity $F\rbra{\ket{\psi}, \ket{\varphi}}$ was shown to be $\Theta\rbra{1/\epsilon^2}$.
\end{remark}

\subsubsection{Geometric fidelity and geometric R\'enyi relative entropy} \label{sec:renyi}

Here we present, to the best of our knowledge, the first quantum algorithm for computing the geometric $\alpha$-R\'enyi relative entropy, as introduced in \cite{Matsumoto2018}. For $\alpha \in (0,1)\cup (1, 2]$, the geometric $\alpha$-R\'enyi relative entropy is defined as (see, e.g., \cite[Eq.~(9)]{fang2021GeometricRenyiDivergence} and \cite[Eq.~(7.6.1)]{khatri2024principles}) 
\begin{equation}
    \widehat{D}_{\alpha}\rbra*{\rho \Vert \sigma} \coloneqq \frac{1}{\alpha-1} \log \widehat{F}_\alpha\rbra*{\rho, \sigma},
\end{equation}
where
\begin{equation} \label{eq:def-hat-F-alpha}
    \widehat{F}_\alpha\rbra*{\rho, \sigma} \coloneqq  \Tr\rbra*{\rho \#_{1-\alpha} \sigma} = \Tr\rbra*{\sigma \#_\alpha \rho}
\end{equation}
is known as the geometric $\alpha$-R\'enyi relative quasi-entropy. When $\alpha \in (0,1)$, we also refer to $\widehat{F}_{\alpha}\rbra*{\rho, \sigma}$ as the geometric $\alpha$-fidelity. 
In particular, for the case of $\alpha = 1/2$, the quantity $\widehat{F}_{1/2}\rbra*{\rho, \sigma}$ is the geometric fidelity (also known as the Matsumoto fidelity)~\cite{Mat10,cree2020fidelity}. 
The $\alpha$-geometric R\'enyi relative entropy has several uses in quantum information theory, especially in analysing protocols involving feedback~\cite{fang2021GeometricRenyiDivergence,katariya2021geometric,DKQSWW23}. 

Here we present quantum algorithms in Theorems~\ref{thm:hat-F-alpha} and~\ref{thm:g-Renyi-relative-entropy} for computing the geometric R\'enyi relative (quasi-)entropy.

\begin{theorem} \label{thm:hat-F-alpha}
    Suppose that $\mathcal{O}_\rho$ and $\mathcal{O}_\sigma$ prepare purifications of mixed quantum states $\rho$ and $\sigma$, respectively. 
    Then, for $\alpha \in \rbra*{0, 1} \cup (1, 2]$, we can estimate $\widehat{F}_\alpha\rbra*{\rho, \sigma}$ to within additive error~$\epsilon$ using 
    \begin{itemize}
        \item 
    $\widetilde O\rbra{\min\cbra{\kappa_\rho, \kappa_\sigma}^{\min\cbra{1+\alpha, 2-\alpha}}\cdot \kappa_\rho\kappa_\sigma/\epsilon }$
    queries for $\alpha \in (0, 1)$, and
    \item 
    $\widetilde O\rbra{  \min\cbra{\kappa_\rho\kappa_\sigma^{\alpha-1}, \kappa_\rho^{\alpha-1}\kappa_\sigma, \kappa_{\rho}^{1+\alpha}, \kappa_\sigma^{1+\alpha}}\cdot \kappa_\rho\kappa_\sigma/\epsilon}$
    queries for $\alpha \in (1, 2]$
    \end{itemize}
    to $\mathcal{O}_\rho$ and $\mathcal{O}_\sigma$, where $\kappa_\rho,\kappa_\sigma > 0 $ satisfy $\rho \geq I/\kappa_\rho$ and $\sigma \geq I/\kappa_\sigma$.

    In particular, when $\alpha = 1/2$, $\widehat{F}_{1/2}\rbra{\rho, \sigma}$ is the geometric fidelity (also known as the Matsumoto fidelity), which can be estimated using $\widetilde O\rbra{ \min\cbra{\kappa_\rho, \kappa_\sigma}^{3/2}\cdot \kappa_\rho \kappa_\sigma/\epsilon}$ queries to $\mathcal{O}_\rho$ and~$\mathcal{O}_\sigma$.
\end{theorem}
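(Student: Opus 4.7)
The plan is to mirror the proof of Theorem~\ref{thm:fidelity-estimation}, but with the standard matrix geometric mean replaced by the weighted version: I would build a block-encoding of a suitable positive semidefinite observable using the subroutine of Lemma~\ref{lem:mgm-1} (or its intermediate steps), then estimate the corresponding expectation value via a Hadamard test combined with amplitude estimation. The starting identity is
\begin{align}
\hat F_\alpha(\rho,\sigma) = \Tr(\sigma \#_\alpha \rho) = \Tr\bigl((\sigma^{-1/2}\rho\sigma^{-1/2})^\alpha\,\sigma\bigr),
\end{align}
together with the symmetric counterpart $\hat F_\alpha(\rho,\sigma) = \Tr(\rho \#_{1-\alpha}\sigma) = \Tr((\rho^{-1/2}\sigma\rho^{-1/2})^{1-\alpha}\,\rho)$ obtained from the identity $\sigma \#_\alpha \rho = \rho \#_{1-\alpha} \sigma$. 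Each form expresses $\hat F_\alpha$ as the expectation of a power-law observable $N$ with respect to one of the input states $\tau \in \{\sigma, \rho\}$, which avoids the dimension-dependent trace estimation that would otherwise arise if one tried to block-encode the full $\#$-product and extract its trace directly.

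The first substantive step is to implement a block-encoding $U_N$ of the observable $N \in \{(\sigma^{-1/2}\rho\sigma^{-1/2})^\alpha,(\rho^{-1/2}\sigma\rho^{-1/2})^{1-\alpha}\}$ with some sub-normalisation $\lambda$, following the same pipeline as in the proof sketch of Lemma~\ref{lem:mgm-1}: construct the inverse square root via negative-power QSVT (Lemma~\ref{lmm:poly negative power}), take the product using Lemma~\ref{lmm:product of block-encoding}, and apply a power-function QSVT (Lemma~\ref{lmm:poly positive power}) to realise the exponent. For $\alpha \in (0,1)$ the positive-power polynomial applies directly, while for $\alpha \in (1,2]$ I would handle the exponent either via the $p<0$ branch of Lemma~\ref{lem:mgm-1} (selecting $p = 1/(1-\alpha) < 0$ in the symmetric form) or via the factorisation $x^\alpha = x \cdot x^{\alpha-1}$ with $\alpha-1 \in (0,1]$, followed by a block-encoding product. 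The four bounds inside the minimum in the theorem correspond to the four combinations arising from (swap $\rho \leftrightarrow \sigma$) $\times$ (direct positive-power vs.\ reduced-power construction).

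Given $U_N$, the Hadamard test (Lemma~\ref{lemma:hadamard-test}) applied with the purification oracle for $\tau$ yields a Bernoulli variable whose bias equals $\Re\,\Tr(\langle 0|U_N|0\rangle\,\tau)$; since $\|N-\lambda\langle 0|U_N|0\rangle\|\le \delta$, choosing $\delta = \Theta(\epsilon/\lambda)$ and invoking amplitude estimation (Lemma~\ref{lemma:amp-estimation}) with $O(\lambda/\epsilon)$ rounds produces an $\epsilon$-estimate of $\hat F_\alpha(\rho,\sigma)$. Multiplying $\lambda/\epsilon$ by the per-round query cost extracted from Lemma~\ref{lem:mgm-1}, namely $\widetilde O(\kappa_A^2\kappa_C)$ queries to $U_A$ and $\widetilde O(\kappa_A\kappa_C)$ queries to $U_C$ (where $(A,C)$ parametrises the chosen form), yields the cost of each candidate strategy, and the stated bound is the minimum over them. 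The Matsumoto-fidelity special case $\alpha=1/2$ reduces to the standard matrix geometric mean (Lemma~\ref{lem:simplericcati}) and the four-way minimum collapses to the symmetric $\widetilde O(\min\{\kappa_\rho,\kappa_\sigma\}^{3/2}\,\kappa_\rho\kappa_\sigma/\epsilon)$ bound.

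The main obstacle I anticipate is the careful bookkeeping of sub-normalisation factors and query complexities across the four strategies, particularly for $\alpha \in (1,2]$ where a naive positive-power QSVT on a matrix of norm as large as $\kappa^\alpha$ is suboptimal and either the negative-$p$ formulation or the $x^\alpha = x \cdot x^{\alpha-1}$ factorisation is required to realise the $\kappa^{1+\alpha}$-type bounds. Verifying that the four individual costs combine into the clean $\min\{\kappa_\rho,\kappa_\sigma\}^{\min\{1+\alpha,2-\alpha\}}$ expression for $\alpha \in (0,1)$ is then a case analysis on which of $\kappa_\rho,\kappa_\sigma$ is smaller and on whether $\alpha \lessgtr 1/2$.
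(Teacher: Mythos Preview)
Your proposal is correct and follows essentially the same route as the paper: both use the two trace identities $\hat F_\alpha=\Tr\bigl((\rho^{-1/2}\sigma\rho^{-1/2})^{1-\alpha}\rho\bigr)=\Tr\bigl((\sigma^{-1/2}\rho\sigma^{-1/2})^{\alpha}\sigma\bigr)$, block-encode the power observable via the intermediate step of the weighted geometric mean construction (Lemma~\ref{lmm:ACA-p-blk-enc} in the paper, which is precisely the ``intermediate steps'' of Lemma~\ref{lem:mgm-1} you allude to), apply the Hadamard test with amplitude estimation, and minimise over the available variants. The only minor difference is that for $\alpha\in(1,2]$ the paper handles the super-unit exponent exclusively through the negative-$p$ branch (Lemma~\ref{lmm:poly negative power}), whereas you additionally propose the factorisation $x^{\alpha}=x\cdot x^{\alpha-1}$; both are valid and lead to the same complexity bounds.
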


\begin{proof} See Appendix~\ref{app:prooftheorem18}. 
\end{proof}

\begin{theorem} \label{thm:g-Renyi-relative-entropy}
    Suppose that $\mathcal{O}_\rho$ and $\mathcal{O}_\sigma$ prepare purifications of mixed quantum states $\rho$ and~$\sigma$, respectively. 
    Then, for $\alpha \in \rbra{0, 1} \cup (1, 2]$, we can estimate $\widehat{D}_{\alpha}\rbra{\rho \Vert \sigma}$ within additive error~$\epsilon$ using 
    \begin{itemize}
        \item $\widetilde O\rbra{  \min\cbra{\kappa_\rho, \kappa_\sigma}^{\min\cbra{1+\alpha, 2-\alpha}}\cdot \kappa_\rho\kappa_\sigma^{2-\alpha}/\epsilon}$ queries for $\alpha \in (0, 1)$, and
        \item $\widetilde O\rbra{  \min\cbra{\kappa_\rho\kappa_\sigma^{\alpha-1}, \kappa_\rho^{\alpha-1}\kappa_\sigma, \kappa_{\rho}^{1+\alpha}, \kappa_\sigma^{1+\alpha}}\cdot \kappa_\rho^\alpha\kappa_\sigma/\epsilon}$ queries for $\alpha \in (1, 2]$
    \end{itemize}
    to $\mathcal{O}_\rho$ and $\mathcal{O}_\sigma$, where $\kappa_\rho,\kappa_\sigma > 0 $ satisfy $\rho \geq I/\kappa_\rho$ and $\sigma \geq I/\kappa_\sigma$.
\end{theorem}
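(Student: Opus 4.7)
The plan is to estimate $\hat F_\alpha(\rho,\sigma)$ to an additive precision $\delta$ via Theorem~\ref{thm:hat-F-alpha} and then recover $\hat D_\alpha(\rho\|\sigma)$ by dividing $\log\hat F_\alpha$ by $\alpha-1$. Since the logarithm amplifies errors in inverse proportion to its argument, the key quantitative step is to pick $\delta$ small relative to some lower bound $L>0$ on $\hat F_\alpha(\rho,\sigma)$. Concretely, whenever $\lvert\tilde F-\hat F_\alpha\rvert\leq \delta\leq L/2$, the mean value theorem for $\log$ gives $\lvert\log\tilde F-\log\hat F_\alpha\rvert\leq 2\delta/L$, so choosing $\delta=\Theta(\lvert\alpha-1\rvert\cdot L\cdot\epsilon)$ suffices to produce an $\epsilon$-estimate of $\hat D_\alpha$; for fixed $\alpha$ the $\lvert\alpha-1\rvert$ factor is an $\alpha$-dependent constant and may be absorbed into $\widetilde O(\cdot)$.

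The crux is choosing $L$ tightly enough that the resulting exponents match the theorem statement. I would rely on the form $\hat F_\alpha(\rho,\sigma)=\Tr(\rho\#_{1-\alpha}\sigma)=\Tr\rbra{\rho^{1/2}(\rho^{-1/2}\sigma\rho^{-1/2})^{1-\alpha}\rho^{1/2}}$ from Eq.~\eqref{eq:def-hat-F-alpha}. For $\alpha\in(0,1)$, the assumptions $\rho\leq I$ and $\sigma\geq I/\kappa_\sigma$ give $\rho^{-1/2}\sigma\rho^{-1/2}\geq I/\kappa_\sigma$, and operator monotonicity of $x\mapsto x^{1-\alpha}$ on $[0,\infty)$ then yields $(\rho^{-1/2}\sigma\rho^{-1/2})^{1-\alpha}\geq \kappa_\sigma^{-(1-\alpha)}I$; sandwiching by $\rho^{1/2}$ and taking the trace gives $L=\kappa_\sigma^{-(1-\alpha)}$. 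For $\alpha\in(1,2]$, I would convert the negative power via $(\rho^{-1/2}\sigma\rho^{-1/2})^{1-\alpha}=(\rho^{1/2}\sigma^{-1}\rho^{1/2})^{\alpha-1}$; since $\sigma\leq I$ and $\rho\geq I/\kappa_\rho$ imply $\rho^{1/2}\sigma^{-1}\rho^{1/2}\geq \rho\geq I/\kappa_\rho$, operator monotonicity of $x\mapsto x^{\alpha-1}$ on $[0,\infty)$ (valid for $\alpha-1\in(0,1]$) yields $L=\kappa_\rho^{-(\alpha-1)}$.

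Substituting $\delta=\Theta(L\epsilon)$ into the bound of Theorem~\ref{thm:hat-F-alpha} turns its $\kappa_\rho\kappa_\sigma/\epsilon$ factor into $\kappa_\rho\kappa_\sigma/(L\epsilon)$, which evaluates to $\kappa_\rho\kappa_\sigma^{2-\alpha}/\epsilon$ in the $\alpha\in(0,1)$ case and to $\kappa_\rho^{\alpha}\kappa_\sigma/\epsilon$ in the $\alpha\in(1,2]$ case, while the $\min\{\dots\}$ prefactor of Theorem~\ref{thm:hat-F-alpha} is unaffected. This reproduces exactly the two query-complexity expressions claimed in Theorem~\ref{thm:g-Renyi-relative-entropy}.

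The main obstacle is obtaining a sharp enough $L$. Using instead the alternative form $\Tr(\sigma\#_\alpha\rho)$ together with the crude eigenvalue bound $\sigma^{-1/2}\rho\sigma^{-1/2}\geq I/\kappa_\rho$ would only yield $L=\kappa_\rho^{-\alpha}$, which for $\alpha\in(1,2]$ inflates the exponent on $\kappa_\rho$ from $\alpha$ to $1+\alpha$ and misses the claimed bound. The specific step of rewriting the negative power $(\rho^{-1/2}\sigma\rho^{-1/2})^{1-\alpha}$ as a positive power of the inverse, so that Loewner--Heinz operator monotonicity of $x^{\alpha-1}$ is applicable on the shifted expression, is exactly what realigns the exponents with the theorem statement.
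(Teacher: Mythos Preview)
Your proposal is correct and follows essentially the same approach as the paper: both derive a lower bound $L$ on $\hat F_\alpha(\rho,\sigma)$ (namely $L=\kappa_\sigma^{\alpha-1}$ for $\alpha\in(0,1)$ and $L=\kappa_\rho^{1-\alpha}$ for $\alpha\in(1,2]$), then reduce an $\epsilon$-additive estimate of $\hat D_\alpha$ to a $\Theta(L\epsilon)$-additive estimate of $\hat F_\alpha$ and invoke Theorem~\ref{thm:hat-F-alpha}. The only cosmetic difference is that for $\alpha\in(1,2]$ the paper bounds $(\rho^{-1/2}\sigma\rho^{-1/2})^{1-\alpha}$ directly via the order-reversing property of $x\mapsto x^{1-\alpha}$ on the upper eigenvalue bound $\rho^{-1/2}\sigma\rho^{-1/2}\leq\kappa_\rho I$, whereas you first rewrite it as $(\rho^{1/2}\sigma^{-1}\rho^{1/2})^{\alpha-1}$ and apply Loewner--Heinz; the resulting bound is identical.
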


\begin{proof} See Appendix~\ref{app:prooftheorem18}. 
\end{proof}

Notably, we show that our quantum algorithm for estimating the geometric fidelity $\widehat{F}_{1/2}\rbra{\rho, \sigma}$ achieves an optimal $\epsilon$-dependence. 
The optimality also holds for $\widehat{F}_{\alpha}\rbra{\rho, \sigma}$ with $\alpha \in (0, 1)$.  

\begin{lemma} [Optimal $\epsilon$-dependence of geometric $\alpha$-fidelity estimation] \label{lemma:optimal-geo-fidelity}
    Suppose that $\mathcal{O}_\rho$ and $\mathcal{O}_\sigma$ prepare purifications of mixed quantum states $\rho$ and $\sigma$, respectively, with $\rho \geq I/\kappa_\rho$ and $\sigma \geq I/\kappa_\sigma$, where $\kappa_\rho,\kappa_\sigma > 0 $. 
    Then, for any constant $\alpha \in (0, 1)$, any quantum algorithm that estimates $\widehat{F}_{\alpha}\rbra{\rho, \sigma}$ within additive error $\epsilon$ requires query complexity $\Omega\rbra{1/\epsilon}$ even if $\kappa_\rho = \kappa_\sigma = \Theta\rbra{1}$, where $\Omega\rbra{\cdot}$ hides a constant factor that depends only on $\alpha$.
\end{lemma}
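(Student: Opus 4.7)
My plan is to reduce the problem to the well-known $\Omega(1/\epsilon)$ lower bound for estimating an amplitude in the purification model, exactly as one would expect from the analogous proof of Lemma~\ref{lemma:fidelity-lower-bound}. The key point is that, because we are allowed $\kappa_\rho=\kappa_\sigma=\Theta(1)$, we can work entirely with commuting, well-conditioned qubit states for which $\hat F_\alpha$ admits an elementary closed form, and then differentiate.

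First I would exhibit an explicit hard family. Fix $\alpha\in(0,1)$ and let $b\in[1/3,2/3]$ be an unknown parameter. Define the diagonal single-qubit states
\begin{equation}
    \rho_b \coloneqq b\ket{0}\!\bra{0}+(1-b)\ket{1}\!\bra{1}, \qquad \sigma \coloneqq \tfrac{3}{4}\ket{0}\!\bra{0}+\tfrac{1}{4}\ket{1}\!\bra{1}.
\end{equation}
Both satisfy $\rho_b,\sigma\geq I/4$, so $\kappa_\rho=\kappa_\sigma=\Theta(1)$ uniformly in $b$. A purification of $\rho_b$ is given by the two-qubit oracle $\mathcal{O}_\rho\ket{0}\!\ket{0}=\sqrt{b}\ket{0}\!\ket{0}+\sqrt{1-b}\ket{1}\!\ket{1}$, while $\sigma$ has a $b$-independent purification $\mathcal{O}_\sigma$. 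Since $\rho_b$ and $\sigma$ commute, the weighted geometric mean reduces to the ordinary power, so
\begin{equation}
    \hat F_\alpha(\rho_b,\sigma)=\Tr\!\rbra{\rho_b^{\alpha}\sigma^{1-\alpha}}=b^{\alpha}\rbra{\tfrac{3}{4}}^{1-\alpha}+(1-b)^{\alpha}\rbra{\tfrac{1}{4}}^{1-\alpha}.
\end{equation}

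Next I would bound the sensitivity of $\hat F_\alpha$ in $b$. A direct differentiation gives
\begin{equation}
    \frac{\partial}{\partial b}\hat F_\alpha(\rho_b,\sigma)=\alpha\,b^{\alpha-1}\rbra{\tfrac{3}{4}}^{1-\alpha}-\alpha(1-b)^{\alpha-1}\rbra{\tfrac{1}{4}}^{1-\alpha},
\end{equation}
and for $b\in[1/3,2/3]$ the absolute value of this derivative is bounded below by a strictly positive constant $c_\alpha$ depending only on $\alpha$ (the two terms never cancel because $3/4\neq 1/4$, and each factor is uniformly bounded). By the mean value theorem, any $(\epsilon/2)$-estimate of $\hat F_\alpha(\rho_b,\sigma)$ yields an $(\epsilon/c_\alpha)$-estimate of $b$ via a classical post-processing step.

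Finally I would invoke the optimality of quantum amplitude estimation: any quantum algorithm that, given access to a unitary $\mathcal{O}$ with $\mathcal{O}\ket{0}=\sqrt{b}\ket{\phi_0}+\sqrt{1-b}\ket{\phi_1}$ on orthogonal subspaces, outputs an $\epsilon$-estimate of $b$ with constant success probability must make $\Omega(1/\epsilon)$ queries to $\mathcal{O}$ and $\mathcal{O}^\dagger$ (this is the standard Nayak--Wu / Brassard--H{\o}yer--Mosca--Tapp lower bound, proven by the polynomial method). Composing this lower bound with the reduction above yields the claimed $\Omega(1/\epsilon)$ query complexity for $\hat F_\alpha$-estimation, with the hidden constant depending only on $\alpha$.

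The main obstacle is really just the quantitative sensitivity analysis: one must make sure that $\sigma$ is chosen to be non-maximally-mixed (otherwise $\partial_b\hat F_\alpha$ vanishes at $b=1/2$) and that the interval of $b$ on which $c_\alpha$ is a uniform positive lower bound is large enough that an $\epsilon$-estimate of $b$ is still meaningful for arbitrarily small $\epsilon>0$; both are handled by the explicit choices above. Everything else -- the reduction, the purification oracles, and the amplitude estimation lower bound -- is standard.
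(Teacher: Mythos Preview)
Your argument is correct and follows essentially the same route as the paper's proof: pick commuting, well-conditioned qubit states so that $\hat F_\alpha$ has the closed form $\sum_j p_j^\alpha q_j^{1-\alpha}$, verify a linear sensitivity bound in the unknown parameter, and then invoke a known $\Omega(1/\epsilon)$ query lower bound. The paper uses exactly your fixed state (up to relabelling, its $\eta=\tfrac14|0\rangle\!\langle0|+\tfrac34|1\rangle\!\langle1|$) and the same differentiation idea; the only genuine difference is the final black box: the paper reduces to Belovs' lower bound for \emph{distinguishing} two distributions via the Hellinger distance (Lemma~\ref{lemma:q-dis-prob-distri}), whereas you reduce to the Nayak--Wu amplitude-estimation lower bound for \emph{estimating} a continuous parameter. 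Both reductions are standard and yield the same $\Omega(1/\epsilon)$.

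One small imprecision to fix: your sentence ``the two terms never cancel because $3/4\neq 1/4$'' is not the right justification. The derivative \emph{does} vanish, at $b=3/4$ (set $(b/(1-b))^{\alpha-1}=3^{\alpha-1}$); the point is that $3/4\notin[1/3,2/3]$, so on your interval the derivative is continuous, nonvanishing, and hence bounded away from zero by compactness. With that correction the sensitivity step goes through and the rest of the proof is fine.
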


\begin{proof}
    See Appendix~\ref{app:fidelity-lower-bound}.
\end{proof}
It remains an open problem for optimality still holds if $\alpha \in (1, 2]$. However, note that when $\alpha \in (1, 2]$, the inequality $\widehat{F}_{\alpha}\rbra{\rho, \sigma} \geq 1$ holds, and so $\widehat{F}_{\alpha}\rbra{\rho, \sigma}$ cannot be interpreted as a fidelity for these values of $\alpha$; thus, different techniques are required in order to establish optimality.  

\begin{remark}
    Similar to Remark~\ref{remark:sample-fidelity-estimation}, for estimating the corresponding quantities, we can extend Theorems~\ref{thm:hat-F-alpha} and \ref{thm:g-Renyi-relative-entropy} to quantum algorithms with sample complexity $\poly\rbra{\kappa_\rho, \kappa_\rho} \cdot \widetilde{O}\rbra{1/\epsilon^3}$, and extend Lemma~\ref{lemma:optimal-geo-fidelity} to a sample lower bound of $\Omega\rbra{1/\epsilon^2}$.
\end{remark}

\section{BQP-hardness} \label{sec:BQP}

In this section, we consider the hardness of computing the matrix geometric mean. 
Precisely, we show that our quantum algorithm for matrix geometric means (given in Lemma~\ref{lem:simplericcati}) can be used to solve a $\mathsf{BQP}$-complete problem (defined in Problem~\ref{prob:mgm}).
Roughly speaking, this problem pertains to testing a certain property of the matrix geometric mean of two well-conditioned sparse matrices. 

\begin{problem} [Matrix geometric mean] \label{prob:mgm}
    For functions $\kappa_A \colon \mathbb{N} \to \mathbb{N}$ and $\kappa_C \colon \mathbb{N} \to \mathbb{N}$, let $\textup{MGM}\rbra{\kappa_A, \kappa_C}$ be a decision problem defined as follows. 
    For a size-$n$ instance of $\textup{MGM}\rbra{\kappa_A, \kappa_C}$, let $N = 2^n$ and let $A, C \in \mathbb{C}^{N \times N}$ be $O\rbra{1}$-sparse positive definite matrices with $I/\kappa_A\rbra{n} \leq A \leq I$ and $I/\kappa_C\rbra{n} \leq C \leq I$, given by a $\poly\rbra{n}$-size uniform classical circuit $\mathcal{C}_n$ such that, for every $1 \leq j \leq N$, the circuit $\mathcal{C}_n\rbra{j}$ computes the positions and values of the non-zero entries in the $j$-th row of $A$ and $C$.
    Let $Y \in \mathbb{C}^{N \times N}$ be the matrix geometric mean of $A$ and $C$ such that $YAY = C$.
    The task is to decide which of the following is the case, promised that one of the two holds:
    \begin{itemize}
        \item \emph{Yes}: $\langle \psi | M |\psi\rangle  \geq 2/3$;
        \item \emph{No}: $\langle \psi | M |\psi\rangle \leq 1/3$,
    \end{itemize}
    where  $\ket{\psi} \coloneqq \frac{Y^{2}\ket{0}}{\Abs{Y^2\ket{0}}}$ and $M = |0\rangle\!\langle 0| \otimes I_{N/2}$ measures the first qubit. 
\end{problem}

\begin{theorem} \label{thm:bqp-completeness}
    $\textup{MGM}\rbra{\poly\rbra{n}, \poly\rbra{n}}$ is $\mathsf{BQP}$-complete.
\end{theorem}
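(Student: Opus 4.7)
The proof splits into containment in $\mathsf{BQP}$ and $\mathsf{BQP}$-hardness. For containment, I first turn the uniform classical circuit $\mathcal{C}_n$ into a sparse-row oracle for the $O(1)$-sparse matrices $A$ and $C$, and hence into $(1, O(n), 0)$-block-encodings $U_A$ and $U_C$ via the standard sparse block-encoding construction. Because $\kappa_A, \kappa_C = \poly(n)$, Lemma~\ref{lem:simplericcati} then produces a $(\kappa_A, O(n), \delta)$-block-encoding $U_Y$ of $Y = A^{-1}\#C$ at cost $\poly(n, \log(1/\delta))$, and composing $U_Y$ with itself yields a $(\kappa_A^2, O(n), 2\delta)$-block-encoding $U_{Y^2}$ of $Y^{2}$.

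To decide the promise, I rewrite
\begin{equation}
    \langle\psi|M|\psi\rangle = \frac{\langle 0|Y^{2} M Y^{2}|0\rangle}{\langle 0|Y^{4}|0\rangle},
\end{equation}
prepare the state $|\Phi\rangle \coloneqq U_{Y^2}(|0\rangle\otimes|0\rangle_b)$, and run amplitude estimation on $|\Phi\rangle$ twice: once with projector $M\otimes|0\rangle\!\langle 0|_b$ and once with $I\otimes|0\rangle\!\langle 0|_b$. The two outputs give estimates of the numerator and denominator above (each divided by $\kappa_A^4$), whose ratio approximates $\langle\psi|M|\psi\rangle$. Applying the harmonic--geometric mean inequality $A^{-1}\#C \geq 2(A+C^{-1})^{-1}$ to the hypotheses $A \leq I$ and $C \geq I/\kappa_C$ yields the operator bound $Y \geq \frac{2}{1+\kappa_C}I$, so $\langle 0|Y^{4}|0\rangle \geq 1/\poly(n)$; choosing $\delta$ and both amplitude-estimation precisions inverse-polynomial in $n$ then decides the $2/3$-vs.-$1/3$ promise in time $\poly(n)$.

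For $\mathsf{BQP}$-hardness, I reduce from the $\mathsf{BQP}$-complete sparse matrix-inversion problem of~\cite{harrow2009quantum}: the standard history-state encoding of a $\poly$-size quantum circuit (with any state-preparation unitary absorbed into the Hamiltonian) delivers an $O(1)$-sparse positive definite matrix $H$ with $I/\poly(n)\leq H\leq I$ for which distinguishing $\langle 0|H^{-1} M H^{-1}|0\rangle/\langle 0|H^{-2}|0\rangle \geq 2/3$ from $\leq 1/3$ is $\mathsf{BQP}$-hard, using the same observable $M = |0\rangle\!\langle 0|\otimes I_{N/2}$. Setting $A \coloneqq H$ and $C \coloneqq I_N$ (which is $1$-sparse with $\kappa_C = 1$) gives $Y = A^{-1}\#I = A^{-1/2}$, so $Y^{2} = A^{-1}$ and the MGM quantity $\langle\psi|M|\psi\rangle$ coincides exactly with the matrix-inversion quantity, transferring the promise intact. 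The principal obstacle lies in the containment direction: the $\kappa_A^{2}$ sub-normalisation in $U_{Y^2}$ would \emph{a priori} allow both the numerator and denominator to be exponentially small, so the entire argument hinges on the harmonic--geometric operator lower bound on $Y$ to certify that the denominator is at least inverse-polynomial -- without this inequality the ratio-based amplitude-estimation strategy would not run in polynomial time.
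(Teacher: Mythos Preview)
Your proof is correct and follows essentially the same two-part structure as the paper: the hardness reduction (set $C=I$ so that $Y=A^{-1/2}$ and $Y^{2}=A^{-1}$, reducing from sparse matrix inversion) is identical, and the containment argument (block-encode $Y$ via Lemma~\ref{lem:simplericcati}, square it, then estimate $\langle\psi|M|\psi\rangle$ after establishing a $1/\poly(n)$ lower bound on $Y$) matches the paper's proof. The only tactical differences are that the paper post-selects on the ancilla register and then samples, whereas you run amplitude estimation on numerator and denominator and take the ratio, and the paper lower-bounds $Y$ via the direct eigenvalue estimate $Y^{2}\geq \kappa_A^{-1}\kappa_C^{-1}I$ rather than your harmonic--geometric mean inequality; both variants are valid and yield the same $\poly(n)$ runtime (minor bookkeeping: the product of two $(\kappa_A,\cdot,\delta)$-block-encodings has error $2\kappa_A\delta$, not $2\delta$, but this is absorbed into the choice of $\delta$).
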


\begin{proof}
    The proof consists of two parts: Lemma~\ref{lemma:BQP-hard} and Lemma~\ref{lemma:BQP-containment}.
    \begin{enumerate}
        \item In Lemma~\ref{lemma:BQP-hard}, we state that $\textup{MGM}\rbra{\poly\rbra{n}, \poly\rbra{n}}$ is $\mathsf{BQP}$-hard; the proof employs a reduction of the Quantum Linear Systems Problem (QLSP). 
        \item In Lemma~\ref{lemma:BQP-containment}, we state that $\textup{MGM}\rbra{\poly\rbra{n}, \poly\rbra{n}}$ is in $\mathsf{BQP}$; the proof employs the quantum algorithm for the matrix geometric mean given in Lemma~\ref{lem:simplericcati}.
    \end{enumerate}
\end{proof}

\begin{lemma}
\label{lemma:BQP-hard}
    $\textup{MGM}\rbra{\poly\rbra{n}, \poly\rbra{n}}$ is $\mathsf{BQP}$-hard.
\end{lemma}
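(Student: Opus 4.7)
The plan is to reduce the Quantum Linear Systems Problem (QLSP) to $\textup{MGM}(\poly(n), \poly(n))$, exploiting the clean identity that when $C = I$ one has $Y = A^{-1} \# I = A^{-1/2}$ and hence $Y^{2} = A^{-1}$. Thus the state $\ket{\psi}$ appearing in Problem~\ref{prob:mgm} is exactly the QLSP solution vector, and the prescribed measurement of its first qubit agrees with the usual acceptance criterion of a $\mathsf{BQP}$ circuit.

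First I would invoke the standard $\mathsf{BQP}$-hardness of QLSP due to Harrow, Hassidim, and Lloyd~\cite{harrow2009quantum}: for any $\mathsf{BQP}$ language decided by a uniform family of quantum circuits $\{V_n\}$ (whose acceptance is read off from measuring the first qubit of $V_n\ket{0}^{\otimes n}$), one constructs a family $\{A_n\}$ of Hermitian, $O(1)$-sparse, well-conditioned $N \times N$ matrices with $N = 2^{\poly(n)}$, whose nonzero entries in each row are computable by a $\poly(n)$-size uniform classical circuit (exactly as required by Problem~\ref{prob:mgm}). After a standard rescaling, we may assume $I/\kappa_{A}(n) \leq A_n \leq I$ with $\kappa_{A}(n) = \poly(n)$, and that $A_n^{-1}\ket{0}/\Abs{A_n^{-1}\ket{0}}$ agrees, up to negligible trace-distance error, with the output state of $V_n$ on the designated output register; in particular the probability of observing $\ket{0}$ on the first qubit is $\geq 2/3$ for \emph{yes} instances and $\leq 1/3$ for \emph{no} instances.

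Next I would set $C_n = I$. This is trivially $1$-sparse and satisfies the well-conditioning promise with $\kappa_{C}(n) = 1$. By Lemma~\ref{lem:yay1}, the unique positive definite solution to $Y_n A_n Y_n = I$ is $Y_n = A_n^{-1/2}$, hence $Y_n^{2} = A_n^{-1}$. Consequently the state $\ket{\psi_n} \coloneqq Y_n^{2}\ket{0}/\Abs{Y_n^{2}\ket{0}}$ featured in Problem~\ref{prob:mgm} coincides with the QLSP solution $A_n^{-1}\ket{0}/\Abs{A_n^{-1}\ket{0}}$, and the measurement $M = \ket{0}\!\bra{0} \otimes I_{N/2}$ reads its first qubit. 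By the gap guaranteed in the HHL construction, the quantity $\bra{\psi_n} M \ket{\psi_n}$ separates \emph{yes}/\emph{no} instances by at least $1/3$, so after a standard gap-amplification rescaling it lands in the promised $\geq 2/3$ versus $\leq 1/3$ regions. Hence any decision procedure for $\textup{MGM}(\poly(n), \poly(n))$ solves an arbitrary $\mathsf{BQP}$ language, completing the hardness reduction.

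The principal obstacle is technical rather than conceptual: one must verify that a single matrix family $\{A_n\}$ can be arranged to simultaneously meet \emph{all} of Problem~\ref{prob:mgm}'s formal demands---positive definiteness with $I/\kappa_A(n)\leq A_n\leq I$, $O(1)$-sparsity, row-access by a $\poly(n)$-size uniform classical circuit, and preservation of the acceptance gap under the rescaling needed to enforce the operator-norm bounds. This is handled by choosing an appropriate $\mathsf{BQP}$-hard QLSP encoding (e.g., a Feynman--Kitaev/history-state-style matrix, suitably shifted by $\lambda I$ to make it positive definite) and then dividing by $\Abs{A_n}$; the resulting sparsity, condition number, and row-computability inherit directly from the underlying circuit, which is what makes the $\poly(n)$ parameters in the statement achievable.
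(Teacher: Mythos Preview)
Your proposal is correct and follows essentially the same route as the paper: reduce QLSP to $\textup{MGM}$ by taking $C=I$, so that $Y=A^{-1}\#I=A^{-1/2}$ and hence $Y^{2}\ket{0}/\Abs{Y^{2}\ket{0}}$ is exactly the QLSP solution state. The paper simply cites the $\mathsf{BQP}$-completeness of QLSP (with the condition $I/\kappa(n)\leq A\leq I$ already built into its definition) rather than unpacking the HHL construction, so your extra discussion of positive-definiteness, rescaling, and gap amplification is more caution than the argument strictly needs.
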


\begin{proof}
    We consider the Quantum Linear Systems Problem (QLSP) defined as follows.
    \begin{problem} [QLSP]
        For functions $\kappa \colon \mathbb{N} \to \mathbb{N}$, let $\textup{QLSP}\rbra{\kappa}$ be a decision problem defined as follows. 
        For a size-$n$ instance of $\textup{QLSP}\rbra{\kappa}$, let $N = 2^n$ and $A \in \mathbb{C}^{N \times N}$ be an $O\rbra{1}$-sparse Hermitian matrix such that $I/\kappa\rbra{n} \leq A \leq I$, given by a $\poly\rbra{n}$-size uniform classical circuit $\mathcal{C}_n$ such that for every $1 \leq j \leq N$, $\mathcal{C}_n\rbra{j}$ computes the positions and values of the non-zero entries in the $j$-th row of $A$. 
    The task is to decide which of the following is the case, promised that one of the two holds:
    \begin{itemize}
        \item \emph{Yes} item: $\langle \psi | M |\psi\rangle \geq 2/3$;
        \item \emph{No} item: $\langle \psi | M |\psi\rangle \leq 1/3$,
    \end{itemize}
    where $\ket{\psi} \coloneqq \frac{A^{-1}\ket{0}}{\Abs{A^{-1}\ket{0}}}$ and $M = |0\rangle\!\langle 0| \otimes I_{N/2}$ measures the first qubit. 
    \end{problem}

    It was shown in~\cite{harrow2009quantum} that $\textup{QLSP}\rbra{\poly\rbra{n}}$ is $\mathsf{BQP}$-complete. 
    Here, we reduce $\textup{QLSP}\rbra{\poly\rbra{n}}$ to $\textup{MGM}\rbra{\poly\rbra{n}, \poly\rbra{n}}$, and therefore show the $\mathsf{BQP}$-hardness of $\textup{MGM}\rbra{\poly\rbra{n}, \poly\rbra{n}}$. 
    
    Consider any instance (matrix) $A \in \mathbb{C}^{N \times N}$ of $\textup{QLSP}\rbra{\kappa}$, where $N = 2^n$ and $\kappa = \poly\rbra{n}$. 
    We choose $C = I \in \mathbb{C}^{N \times N}$ to be the identity matrix, which is a $1$-sparse Hermitian matrix and each of whose rows can be easily computed. 
    Note that the matrix geometric mean $Y$ of $A^{-1}$ and $C$ is $Y = A^{-1} \# C = A^{-1/2}$. 
    Then, it can be seen that $Y^2 = A^{-1}$ and thus $\ket{\psi_Y} = Y^2\ket{0}/\Abs{Y^2\ket{0}} = A^{-1}\ket{0}/\Abs{A^{-1}\ket{0}} = \ket{\psi_A}$. 
    Consequently, any quantum algorithm that determines whether $\bra{\psi_Y} M \ket{\psi_Y} \geq 2/3$ or $\bra{\psi_Y} M \ket{\psi_Y} \leq 1/3$ with success probability at least $2/3$ can be used to determine whether $\bra{\psi_A} M \ket{\psi_A} \geq 2/3$ or $\bra{\psi_A} M \ket{\psi_A} \leq 1/3$.
    In summary, $\textup{QLSP}\rbra{\kappa}$ can be reduced to $\textup{MGM}\rbra{\kappa, 1}$ through the above encoding. 
    Therefore, $\textup{MGM}\rbra{\poly\rbra{n}, \poly\rbra{n}}$ is $\mathsf{BQP}$-hard.
\end{proof}

\begin{lemma}
\label{lemma:BQP-containment}
    $\textup{MGM}\rbra{\poly\rbra{n}, \poly\rbra{n}}$ is in $\mathsf{BQP}$.
\end{lemma}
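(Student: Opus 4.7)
The plan is to describe a polynomial-time, bounded-error quantum algorithm for $\textup{MGM}\rbra{\poly\rbra{n}, \poly\rbra{n}}$. From the sparse-access classical circuit $\mathcal{C}_n$, I first construct $\rbra{1, O\rbra{n}, 0}$-block-encodings $U_A$ and $U_C$ of $A$ and $C$ using the standard sparse-matrix block-encoding construction, which costs $\poly\rbra{n}$ time per query. Since the promise gives $\kappa_A, \kappa_C = \poly\rbra{n}$, Lemma~\ref{lem:simplericcati} then yields, for any $\epsilon \in (0, 1/2)$, a $\rbra{\kappa_A, b, \epsilon}$-block-encoding $U_Y$ of $Y = A^{-1}\# C$ using $\poly\rbra{n, \log\rbra{1/\epsilon}}$ queries and gates, where $b = O\rbra{n}$. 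Composing two copies of $U_Y$ via the product-of-block-encodings lemma (Lemma~\ref{lmm:product of block-encoding}) produces a $\rbra{\kappa_A^2, b', 2\kappa_A\epsilon}$-block-encoding $U_{Y^2}$ of $Y^2$ with $b' = b + O\rbra{1}$.

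Next, I prepare the normalised state $\ket{\psi} = Y^2\ket{0}/\Abs{Y^2\ket{0}}$ by applying $U_{Y^2}$ to $\ket{0} \otimes \ket{0}_{b'}$, postselecting the ancilla register on $\ket{0}_{b'}$, and boosting the success probability via fixed-point amplitude amplification. I then measure the first qubit in the computational basis and repeat $O\rbra{1}$ times (or invoke amplitude estimation) to estimate $\bra{\psi} M \ket{\psi}$ to additive precision $1/12$; a threshold test at $1/2$ separates the \emph{Yes} and \emph{No} instances with success probability at least $2/3$.

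The main technical point is to certify that the amplitude amplification runs in $\poly\rbra{n}$ rounds, since its cost scales as $\kappa_A^2/\Abs{Y^2\ket{0}}$. To lower-bound $\Abs{Y^2\ket{0}}$, I lower-bound $Y$ itself. Write $Y = A^{-1/2} B A^{-1/2}$ with $B = \rbra{A^{1/2}CA^{1/2}}^{1/2}$. Using $\Abs{A} \leq 1$, $A \geq I/\kappa_A$, and $C \geq I/\kappa_C$, one has $A^{1/2}CA^{1/2} \geq I/\rbra{\kappa_A\kappa_C}$ and hence $B \geq I/\sqrt{\kappa_A\kappa_C}$. For any unit vector $v$, $\bra{v} Y \ket{v} = \bra{A^{-1/2}v} B \ket{A^{-1/2}v} \geq \lambda_{\min}\rbra{B}/\Abs{A} \geq 1/\sqrt{\kappa_A\kappa_C}$, so $Y \geq I/\sqrt{\kappa_A\kappa_C}$ and consequently $\Abs{Y^2\ket{0}} \geq 1/\rbra{\kappa_A\kappa_C} = 1/\poly\rbra{n}$. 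The amplitude amplification therefore terminates in $\poly\rbra{n}$ iterations.

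Finally, choosing $\epsilon = 1/\poly\rbra{n}$ in the block-encoding controls the induced error in the prepared state (and hence in the estimated probability) to within $1/12$, with only $O\rbra{\log n}$ multiplicative overhead. All stages therefore run in $\poly\rbra{n}$ quantum and classical time, placing $\textup{MGM}\rbra{\poly\rbra{n}, \poly\rbra{n}}$ in $\mathsf{BQP}$.
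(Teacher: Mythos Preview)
Your proposal is correct and follows essentially the same approach as the paper: build block-encodings of $A$ and $C$ from the sparse oracle, invoke Lemma~\ref{lem:simplericcati} to block-encode $Y=A^{-1}\#C$, take the product to get $Y^2$, prepare $Y^2\ket{0}/\Abs{Y^2\ket{0}}$ by postselection (the paper) or amplitude amplification (you), and estimate $\bra{\psi}M\ket{\psi}$ to constant precision. The only cosmetic differences are that the paper tracks the small block-encoding error from Lemma~\ref{lemma:sparse-to-block} explicitly (your ``$(1,O(n),0)$'' is really $(O(1),O(n),\epsilon)$ for $\epsilon$ arbitrarily small), and your lower bound $Y\geq I/\sqrt{\kappa_A\kappa_C}$ is derived slightly differently but coincides with the paper's $Y^2\geq I/(\kappa_A\kappa_C)$.
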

\begin{proof} See Appendix~\ref{app:lemmaMGM}.
\end{proof}

\section{Discussion}

\label{sec:discussion}

We constructed efficient block-encodings of the matrix geometric mean (and weighted matrix geometric mean). These are unique solutions to the simplest algebraic Riccati equations --  quadratically nonlinear system of matrix equations. Unlike the output of most quantum algorithms for linear systems of equations, these solutions of the nonlinear matrix equations are not embedded in pure quantum states, but rather in terms of observables from which we can extract expectation values. 

This allows us to introduce a new class of algorithms for quantum learning, called quantum geometric mean metric learning. For example, this can be applied in a purely quantum setting for picking out anomalous quantum states. This can also be adapted to the case of flexible weights on the cost of flagging an anomaly. The new quantum subroutines can also be used for the first quantum algorithm, to the best of our knowledge, to compute the geometric R\'enyi
relative entropies and new quantum algorithms to compute quantum fidelity by means of the Fuchs–Caves observable. In the latter case, we demonstrate optimal scaling $\Omega(1/\epsilon)$ in precision. 

While most of the applications introduced above are for quantum problems for which there is no direct classical equivalent (although the quantum learning algorithm can also be applied to learning Euclidean distances for classical data), there are potential benefits that the new quantum subroutine can have over purely classical methods. This could be exploited for future applications. For example, classical numerical algorithms to compute the matrix geometric mean have cost $O(\operatorname{poly} D)$ for $D \times D$ matrices~\cite{bini2011note, iannazzo2016geometric}. The same is also true for solving the differential matrix Riccati equation and algebraic matrix Riccati equation~\cite{ramesh1989computational} through iterative methods and other methods based on finding the eigendecomposition of a larger matrix~\cite{arnold1984generalized}. For quantum processing on the other hand, we showed conditions under which the block-encodings of some of these solutions can be obtained with cost $O(\operatorname{poly} \log d)$. 

For example, there are many classical problems for which it is important to compute the matrix geometric mean between two matrices. They appear in imaging~\cite{estatico2013shift, arsigny2007geometric} and in the analysis of multiport electrical networks~\cite{chansangiam2012operator}. The algebraic Riccati equation of the form in Eq.~\eqref{eq:yay1} also appears in optimal control and Kalman filters. Under the assumptions in Lemma~\ref{lem:yay2}  when uniqueness of its solution is also satisfied, it can be possible to construct its block-encoding in Lemma~\ref{lem:Briccati}. Although these assumptions are not generally satisfied, this still gives an idea of the extent and reach of the matrix geometric mean. Extensions of our algorithms to the matrix geometric mean consisting of more than two matrices can also be explored, which already find applications in areas like elasticity and radars~\cite{moakher2006averaging, barbaresco2009new}. It is also intriguing to consider purely quantum extensions of these problems. The main difficulty associated with constructing block-encodings of multivariate geometric means is that they are not known to have an analytical form as they do in the bivariate case; rather, they are constructed as the solutions of nonlinear equations generalizing the simple algebraic Riccati equation~\cite{lawson2021expanding}.
It is worth mentioning that there are many other quantum algorithms for learning problems with different loss functions and the solutions to these problems do not have general analytical forms. 
Examples include semi-definite programming \cite{BS17,vAGGdW20,BKL+19,vAG19b}, linear programming \cite{vAG19,BGJ+23,GJLW23}, and general matrix games \cite{LWCW21}. 
It is interesting to ask whether the techniques developed in this paper can be used in these problems.

In addition to usefulness in applications, the standard and weighted matrix geometric means  also have an elegant interpretation in terms of geodesics in Riemannian space. Despite the importance and beauty of Riemannian geometry in mathematics and other areas in physics, sensing, and machine learning, it has not appeared too much in quantum computation yet, apart from very notable exceptions like~\cite{nielsen2006quantum}. This geometric perspective is useful in understanding the weighted quantum learning algorithm, and we showed how it provided an alternative motivation for the form of quantum fidelity via the Fuchs--Caves observable. There is more potential here for the matrix geometric mean to bring the ideas of geometry closer to quantum information and computation.

\begin{acknowledgments}
N.L.\ acknowledges funding from the Science and Technology Commission of Shanghai Municipality (STCSM) grant no. 24LZ1401200 (21JC1402900). N.L.\ is also supported by NSFC grants No.\ 12471411 and No.\ 12341104, the Shanghai Jiao Tong University 2030 Initiative, and the Fundamental Research Funds for the Central Universities.
Q.W.\ acknowledges support from the Engineering and Physical Sciences Research Council under Grant No.~\mbox{EP/X026167/1} and the MEXT Quantum Leap Flagship Program (MEXT Q-LEAP) under Grant No.~JPMXS0120319794. 
M.M.W.\ acknowledges support from the NSF under grants 2329662 and 2315398.
Z.Z.\ acknowledges support from the Sydney Quantum Academy, NSW, Australia.
\end{acknowledgments}

\bibliography{main}

\appendix

\section{Proof of Lemma~\ref{lem:yay2}}

\label{app:proof-yay2}

This follows by observing that \eqref{eq:yay1} is a matrix
version of the quadratic equation and by following an argument similar to what
is well known as completing the square. Consider
that
\begin{align}
  \rbra*{  Y-A^{-1}B}^\dag  A\rbra*{  Y-A^{-1}B} 
&  =\rbra*{  Y^\dag -\rbra*{  A^{-1}B}  ^{\dag}}  A\rbra*{  Y-A^{-1}%
B} \\
&  =\rbra*{  Y^\dag -B^{\dag}\rbra*{  A^{-1}}  ^{\dag}}  A\rbra*{
Y-A^{-1}B} \\
&  =\rbra*{  Y^\dag -B^{\dag}A^{-1}}  A\rbra*{  Y-A^{-1}B} \\
&  =Y^\dag AY-Y^\dag AA^{-1}B-B^{\dag}A^{-1}AY+B^{\dag}A^{-1}AA^{-1}B\\
&  =Y^\dag AY-Y^\dag B-B^{\dag}Y+B^{\dag}A^{-1}B.
\end{align}
Then%
\begin{equation}
Y^\dag AY-B^{\dag}Y-Y^{\dag}B-C=\rbra*{  Y-A^{-1}B}^\dag   A\rbra*{  Y-A^{-1}B}
-B^{\dag}A^{-1}B-C,
\end{equation}
and so \eqref{eq:yay1} is equivalent to
\begin{equation}
X^\dag AX=D,
\end{equation}
where%
\begin{align}
X  &  =Y-A^{-1}B,\\
D  &  =B^{\dag}A^{-1}B+C.
\end{align}
Observe that $D$ is positive definite because $B^{\dag}A^{-1}B$ is positive
semi-definite and $C$ is positive definite. So this is a reduction to the
original simplified form of the algebraic Riccati equation in \eqref{eq:yayc}, which we know from Lemma~\ref{lem:yay1} has the following unique positive definite solution:
\begin{align}
X  
&  =A^{-1}\#D\\
&  =A^{-1}\#\rbra*{  B^{\dag}A^{-1}B+C}  .
\end{align}
This implies that%
\begin{align}
Y  &  =X+A^{-1}B\\
&  =A^{-1}\#\rbra*{  B^{\dag}A^{-1}B+C}  +A^{-1}B
\end{align}
is a solution of \eqref{eq:yay1}.

\begin{remark}
Contrary to what is stated in the proof of \cite[Corollary~4]{fujii2009riccati}, the solution of \eqref{eq:yay1}, under the assumptions stated in Lemma~\ref{lem:yay2}, is not unique. Indeed, 
    \begin{equation}
Y=-\rbra*{  A^{-1}\#\rbra*{  B^{\dag}A^{-1}B+C}  }  +A^{-1}B
\end{equation}
is also a legitimate solution. In
fact, the following is a matrix version of the famous quadratic formula:
\begin{equation}
A^{-1}B\pm\rbra*{  A^{-1}\#\rbra*{  B^{\dag}A^{-1}B+C}  }  ,
\end{equation}
for which the scalar version is $\frac{b}{a}\pm\sqrt{\frac{1}{a}\rbra*{
\frac{b^{2}}{a}+c}  }$ corresponding to a solution of $ay^{2}-2by-c=0$ (stated after \eqref{eq:yay1}),
under the assumption that $a,c>0$.
\end{remark}

\section{Preliminary lemmas of the block-encoding formalism and other useful results} \label{app:preliminaries}

Let us introduce several preliminary lemmas of the block-encoding formalism,
which enable us to implement various arithmetic operations on the block-encoded matrices.
The first lemma states that, given  block-encodings of two matrices,
we can obtain a block-encoding of their product.

\begin{lemma}[Product of block-encoded matrices {\cite[Lemma 30]{GSLW19}}]
	\label{lmm:product of block-encoding}
	If $U$ is an $(\alpha,a,\delta)$-block-encoding of $A$ and
	$V$ is a $(\beta,b,\epsilon)$-block-encoding of $B$,
	then there is a unitary $W$ that is an $(\alpha\beta, a+b,\alpha\epsilon+\beta\delta)$-block-encoding of $AB$,
	and can be implemented by one query to $U$ and~$V$.
\end{lemma}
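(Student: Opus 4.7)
The plan is to construct $W$ as the composition of $U$ and $V$ on disjoint ancillary registers sharing the common system register, then verify the block-encoding property through a direct matrix-element computation followed by a hybrid-argument error bound.

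First, I would fix register labels. Let $S$ denote the $n$-qubit system register on which $A$ and $B$ act, let $A_U$ be the $a$-qubit ancilla register of $U$, and let $A_V$ be the $b$-qubit ancilla register of $V$. Define
\[
W \coloneqq \bigl(U \otimes I_{A_V}\bigr)\bigl(I_{A_U} \otimes V\bigr),
\]
where in the first factor $U$ acts on $A_U \otimes S$ and $I_{A_V}$ acts trivially on the $V$-ancilla, and in the second factor $V$ acts on $A_V \otimes S$ and $I_{A_U}$ acts trivially on the $U$-ancilla. This $W$ is an $(n+a+b)$-qubit unitary built from one use of $U$ and one use of $V$.

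Next I would compute the top-left block. Writing $\widetilde A \coloneqq \alpha\, \bra{0}_a U \ket{0}_a$ and $\widetilde B \coloneqq \beta\, \bra{0}_b V \ket{0}_b$, the fact that $U$ does not touch $A_V$ and $V$ does not touch $A_U$ lets the two ancilla projections factor independently, giving
\[
\bra{0}_{a+b} W \ket{0}_{a+b} \;=\; \bra{0}_a U \ket{0}_a \cdot \bra{0}_b V \ket{0}_b \;=\; \frac{\widetilde A\, \widetilde B}{\alpha\beta},
\]
so that $\alpha\beta\, \bra{0}_{a+b} W \ket{0}_{a+b} = \widetilde A\, \widetilde B$. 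The error is then controlled by inserting the hybrid term $A\widetilde B$ and applying the triangle inequality together with submultiplicativity of the operator norm:
\[
\bigl\|AB - \widetilde A\, \widetilde B\bigr\| \leq \bigl\|A(B - \widetilde B)\bigr\| + \bigl\|(A - \widetilde A)\widetilde B\bigr\| \leq \|A\|\,\epsilon + \delta\, \|\widetilde B\| \leq \alpha \epsilon + \beta \delta,
\]
where I use the standing assumption $\|A\| \leq \alpha$ on block-encodings and the fact that $\bra{0}_b V \ket{0}_b$ is a sub-block of a unitary, hence has operator norm at most $1$, so $\|\widetilde B\| \leq \beta$.

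The main obstacle is little more than tensor-factor bookkeeping: the construction has to keep the ancilla registers $A_U$ and $A_V$ disjoint while both $U$ and $V$ act on the shared system register $S$, so that projecting onto $\ket{0}_a\ket{0}_b$ cleanly separates into the product of the two individual projections. Once the registers are laid out correctly, the rest is just the triangle inequality and the definition of block-encoding.
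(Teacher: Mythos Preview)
Your construction and error analysis are correct and constitute the standard proof of this lemma; indeed, the paper does not prove it at all but simply cites it from \cite[Lemma~30]{GSLW19}, and your argument is precisely the one given there. One small remark: the bound $\|A\|\leq\alpha$ you invoke is not literally part of Definition~\ref{def:blk-enc} in this paper (which only yields $\|A\|\leq\alpha+\delta$), so strictly speaking your hybrid estimate produces $\alpha\epsilon+\beta\delta+\delta\epsilon$; the clean $\alpha\epsilon+\beta\delta$ bound either assumes $\|A\|\leq\alpha$ as an additional convention (as GSLW19 effectively does) or absorbs the negligible cross term.
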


Taking the linear combination of several block-encoded matrices is also useful and is stated in the following lemma.

\begin{lemma}[Linear combination of block-encoded matrices {\cite[Lemma 29]{GSLW19}}]
	\label{lmm:LCU}
	Let $m\in \mathbb{N}$ and $\beta>0$ be constant, and let $\boldsymbol{x}=(x_1,\ldots,x_m)\in \mathbb{R}^m$ be a vector such that $\lVert\boldsymbol{x}\rVert_1\leq \beta$.
	Suppose that each $U_j$ is a $(1, a, \epsilon)$-block-encoding of $A_j$ for $j=1$ to $m$.
	Then there is a unitary $U$ that is a $(1, a+\eta\log(1/\epsilon), 2\beta^{-1}\epsilon)$-block-encoding of $\beta^{-1}\sum_{j=1}^m x_j A_j$,
	where $\eta$ is some constant,
	and $U$ can be implemented by one query to each $U_j$ and $\polylog(1/\epsilon)$ gates.
\end{lemma}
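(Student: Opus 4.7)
The plan is to instantiate the standard Linear Combination of Unitaries (LCU) construction with a PREPARE--SELECT--PREPARE$^\dagger$ sandwich, and then carefully propagate the two error sources (intrinsic block-encoding error of each $U_j$, and the state-preparation error) through the composition. First I would normalize by setting $y_j := x_j/\beta$ so that $\sum_j |y_j| \le 1$. To absorb slack when $\sum_j |y_j| < 1$, I adjoin a dummy index $j=0$ with $y_0 := 1 - \sum_{j=1}^m |y_j|$, declare $A_0 := 0$, and use $U_0 := Z \otimes I$ on the block-encoding ancilla so that $\bra{0}_a U_0 \ket{0}_a = 0$ exactly; this makes $\sum_{j=0}^m |y_j| = 1$ and $\sum_{j=0}^m y_j A_j = \beta^{-1}\sum_{j=1}^m x_j A_j$.

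\textbf{Construction.} On $p := \lceil\log_2(m{+}1)\rceil$ additional ancilla qubits, I would build a PREPARE unitary $V$ with $V\ket{0}_p = \sum_{j=0}^m \sqrt{|y_j|}\,\ket{j}$, and a SELECT unitary $W := \sum_{j=0}^m \mathrm{sign}(y_j)\,\ket{j}\!\bra{j} \otimes U_j$ (the signs are absorbed into $W$ as $\pm 1$ phases, so $W$ requires exactly one controlled query to each $U_j$, that is, $m$ queries in total, plus $\poly(m)=O(1)$ gates of control logic). The candidate block-encoding is
\begin{equation}
    U \;:=\; (V^\dagger \otimes I_a \otimes I_{\text{sys}})\, W\, (V \otimes I_a \otimes I_{\text{sys}}).
\end{equation}

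\textbf{Verification.} A direct computation gives
\begin{equation}
    \bra{0}_p\bra{0}_a\, U\, \ket{0}_p\ket{0}_a \;=\; \sum_{j=0}^m y_j\,\bra{0}_a U_j \ket{0}_a,
\end{equation}
so, since each $U_j$ is a $(1,a,\epsilon)$-block-encoding of $A_j$ and $\bra{0}_a U_0 \ket{0}_a = 0 = A_0$, the triangle inequality yields
\begin{equation}
    \Bigl\lVert \textstyle\sum_{j=0}^m y_j \bra{0}_a U_j\ket{0}_a - \beta^{-1}\sum_{j=1}^m x_j A_j \Bigr\rVert \;\le\; \textstyle\sum_{j=1}^m |y_j|\epsilon \;\le\; \beta^{-1}\epsilon.
\end{equation}
Because $V$ can only be realized up to some preparation precision $\delta$ in a fixed universal gate set, and it appears twice in $U$, a second triangle-inequality step shows that using a $\delta$-approximation $\tilde V$ introduces an additional operator-norm error of at most $2\delta$ in the block-encoded matrix. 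Choosing $\delta := \beta^{-1}\epsilon$ therefore yields the stated error $2\beta^{-1}\epsilon$. Since $m$ is constant, approximating $V$ to precision $\delta$ is a fixed-dimensional unitary synthesis problem and can be accomplished with $\polylog(1/\delta) = \polylog(1/\epsilon)$ gates and $O(\log(1/\epsilon))$ workspace qubits, giving the ancilla count $a + \eta\log(1/\epsilon)$.

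\textbf{Main obstacle.} The mechanical algebra of the PREPARE--SELECT sandwich is standard; the delicate step is to organize the two error contributions — the intrinsic $\epsilon$-error of the input $U_j$'s (which naturally combines with weight $\sum |y_j|\le 1$) and the $\delta$-error of the state-preparation unitary (which appears twice and contributes additively, independent of $\beta$) — so that their sum matches the advertised bound $2\beta^{-1}\epsilon$. In particular, I would need to verify that the dummy entry $y_0$, introduced to make $\sum|y_j|=1$ and thereby achieve subnormalization $1$ rather than $\|x\|_1/\beta$, does not inject any extra error, which is why padding $U_0$ to a \emph{zero} block-encoded operator (rather than the identity) is essential.
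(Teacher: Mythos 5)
The paper does not prove this lemma; it imports it verbatim from \cite[Lemma 29]{GSLW19}, so there is no in-paper proof to compare against. Your PREPARE--SELECT--PREPARE$^\dagger$ construction, including the dummy index $j=0$ padded with a zero block to normalise the prepare state, is the standard LCU argument and is structurally the same as the one given in GSLW19. The verification that $\bra{0}_p\bra{0}_a U\ket{0}_p\ket{0}_a=\sum_j y_j\bra{0}_a U_j\ket{0}_a$ is correct.

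The gap is in the error accounting. First, from $y_j=x_j/\beta$ and the hypothesis $\lVert\boldsymbol{x}\rVert_1\le\beta$ you get $\sum_{j=1}^m|y_j|=\lVert\boldsymbol{x}\rVert_1/\beta\le 1$, so the intrinsic error from the input block-encodings is bounded by
\begin{equation}
\Bigl\lVert\textstyle\sum_{j} y_j\bra{0}_a U_j\ket{0}_a-\beta^{-1}\textstyle\sum_j x_j A_j\Bigr\rVert\le\textstyle\sum_j|y_j|\,\epsilon\le\epsilon,
\end{equation}
not $\le\beta^{-1}\epsilon$ as you wrote; your last inequality would need $\lVert\boldsymbol{x}\rVert_1\le 1$, which is strictly stronger than what is assumed. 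Second, even granting your (incorrect) bound $\beta^{-1}\epsilon$, adding the $2\delta=2\beta^{-1}\epsilon$ from the state-preparation approximation gives a total of $3\beta^{-1}\epsilon$, not the claimed $2\beta^{-1}\epsilon$, so the two numbers were not actually summed. With the corrected first term the construction yields error $\epsilon+2\delta$, and requiring $\epsilon+2\delta\le 2\beta^{-1}\epsilon$ forces $\delta\le(\beta^{-1}-\tfrac12)\epsilon$, which is nonnegative only when $\beta\le 2$ and vanishes at $\beta=2$. You should be explicit that the additive $\epsilon$ from the $U_j$'s is not suppressed by $\beta^{-1}$: in GSLW19's own accounting the output error for a $(\beta,\cdot,\cdot)$-block-encoding of $\sum_j x_j A_j$ is $\epsilon_1+\beta\epsilon_2$, which after rescaling to subnormalisation $1$ becomes $\beta^{-1}\epsilon_1+\epsilon_2$; the $\epsilon_2$ (block-encoding) term survives at full strength. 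Either restrict the stated range of $\beta$, or record the correct bound $(1+\beta^{-1})\epsilon$ (choosing $\epsilon_1=\epsilon$), rather than forcing the target $2\beta^{-1}\epsilon$.
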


To construct our quantum algorithms for matrix geometric means,
we need to deal with the non-linear terms in the matrix geometric means. 
The tool to be used is quantum singular value transformation~\cite{GSLW19},
which in our case is an (approximate) polynomial transformation of the block-encoded matrix,
as stated in the following lemma.

\begin{lemma}[Polynomial eigenvalue transformation {\cite[Theorem 31]{GSLW19}}]
	\label{lmm:svt}
	Let $U$ be a $(1,a,\epsilon)$-block-encoding of a Hermitian matrix $A$.
	If $\delta\geq 0$ and $q(x)\in \mathbb{R}[x]$ is a polynomial of degree~$d$ such that $\abs*{q(x)}\leq 1$ for $x\in [-1,1]$,
	then there is a unitary $\widetilde{U}$ that is a $(1,a+2,4d\sqrt{\epsilon}+\delta)$-block-encoding of $q(A)/2$,
	and can be implemented by $d$ queries to $U$
	and $O((a+1)d)$ gates.
	A description of such an implementation can be computed classically in time $O\rbra*{\poly\rbra*{d, \log(1/\delta)}}$.
\end{lemma}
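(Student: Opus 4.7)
The plan is to follow the standard quantum signal processing (QSP) construction, handling the even and odd parts of $q$ separately and combining them via a linear combination of unitaries (LCU). The three nontrivial ingredients are: (i) the existence and classical computability of a QSP phase sequence realising a bounded real polynomial of definite parity; (ii) the robustness of the resulting circuit against an imperfect block-encoding of $A$; and (iii) the LCU combiner that converts the two parity pieces into $q(A)/2$ with only two extra ancillae.

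\textbf{Step 1: Parity decomposition.} I would write $q=q_e+q_o$ with $q_e(x)=(q(x)+q(-x))/2$ and $q_o(x)=(q(x)-q(-x))/2$, both of degree at most $d$ and both bounded by $1$ on $[-1,1]$ since $q$ is. It therefore suffices to build block-encodings of $q_e(A)$ and $q_o(A)$ individually (each of sub-normalisation $1$) and then average them with one Hadamard-controlled LCU ancilla to obtain $q(A)/2$, contributing one of the two extra ancilla qubits in the final $a+2$ count.

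\textbf{Step 2: QSP realisation of each parity piece.} For a real polynomial $p$ of degree $d$ with definite parity and $\lVert p\rVert_{[-1,1]}\leq 1$, the QSP theorem guarantees a phase sequence $(\phi_0,\ldots,\phi_d)\in\mathbb{R}^{d+1}$ such that the alternating product $\prod_{j} e^{i\phi_j(2\Pi-I)}\cdot U^{(\dagger)}$, with reflection $\Pi=\ket{0}\!\bra{0}_a\otimes I$ about the block-encoding ancilla register, has its top-left block exactly equal to $p(A)$ when $A$ is Hermitian. Implementing each $e^{i\phi_j(2\Pi-I)}$ costs $O(a)$ gates using one extra qubit (the second of the two extra ancillae), the circuit uses exactly $d$ queries to $U$ and $U^\dagger$, and the total gate count is $O((a+1)d)$. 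The classical problem of computing the phase sequence to accuracy $\delta$ in the $\ell_\infty$ polynomial norm can be solved in time $\poly(d,\log(1/\delta))$ using existing phase-finding algorithms (e.g.\ Haah's or Chao--Ding--Gily\'en--Huang--Kothari's), which accounts for the $\delta$ term in the error and the stated classical runtime.

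\textbf{Step 3: Robustness in $\epsilon$, the main obstacle.} The delicate point is propagating the input error: if $U$ is only an $\epsilon$-block-encoding of $A$, the effective matrix driving the QSP recursion is some $\widetilde A$ with $\lVert\widetilde A-A\rVert\leq\epsilon$, and a naive Lipschitz bound for $p$ on $[-1,1]$ could give a $d^2\epsilon$ error rather than the claimed $4d\sqrt\epsilon$. To obtain the correct bound I would use the singular-value robustness lemma from the QSVT framework: the QSP sequence acts on each singular value of the top-left block of $U$ independently, and a Chebyshev-style perturbation inequality for degree-$d$ polynomials bounded by $1$ on $[-1,1]$ yields $\lvert p(x)-p(\tilde x)\rvert\leq 4d\sqrt{\lvert x-\tilde x\rvert}$ (essentially a Markov--Bernstein-type estimate applied through the identity $p(\cos\theta)=\sum_k a_k\cos(k\theta)$). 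Combining this with $\lVert\widetilde A-A\rVert\leq\epsilon$ gives the $4d\sqrt\epsilon$ contribution, which adds to the $\delta$ from Step 2 to produce the final $(1,a+2,4d\sqrt\epsilon+\delta)$-block-encoding of $q(A)/2$. The main technical care is in verifying that the perturbation bound remains valid uniformly over the joint spectra of $A$ and $\widetilde A$ and that the LCU combiner in Step 1 does not further amplify the error beyond additive accumulation, which follows from the triangle inequality in operator norm.
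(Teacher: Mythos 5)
The paper does not prove Lemma~\ref{lmm:svt}; it imports it verbatim as \cite[Theorem 31]{GSLW19} in Appendix~\ref{app:preliminaries}, so there is no internal argument to compare against. Judged on its own merits as a reconstruction of the cited proof, your proposal captures the correct overall skeleton (parity split, quantum signal processing for each parity piece, LCU, robustness against imperfect block-encodings), but two of its load-bearing claims are wrong as stated and would need repair.

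First, the query count. You propose to build a block-encoding of $q_e(A)$ and a block-encoding of $q_o(A)$ \emph{individually}, each costing $d$ applications of $U$ and $U^\dagger$, and then average the two via a controlled LCU. Done that way, the controlled versions each still invoke $U$ roughly $d$ times, so the combined circuit uses about $2d$ queries, not the $d$ in the lemma. The construction in \cite{GSLW19} achieves $d$ queries by \emph{sharing} the signal operators: the alternating $U$/$U^\dagger$ layers are identical for the even and odd sequences, so one interleaves them once, uncontrolled, and only the intervening projector-controlled phase rotations (and a single controlled-$U$ to fix up a parity mismatch) are conditioned on the LCU ancilla. That sharing is the whole reason the statement can advertise $d$ rather than $2d$ queries, and your proposal contains no mechanism for it.

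Second, the claim in Step~2 that a QSP phase sequence makes the top-left block ``exactly equal to $p(A)$'' for a real polynomial $p$ is not correct. QSP with a single real phase sequence produces a complex polynomial $P$ in the $(0,0)$-entry; one can choose $\Phi$ so that $\Re P=p$, but the imaginary part is generically nonzero. Extracting the real part requires combining $U_{\Phi}$ with its conjugate partner (in \cite{GSLW19} this is done with the QSVT circuit together with its alternating-phase reflection, at the price of one additional ancilla), which is exactly where the factor of $1/2$ and part of the $+2$ ancilla overhead come from -- not solely from the even/odd LCU as your outline suggests. Relatedly, your sketch of Step~3 treats the $4d\sqrt{\epsilon}$ robustness as a pointwise Markov--Bernstein estimate applied ``to each singular value independently''; that reasoning does not go through because the ideal $A$ and the actually block-encoded $\widetilde A$ need not commute, so there is no common eigenbasis. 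The bound is an operator-norm perturbation statement (\cite[Lemma 23]{GSLW19}) whose proof needs a matrix argument, though the final inequality you quote is the right one. These are genuine gaps, not cosmetic ones; without the query-sharing trick and the correct real-part extraction, the proposal proves a weaker statement with $2d$ queries and a different ancilla count.
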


We also need two polynomial approximation results for applying Lemma~\ref{lmm:svt}
in our scenario. The following two lemmas show low-degree polynomials for approximating the negative and positive power functions, respectively. 

\begin{lemma}[Polynomial approximations of negative power functions {\cite[Corollary 67 in the full version]{GSLW19}}]
	\label{lmm:poly negative power}
	Let $f(x)=\rbra*{x/\delta}^{-c}/2$.
	For $\delta,\epsilon\in (0,1/2)$ and $c>0$,
	there is a polynomial $q(x)$ of degree $O\rbra*{(c+1)\delta^{-1}\log(1/\epsilon)}$ such that
	\begin{itemize}
		\item 
			$\abs*{q(x)-f(x)}\leq \epsilon$ for $x\in [\delta,1]$;
		\item
			$\abs*{q(x)}\leq 1$ for $x\in [-1,\delta)$.
	\end{itemize}
\end{lemma}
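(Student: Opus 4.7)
The plan is to construct $q(x)$ approximating $f(x) = (x/\delta)^{-c}/2$ on $[\delta,1]$ via the standard recipe of (i) reducing negative powers to a weighted integral of exponentials, (ii) discretising and truncating that integral to get a polynomial, and (iii) damping the result outside $[\delta,1]$ so that the uniform bound on $[-1,1]$ is preserved.

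First, I would use the Gamma-function identity
\[
x^{-c} \;=\; \frac{1}{\Gamma(c)} \int_0^\infty t^{c-1} e^{-xt}\, dt, \qquad x>0,
\]
valid for $c>0$. Since $e^{-xt} \le e^{-\delta t}$ for $x \ge \delta$, the tail $t > T$ contributes at most $O(\epsilon \delta^c)$ when $T = \Theta(c\,\delta^{-1}\log(1/\epsilon))$, so the truncated integral already gives an $\epsilon/2$-approximation of $\delta^c x^{-c}/2$ on $[\delta,1]$. Next, replace $e^{-xt}$ by its Jacobi--Anger (Chebyshev) truncation of degree $d_0 = O(tx + \log(1/\epsilon))$; a sharper bookkeeping, carried out by expanding the full integrand as a Chebyshev series on the rescaled interval $[\delta,1] \to [-1,1]$ and using the Bernstein ellipse of parameter $\rho = 1+\Theta(\delta)$ (whose size is dictated by the nearest singularity at $x=0$), yields a polynomial $\tilde p(x)$ of degree $d_1 = O\!\left((c+1)\,\delta^{-1}\log(1/\epsilon)\right)$ satisfying $|\tilde p(x) - f(x)| \le \epsilon/2$ for $x \in [\delta,1]$. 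The $(c+1)$ factor arises because the Chebyshev coefficients decay as $\rho^{-k}$ while $\max_{|x|\le 1}|f(x)|$ (after an analytic extension) scales like $\delta^{-c}$, and the truncation condition $\rho^{-d_1} \delta^{-c} \le \epsilon$ solves to $d_1 = O((c\log(1/\delta)+\log(1/\epsilon))/\delta)$.

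The main obstacle is the second property, namely $|q(x)| \le 1$ on the whole interval $[-1,1]$, since $\tilde p$ is allowed to blow up catastrophically on $[-1,\delta)$ (where it is chasing the singularity of $x^{-c}$ at $0$ and even crossing into negative arguments). I would handle this by multiplying $\tilde p$ by a smooth window polynomial $w(x)$, of degree $O(\delta^{-1}\log(1/\epsilon))$, satisfying $w(x) \in [1 - \epsilon/4, 1]$ for $x \in [\delta,1]$ and $|w(x)| \le \epsilon\, \delta^{c}$ for $x \in [-1,\delta/2]$, with a narrow transition region. Such a window is obtained from a standard polynomial approximation of a rectangular function, e.g.\ by composing an error-function approximation with a linear rescaling (these are exactly the building blocks established in \cite{GSLW19} for QSVT constructions). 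Outside $[\delta,1]$ the product $q = \tilde p \cdot w$ inherits the smallness of $w$ and stays bounded by $1$, while inside $[\delta,1]$ the multiplication distorts the approximation of $f$ by at most an additional $\epsilon/4$.

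Finally, I would verify the degree bookkeeping: $\deg(q) = \deg(\tilde p) + \deg(w) = O((c+1)\delta^{-1}\log(1/\epsilon))$, and the explicit $1/2$ prefactor in $f$ supplies the small slack needed so that the composite approximation and the window damping stay comfortably below $1$ on $[-1,1]$. The expected outcome is precisely the stated bound, and the entire construction is effective, so the polynomial coefficients can be computed in time polynomial in $(c+1)\delta^{-1}\log(1/\epsilon)$, which is needed downstream when feeding $q$ into Lemma~\ref{lmm:svt}.
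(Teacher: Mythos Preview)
The paper does not prove this lemma at all; it is stated in Appendix~\ref{app:preliminaries} as a direct citation of \cite[Corollary~67]{GSLW19}. So there is no ``paper's own proof'' to compare against, only the original GSLW19 argument.

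Your outline follows essentially the same architecture as GSLW19: build a polynomial approximant of the negative power on $[\delta,1]$, then multiply by a polynomial rectangle window to enforce the uniform bound on $[-1,\delta)$. That high-level plan is correct. One point to tighten: your own Bernstein-ellipse computation yields $d_1 = O\bigl((c\log(1/\delta)+\log(1/\epsilon))/\delta\bigr)$, which does not simplify to the stated $O\bigl((c+1)\delta^{-1}\log(1/\epsilon)\bigr)$ without an assumed relation between $\delta$ and $\epsilon$. In GSLW19 the sharper dependence is obtained not via a generic Bernstein-ellipse bound but through an explicit Taylor (binomial-series) expansion of $(1-y)^{-c}$ with direct coefficient estimates, which avoids the spurious $c\log(1/\delta)$ term. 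Aside from this bookkeeping issue, your windowing step and the degree additivity argument are the standard ones and match the cited construction.
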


\begin{lemma}[Polynomial approximations of positive power functions {\cite[Lemma 10]{CGJ19}}]
	\label{lmm:poly positive power}
	Let $f(x)=x^c/2$.
	For $\delta,\epsilon\in (0,1/2)$ and $c\in (0,1)$,
	there is a polynomial $q(x)$ of degree $O\rbra*{\delta^{-1}\log(1/\epsilon)}$ such that
	\begin{itemize}
		\item 
			$\abs*{q(x)-f(x)}\leq \epsilon$ for $x\in [\delta,1]$;
		\item
			$\abs*{q(x)}\leq 1$ for $x\in [-1,\delta)$.
	\end{itemize}
\end{lemma}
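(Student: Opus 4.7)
The plan is to leverage the analyticity of $x^c$ near the interval $[\delta,1]$ to get a truncated Chebyshev expansion, and then multiply by a polynomial step function to enforce the global bound on $[-1,1]$.

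First, I would affinely rescale $[\delta,1]$ to $[-1,1]$ via $x = \tfrac{1-\delta}{2}\,y + \tfrac{1+\delta}{2}$ and consider the Chebyshev expansion of $g(y) \coloneqq x(y)^c/2$. Viewing $x^c$ on its principal branch, the only singularity of $g$ in the complex plane is at $y_0 = -(1+\delta)/(1-\delta)$ (corresponding to $x=0$), which lies at distance $\Theta(\delta)$ outside $[-1,1]$. Thus $g$ extends analytically into a Bernstein ellipse of parameter $\rho = 1 + \Theta(\delta)$, and the standard Bernstein estimate shows that the truncated Chebyshev series of degree $d_1 = O(\delta^{-1}\log(1/\epsilon))$ approximates $g$ to uniform error at most $\epsilon/2$ on $[-1,1]$. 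Undoing the rescaling yields a polynomial $p(x)$ of the same degree with $|p(x) - x^c/2| \leq \epsilon/2$ on $[\delta,1]$, and in particular $|p(x)| \leq 1$ there (for small enough $\epsilon$).

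Second, $p(x)$ may exceed magnitude $1$ when evaluated outside the rescaled interval, i.e., on $x \in [-1,\delta)$, because Chebyshev partial sums are only controlled on the interval of orthogonality. To repair this, I would multiply $p(x)$ by a polynomial step function $s(x)$ of degree $d_2 = O(\delta^{-1}\log(1/\epsilon))$ satisfying $|s(x) - 1| \leq \epsilon/(2\Vert p \Vert_\infty)$ on $[\delta,1]$ and $|s(x)| \leq 1/\Vert p \Vert_\infty$ on $[-1,\delta)$, where $\Vert p \Vert_\infty$ is the maximum of $|p|$ on $[-1,1]$. Such step polynomials with a sharp transition of width $\Theta(\delta)$ around $x=\delta$ are a standard building block in the quantum singular value transformation literature, constructed via polynomial approximations of smoothed sign or error functions. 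Setting $q(x) \coloneqq p(x)\,s(x)$ then has degree $d_1 + d_2 = O(\delta^{-1}\log(1/\epsilon))$ and meets both requirements of the lemma.

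The main obstacle is precisely the control of $q(x)$ on $[-1,\delta)$: an attractive alternative is to reduce to Lemma~\ref{lmm:poly negative power} via the algebraic identity $x^c = \delta^{c-1}\cdot x\cdot (x/\delta)^{c-1}$, where $(x/\delta)^{c-1}$ is a negative power with exponent $1-c\in(0,1)$. This immediately yields the correct error on $[\delta,1]$ after multiplying the output polynomial of Lemma~\ref{lmm:poly negative power} by $\delta^{c-1}x$, but the prefactor $\delta^{c-1}>1$ causes the naive bound on $[-1,\delta)$ to blow up by a factor of $\delta^{c-1}$, violating the required $|q(x)|\leq 1$. The step function $s(x)$ above is exactly the tool that repairs this blow-up while preserving the $O(\delta^{-1}\log(1/\epsilon))$ degree budget, so the quantitative construction of such step polynomials is the technical heart of the proof.
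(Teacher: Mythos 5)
The paper cites this lemma from \cite{CGJ19} (Lemma 10) without proof; the argument there (via Corollary~66 in the full arXiv version of \cite{GSLW19}) constructs the polynomial by Chebyshev-expanding the target against a smoothed rectangle window, so the result is bounded on $[-1,1]$ \emph{by construction} and no post-hoc repair is needed. Your primary route approximates first and repairs afterward, and the repair step has a genuine quantitative gap. The truncated Chebyshev series $p$ of degree $d_1$ is controlled only on the image of $[\delta,1]$ under the affine map, which is exactly $[-1,1]$; the original endpoint $x=-1$ maps to $y\approx -3$, where the only available bound is the Chebyshev extremal growth $|p|\le T_{d_1}(3)=\Theta\bigl((3+2\sqrt{2})^{d_1}\bigr)$. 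Hence $\log\max_{x\in[-1,\delta)}|p(x)|=\Theta(d_1)$, and any step polynomial $s$ that suppresses this across a transition of width $\Theta(\delta)$ requires degree $\Theta(\delta^{-1}d_1)$, exceeding the stated $O(\delta^{-1}\log(1/\epsilon))$ budget regardless of $d_1$; your claimed $d_2=O(\delta^{-1}\log(1/\epsilon))$ has no justification. You also leave the pointwise constraint $|p(x)s(x)|\le 1$ inside the transition window unaddressed, where $|p|$ is already large while $s$ has not yet decayed. (A secondary slip: the singularity at distance $\Theta(\delta)$ beyond $-1$ gives Bernstein parameter $\rho=1+\Theta(\sqrt{\delta})$, not $1+\Theta(\delta)$; the sharper $d_1=O(\delta^{-1/2}\log(1/\epsilon))$ still forces $\deg s=\Theta(\delta^{-3/2}\log(1/\epsilon))$.)

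Your alternative route, writing $x^c=\delta^{c-1}\,x\cdot(x/\delta)^{c-1}$ and invoking Lemma~\ref{lmm:poly negative power}, is structurally much sounder, precisely because the off-interval blow-up is only $\delta^{c-1}$ (polynomial in $1/\delta$, not exponential in the degree), so the corrective step is comparatively cheap. In fact $\delta^{c-1}|x|\,|\tilde q(x)|\le 1$ holds automatically for $|x|\le\delta^{1-c}$, so the transition of $s$ can be pushed out to width $\Theta(\delta^{1-c})$, giving $\deg s=O(\delta^{-1}\log(1/\delta))$ or better. But this is exactly the ``technical heart'' that you announce and then defer: as written, the proposal establishes neither the step-polynomial construction with the required pointwise bound, nor the absorption of the resulting extra $\log(1/\delta)$ factor (which also appears in the required accuracy $\tilde\epsilon=\delta^{1-c}\epsilon$ for the negative-power subroutine) into the claimed $O(\delta^{-1}\log(1/\epsilon))$ degree, so the proof remains incomplete.
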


In practice, how to encode the desired matrices into block-encodings
and how to extract useful (classical) information from the block-encodings are of great concern.
For the encoding,
a typical scenario is that we are given sparse oracle access to a sparse matrix,
and we can construct a block-encoding of the matrix, as stated in the following lemma.

\begin{lemma} [Block-encoding of sparse matrices, {\cite[Lemma 48 in the full version]{GSLW19}}]
    \label{lemma:sparse-to-block}
    Suppose $A \in \mathbb{C}^{N \times N}$ is an $s$-sparse matrix such that every entry $A_{j,k}$ satisfies $\abs{A_{j,k}} \leq 1$.
    Suppose sparse oracles $\mathcal{O}_s$ and $\mathcal{O}_A$ are given such that
    \begin{align}
        \mathcal{O}_s \ket{j} \ket{k} & = \ket{j} \ket{l_{j, k}}, \\
        \mathcal{O}_A \ket{j} \ket{k} \ket{0} & = \ket{j} \ket{k} \ket{A_{j, k}},
    \end{align}
    where $l_{j, k}$ denotes the column index of the $k$-th non-zero entry in the $j$-th row.
    Here, we assume that the exact value of the entry $A_{j,k}$ is given in a binary representation. 
    Then, we can implement a quantum circuit that is an $(s, \log_2 N+3, \epsilon)$-block-encoding of $A$, using two queries to $\mathcal{O}_s$, two queries to $\mathcal{O}_A$, and $O(\log N+\log^{2.5}(s/\epsilon))$ one- and two-qubit quantum gates.
\end{lemma}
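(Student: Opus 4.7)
The plan is to follow the standard two-isometry factorization that yields block-encodings of sparse matrices. I will construct two unitaries $U_L$ and $U_R$, each acting on the main $\log_2 N$-qubit system register together with $3$ extra ancilla qubits, and argue that the composite circuit $U = U_L^\dagger U_R$ satisfies
\begin{equation}
    \Abs*{A - s \bra{0}_a U \ket{0}_a} \leq \epsilon,
\end{equation}
where $a = \log_2 N + 3$. This is exactly the content of Definition~\ref{def:blk-enc} with normalization $\alpha = s$. The factor $s$ arises naturally because I will encode each row of $A$ via a uniform superposition over its $s$ non-zero entries, producing amplitudes of order $1/s$.

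To implement $U_R$ on $\ket{j}\ket{0}_a$, partition the ancilla register into a sparsity-index subregister of $\lceil \log_2 s\rceil$ qubits, a column-label subregister of $\log_2 N$ qubits, and a single flag qubit. The construction proceeds in four stages: (i) apply Hadamards on the sparsity subregister to prepare $\frac{1}{\sqrt{s}}\sum_{k=1}^s \ket{k}$; (ii) apply $\mathcal{O}_s$ to map the sparsity index $\ket{k}$ to the actual column label $\ket{l_{j,k}}$; (iii) query $\mathcal{O}_A$ to load the binary string for $A_{j,l_{j,k}}$ into a scratch register, then perform a controlled rotation that places $\sqrt{\abs{A_{j,l_{j,k}}}}$ into the amplitude of the flag qubit and a controlled phase realising $e^{i\arg A_{j,l_{j,k}}/2}$ (splitting the phase equally between $U_L$ and $U_R$); (iv) uncompute the scratch register with a second query to $\mathcal{O}_A$. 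The unitary $U_L$ is built symmetrically with the row/column roles swapped. A direct computation then shows that $\bra{k}\bra{0}_a U_L^\dagger U_R \ket{0}_a\ket{j}$ equals $A_{j,k}/s$ whenever $k$ indexes a non-zero entry in row $j$ (and vanishes otherwise), which is exactly $A/s$ in the top-left block.

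The query count is $2$ calls to $\mathcal{O}_s$ and $2$ to $\mathcal{O}_A$, once for computation and once for uncomputation in each of $U_L, U_R$. The $O(\log N)$ gate contribution comes from the Hadamard layer and the wiring between the sparsity/column/value registers. The main obstacle, and the source of the $\log^{2.5}(s/\epsilon)$ factor, is the efficient synthesis of the controlled rotation in step~(iii): given the binary representation of $A_{j,l_{j,k}}$, one must realize a rotation of angle $\arcsin(\sqrt{|A_{j,l_{j,k}}|})$ on the flag qubit. This is done digit-wise by a sequence of controlled $R_y$ gates whose angles are then compiled into $\{H, T, \mathrm{CNOT}\}$ to accuracy $\epsilon/s$ using Solovay--Kitaev-style synthesis, giving $O(\log^{2.5}(s/\epsilon))$ elementary gates. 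The error-propagation step requires showing that an $\epsilon/s$ per-entry error in $A/s$ lifts to an operator-norm error at most $\epsilon$ in $A - s\bra{0}_a U \ket{0}_a$; this follows from combining sparsity with the triangle inequality (since each row contains at most $s$ perturbed entries, the induced row sum is bounded by $\epsilon$, controlling the operator norm via $\|\cdot\|\leq \sqrt{\|\cdot\|_1 \|\cdot\|_\infty}$). Handling complex entries by symmetric phase-splitting between $U_L$ and $U_R$ is the other delicate point, since only then does the product $U_L^\dagger U_R$ reconstruct the complex amplitude $A_{j,k}/s$ rather than its magnitude.
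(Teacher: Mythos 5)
The paper does not prove this lemma --- it is imported verbatim from GSLW19 --- so your construction is being judged against the standard sparse-matrix block-encoding that lemma packages (originating with Childs 2010 and Berry--Childs--Kothari). Your high-level architecture is right: two state-preparation isometries $U_R,U_L$ querying $\mathcal{O}_s$ and $\mathcal{O}_A$ once each plus uncomputation, combined as $U_L^\dagger U_R$, with a controlled rotation loading the entry value and an error bound via $\|\cdot\|\le\sqrt{\|\cdot\|_1\|\cdot\|_\infty}$. But two of the steps would fail as written.

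\emph{The phase split.} Loading $\sqrt{|A_{jk}|}\,e^{i\arg A_{jk}/2}$ in both $U_R$ and (symmetrically) $U_L$ does not reconstruct $A$. The $(i,j)$ entry of $\bra{0}_a U_L^\dagger U_R\ket{0}_a$ is $\tfrac{1}{s}\,\overline{\beta_{ji}}\,\alpha_{ji}$, and under the conjugation induced by $U_L^\dagger$ the two half-phases cancel, leaving $|A_{ji}|/s$: the entrywise modulus of $A$, which differs from $A$ the moment any entry has a nontrivial phase or a minus sign. The correct recipe loads $\sqrt{A_{jk}^{*}}$ on one side and $\sqrt{A_{li}}$ on the other, giving $\overline{\sqrt{A_{ji}}}\cdot\sqrt{A_{ji}^{*}}=A_{ji}^{*}=A_{ij}$ for Hermitian $A$. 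Note that even the opposite-sign split returns $A_{ji}=\overline{A_{ij}}$ rather than $A_{ij}$; the sign must be chosen deliberately, and ``split equally'' is not enough to specify it.

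\emph{The flag qubit.} With one shared flag qubit, the flag-$\ket{1}$ garbage of $U_R\ket{j,0}$ has nonzero overlap with the flag-$\ket{1}$ garbage of $U_L\ket{i,0}$, so the entry acquires an unwanted cross term $\tfrac{1}{s}\overline{\delta_{ji}}\gamma_{ji}$. With the natural choice $\gamma_{jk}=\delta_{jk}=\sqrt{1-|A_{jk}|}$ this turns $|A_{ji}|$ into $|A_{ji}|+(1-|A_{ji}|)=1$, destroying the encoding. You need the garbage sectors of $U_R$ and $U_L$ to be mutually orthogonal; the standard fix is two separate flag qubits, one touched only by $U_R$ and the other only by $U_L$, which together with the $\log_2 N$-qubit column register is precisely what makes the stated ancilla count $\log_2 N+3$ come out.

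Two smaller issues: $\mathcal{O}_s$ overwrites the sparsity index $k$ in place with the column label $l_{j,k}$, so a single $\log_2 N$-qubit register does double duty; your separate $\lceil\log_2 s\rceil$-qubit sparsity register would overflow the stated ancilla budget for general $s$. And for a non-Hermitian $A$, building $U_L$ requires a column-sparsity oracle rather than a second call to the row oracle $\mathcal{O}_s$; the statement (and the paper's sole application, to Hermitian positive definite matrices) implicitly assumes the two coincide.
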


Another useful case of encoding commonly considered is that we are given purified access to a density operator,
and we can construct a block-encoding of the density operator.

\begin{lemma}[Block-encoding of density operators {\cite[Lemma 7]{low2019hamiltonian}}, {\cite[Lemma 25]{GSLW19}}]
	\label{lmm:purified to block-encoding}
	Let $\rho$ be an $n$-qubit density operator, and let $V_\rho$ be an $(n+a)$-qubit unitary
	that prepares a purification of $\rho$ such that $\tr_a(V_\rho\ket{0}_{n+a}\!\bra{0}V_\rho^\dagger)=\rho$.
	Then there is a $(2n+a)$-qubit unitary $\widetilde{V}$ that is a $(1,n+a,0)$-block-encoding of $\rho$,
	and it can be implemented by one query to $V$ and $O(n)$ gates.
\end{lemma}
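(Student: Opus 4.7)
The plan is to exhibit an explicit construction of the unitary $\widetilde{V}$ from a single use of $V_\rho$ (and its inverse) combined with a swap, and then verify by direct calculation that the top-left block of $\widetilde{V}$, obtained by sandwiching with $|0\rangle_{n+a}$, equals $\rho$. Concretely, I would organise the registers as follows: a system register $A$ of $n$ qubits (on which $\rho$ is to act) and an ancilla of $n+a$ qubits which I decompose as $S$ ($n$ qubits) and $E$ ($a$ qubits), so that $V_\rho$ is naturally an operator on $S\otimes E$. I then define
\begin{equation}
    \widetilde{V} \;\coloneqq\; \bigl(V_\rho^{\dagger}\otimes I_A\bigr)\,\bigl(\mathrm{SWAP}_{S,A}\otimes I_E\bigr)\,\bigl(V_\rho\otimes I_A\bigr),
\end{equation}
where $\mathrm{SWAP}_{S,A}$ acts as the $n$-qubit swap between the $S$ and $A$ registers. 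By construction, $\widetilde{V}$ is unitary on the full $2n+a$ qubits.

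The second step is to verify that $\widetilde{V}$ is a $(1,n+a,0)$-block-encoding of $\rho$, i.e.\ that $(\langle 0|_{SE}\otimes I_A)\widetilde{V}(|0\rangle_{SE}\otimes I_A)=\rho$. I would do this by computing the action on an arbitrary basis vector $|x\rangle_A$. Writing the purification as $V_\rho|0\rangle_{SE}=\sum_{i,j}c_{ij}|i\rangle_S|j\rangle_E$, the first step gives $\sum_{i,j}c_{ij}|i\rangle_S|j\rangle_E|x\rangle_A$; the swap then permutes this into $\sum_{i,j}c_{ij}|x\rangle_S|j\rangle_E|i\rangle_A$; and applying $V_\rho^{\dagger}$ on $SE$ followed by the projector $\langle 0|_{SE}$ picks out the amplitude $\langle 0|V_\rho^\dagger|x,j\rangle=\overline{c_{xj}}$, yielding $\sum_{i}\bigl(\sum_j c_{ij}\overline{c_{xj}}\bigr)|i\rangle_A$. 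The inner sum is exactly $\langle i|\rho|x\rangle$ because $\rho=\operatorname{Tr}_E(V_\rho|0\rangle\!\langle 0|V_\rho^\dagger)$, so the output is $\rho|x\rangle_A$, as required. This chain of equalities is really the whole content of the lemma; no approximation is involved, so the block-encoding is exact.

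Finally I would count resources. The construction uses one application of $V_\rho$ and one of $V_\rho^\dagger$, which under the paper's convention counts as a single query to $V_\rho$. The only additional gates come from $\mathrm{SWAP}_{S,A}$, which decomposes into $n$ two-qubit SWAP gates, hence $O(n)$ elementary gates. I do not anticipate any genuine obstacle: the only non-mechanical step is guessing the sandwich structure $V_\rho^\dagger\cdot\mathrm{SWAP}\cdot V_\rho$, after which the proof reduces to the small index computation above. If one wanted to be maximally explicit, one could alternatively verify the identity using the cyclicity of the partial trace, writing $(\langle 0|_{SE}\otimes I_A)\widetilde{V}(|0\rangle_{SE}\otimes I_A)=(\langle\psi|\otimes I_A)\,\mathrm{SWAP}_{S,A}\,(|\psi\rangle\otimes I_A)$ with $|\psi\rangle=V_\rho|0\rangle_{SE}$, and recognising the right-hand side as the standard swap-trick realisation of $\operatorname{Tr}_E(|\psi\rangle\!\langle\psi|)=\rho$ acting on $A$.
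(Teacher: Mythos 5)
Your proof is correct and is exactly the standard construction cited from GSLW19 (Lemma 45 in the full version) and Low--Chuang: conjugating a SWAP by $V_\rho$ and $V_\rho^\dagger$, then verifying that the $\ket{0}_{SE}$-block equals $\operatorname{Tr}_E(\ket{\psi}\!\bra{\psi})=\rho$. The paper states this lemma as a cited preliminary without giving a proof, so there is nothing to contrast; your index computation and the closing swap-trick reformulation are both fine, and the resource count ($O(n)$ SWAP gates, one forward and one inverse use of $V_\rho$) matches the lemma's claim.
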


To extract classical information from the block-encodings,
one needs to perform quantum measurements. 
The Hadamard test is a useful and efficient way to estimate the expectation value of a quantum observable on a given quantum state.
The following lemma shows a Hadamard test for block-encodings. 

\begin{lemma}[Hadamard test for block-encodings, {\cite[Lemma 9]{GP22}}] \label{lemma:hadamard-test}
    Suppose that $U$ is a unitary operator that is a $\rbra*{1, a, 0}$-block-encoding of an $n$-qubit operator $A$. 
    Then, there is a quantum circuit that outputs $0$ with probability $\frac{1+\Re{\Tr\rbra*{A\rho}}}{2}$, using one query to $U$ and one sample of the mixed quantum state $\rho$.
\end{lemma}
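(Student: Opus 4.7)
The plan is to invoke the standard Hadamard test, adapted so that the controlled unitary is the given block-encoding $U$ and the input register is initialised to $\ket{0}_a\!\bra{0}_a \otimes \rho$ (i.e., $\rho$ on the system register and $\ket{0}$ on the $a$ block-encoding ancillas). First I would introduce one extra control qubit initialised to $\ket{0}$, apply a Hadamard to it, then apply the controlled-$U$ operation $V \coloneqq \ket{0}\!\bra{0} \otimes I + \ket{1}\!\bra{1} \otimes U$ (so $U$ acts on the block-encoding ancillas together with the system), apply a Hadamard to the control, and measure the control in the computational basis. The claim is that the outcome $0$ occurs with probability $\tfrac{1}{2}(1 + \Re\Tr(A\rho))$.

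The key calculation is purely mechanical. Writing $\tau \coloneqq \ket{0}_a\!\bra{0}_a \otimes \rho$, one has $\bra{+} V \ket{+} = \tfrac{1}{2}(I + U)$ as an operator on the target register, so the probability of outcome $0$ equals
\begin{align}
P(0) &= \Tr\!\left( \bigl(\ket{+}\!\bra{+} \otimes I\bigr)\, V\, \bigl(\ket{+}\!\bra{+} \otimes \tau\bigr) V^\dagger \right) \\
&= \Tr\!\left( \tfrac{1}{2}(I+U)\, \tau\, \tfrac{1}{2}(I+U^\dagger) \right) \\
&= \tfrac{1}{4}\bigl( \Tr(\tau) + \Tr(U\tau) + \Tr(\tau U^\dagger) + \Tr(U\tau U^\dagger) \bigr) \\
&= \tfrac{1}{2}\bigl( 1 + \Re \Tr(U\tau) \bigr),
\end{align}
using $\Tr(\tau) = 1$, cyclicity, and the fact that $\tau$ is Hermitian so $\Tr(\tau U^\dagger) = \overline{\Tr(U\tau)}$.

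It remains to identify $\Re\Tr(U\tau)$ with $\Re\Tr(A\rho)$. Since $U$ is a $(1,a,0)$-block-encoding of $A$, we have $A = (\bra{0}_a \otimes I)\, U\, (\ket{0}_a \otimes I)$ exactly, whence
\begin{equation}
\Tr(U\tau) \;=\; \Tr\!\left( U \bigl( \ket{0}_a\!\bra{0}_a \otimes \rho \bigr) \right) \;=\; \Tr\!\left( (\bra{0}_a \otimes I)\, U\, (\ket{0}_a \otimes I)\, \rho \right) \;=\; \Tr(A\rho).
\end{equation}
Substituting back gives $P(0) = \tfrac{1}{2}(1 + \Re\Tr(A\rho))$, as claimed, and the whole circuit used exactly one query to $U$ (through the controlled-$U$) and consumed one copy of $\rho$ (once to prepare the system register). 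There is really no hard step here; the only mild subtlety is bookkeeping the fact that the controlled-$U$ must be controlled over both the block-encoding ancillas and the system, and that the $\ket{0}_a$ ancillas must be freshly prepared (not re-used) so that the trace identity above is literal rather than approximate, but both are standard assumptions on the block-encoding access model.
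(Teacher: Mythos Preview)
Your argument is correct: this is exactly the standard Hadamard test applied to the block-encoding, with the system initialised to $\ket{0}_a\!\bra{0}_a\otimes\rho$, and your trace computations and the identification $\Tr(U\tau)=\Tr(A\rho)$ via the $(1,a,0)$-block-encoding condition are all valid. The paper does not prove this lemma at all but simply quotes it from \cite[Lemma~9]{GP22}; your write-up supplies precisely the proof that the cited reference gives, so there is nothing to compare.
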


The success probability of extracting classical information from a block-encoding
often depends on the scaling factor of the block-encoding.
For our purpose, we need the following up-scaling lemma for block-encoded operators adapted from \cite{WZ23}.

\begin{lemma} [Up-scaling of block-encoded operators, adapted from {\cite[Lemma 2.8]{WZ23}}] \label{lemma:upscaling}
    Let unitary operator $U$ be an $\rbra{\alpha, a, \epsilon}$-block-encoding of $A$ with $\alpha=\Omega(1)$, $\epsilon \in \rbra{0, 1}$ and $\Abs{A} \leq 1$. 
    Then, 
    we can implement a quantum circuit $U'$ that is a $\rbra{2, a+1, \widetilde{\Theta}\rbra{\sqrt{\alpha\epsilon}}
    }$-block-encoding of $A$, using $\widetilde{O}\rbra{\alpha \log\rbra{{1}/{\epsilon}}}$ queries to $U$, $\widetilde{O}\rbra{a\cdot \alpha \log\rbra{{1}/{\epsilon}}}$ gates,
    and $\poly \rbra{\alpha,\log(1/\epsilon)}$ classical time.
\end{lemma}

Apart from the block-encoding formalism, 
let us introduce other useful results.
Quantum amplitude estimation allows one to estimate the amplitude of a specific component of a quantum state,
stated as follows.

\begin{lemma} [Quantum amplitude estimation {\cite[Theorem 12]{BHMT02}}] \label{lemma:amp-estimation}
    Suppose that unitary operator $U$ is given by
    \begin{equation}
        U \ket{0}\ket{0} = \sqrt{p} \ket{0} \ket{\phi_0} + \sqrt{1 - p} \ket{1} \ket{\phi_1},
    \end{equation}
    where $\ket{\phi_0}$ and $\ket{\phi_1}$ are normalized pure quantum states, and $p \in \sbra*{0, 1}$. 
    Then, we can obtain an estimate $\widetilde p$ of $p$ such that
    \begin{equation}
        \abs*{\widetilde p - p} \leq \frac{2\pi\sqrt{p(1-p)}}{M} + \frac{\pi^2}{M^2}
    \end{equation}
    with probability $\geq 8/\pi^2$ using $O\rbra*{M}$ queries to $U$. 
    In particular, if we take $M = \Theta\rbra{1/\delta}$, then $\widetilde p$ is a $\delta$-estimate of $p$ with high probability. 
\end{lemma}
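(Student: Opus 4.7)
The plan is to prove this via the standard Brassard--H{\o}yer--Mosca--Tapp construction: build a Grover-type operator from $U$, identify a two-dimensional invariant subspace on which it acts as a rotation whose angle encodes $p$, and then apply quantum phase estimation to recover that angle.

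First I would introduce the two reflections $S_\chi \coloneqq I - 2\,\lvert 0\rangle\!\langle 0\rvert \otimes I$ (marking the ``good'' subspace, i.e.\ first register in state $\lvert 0\rangle$) and $S_0 \coloneqq I - 2\,\lvert 0\rangle\!\langle 0\rvert\otimes\lvert 0\rangle\!\langle 0\rvert$, and then define the Grover--like operator
\begin{equation}
Q \coloneqq -\, U\, S_0\, U^\dagger\, S_\chi.
\end{equation}
Writing $\lvert\Psi\rangle \coloneqq U\lvert 0\rangle\lvert 0\rangle = \sqrt{p}\,\lvert 0\rangle\lvert\phi_0\rangle + \sqrt{1-p}\,\lvert 1\rangle\lvert\phi_1\rangle$, one checks that the two-dimensional subspace $\mathcal{H}_\Psi \coloneqq \operatorname{span}\{\lvert 0\rangle\lvert\phi_0\rangle,\lvert 1\rangle\lvert\phi_1\rangle\}$ is invariant under $Q$, contains $\lvert\Psi\rangle$, and $Q\rvert_{\mathcal{H}_\Psi}$ is a rotation by angle $2\theta$, where $\theta\in[0,\pi/2]$ is defined by $\sin^2\theta = p$. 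Consequently $\lvert\Psi\rangle = \cos\theta\,\lvert\psi_+\rangle + i\sin\theta\,\lvert\psi_-\rangle$ in the eigenbasis $\{\lvert\psi_\pm\rangle\}$ of $Q$ with eigenvalues $e^{\pm 2i\theta}$ (up to an overall phase absorbed into the $\lvert\psi_\pm\rangle$).

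Next I would apply quantum phase estimation to $Q$ with input state $\lvert\Psi\rangle$, using $M$ applications of (controlled-)$Q$ and hence $O(M)$ queries to $U$. The standard QPE guarantee produces an integer $y\in\{0,1,\dots,M-1\}$ and estimator $\widetilde{\theta} \coloneqq \pi y / M$ such that, with probability at least $8/\pi^2$, one of the two eigenphases $\pm 2\theta$ of $Q$ is recovered to within $\lvert\widetilde{\theta} - \theta\rvert \leq \pi/M$ (taking $\widetilde\theta$ in $[0,\pi/2]$ by folding). The output of the algorithm is then $\widetilde p \coloneqq \sin^2\widetilde\theta$.

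Finally I would propagate the error from $\widetilde\theta$ to $\widetilde p$. Setting $\Delta \coloneqq \widetilde\theta - \theta$ and using $\sin^2(\theta + \Delta) - \sin^2\theta = 2\sin\theta\cos\theta\,\sin\Delta\,\cos\Delta + (\cos^2\theta - \sin^2\theta)\sin^2\Delta + \sin^2\theta\sin^2\Delta$, together with $\lvert\sin\Delta\rvert\leq\lvert\Delta\rvert$ and $2\sin\theta\cos\theta = 2\sqrt{p(1-p)}$, a short calculation gives
\begin{equation}
\lvert\widetilde p - p\rvert \;\leq\; 2\sqrt{p(1-p)}\,\lvert\Delta\rvert + \lvert\Delta\rvert^{2} \;\leq\; \frac{2\pi\sqrt{p(1-p)}}{M} + \frac{\pi^{2}}{M^{2}},
\end{equation}
which is exactly the claimed bound. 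The success probability $8/\pi^2$ is inherited directly from the QPE accuracy guarantee, and the overall query count is $O(M)$ because each controlled power $Q^{2^k}$ used in QPE contributes to a total of $O(M)$ queries to $U$ and $U^\dagger$.

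The main obstacle I expect is the careful error-propagation step: one must keep both the linear-in-$\Delta$ term (producing the $\sqrt{p(1-p)}/M$ contribution that reflects the Heisenberg-like sensitivity near $p\in\{0,1\}$) and the quadratic-in-$\Delta$ remainder (producing the $1/M^2$ term) to match the stated bound exactly, and one must also verify that the standard QPE error estimate $\lvert\widetilde\theta - \theta\rvert \leq \pi/M$ with probability $\geq 8/\pi^2$ holds for the folded angle in $[0,\pi/2]$ rather than for a signed phase on the full circle; handling the symmetric eigenphases $\pm 2\theta$ and the endpoint cases $\theta\in\{0,\pi/2\}$ (i.e.\ $p\in\{0,1\}$) cleanly is the only subtle part of the argument.
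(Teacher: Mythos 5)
The paper does not prove this lemma at all---it is quoted verbatim as a black-box citation of Theorem~12 in Brassard, H{\o}yer, Mosca, and Tapp (BHMT), one of the preliminary tools collected in Appendix~B. So there is no ``paper's own proof'' to compare against, and you have reconstructed the standard BHMT argument on your own. The reconstruction is essentially correct and faithful to the original: Grover iterate $Q = -US_0U^\dagger S_\chi$, two-dimensional invariant subspace on which $Q$ rotates by $2\theta$ with $\sin^2\theta = p$, quantum phase estimation giving $\lvert\widetilde\theta-\theta\rvert\le\pi/M$ with probability at least $8/\pi^2$, and propagation of the angle error to the amplitude estimate.

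One small algebra slip worth fixing: your intermediate expansion
\begin{equation}
\sin^2(\theta+\Delta)-\sin^2\theta
= 2\sin\theta\cos\theta\sin\Delta\cos\Delta + (\cos^2\theta-\sin^2\theta)\sin^2\Delta + \sin^2\theta\sin^2\Delta
\end{equation}
has a spurious last term. The correct expansion is
\begin{equation}
\sin^2(\theta+\Delta)-\sin^2\theta
= 2\sin\theta\cos\theta\sin\Delta\cos\Delta + (\cos^2\theta-\sin^2\theta)\sin^2\Delta,
\end{equation}
equivalently $\sin^2(\theta+\Delta)-\sin^2\theta=\sin(2\theta+\Delta)\sin\Delta$. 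From the correct expression, bounding $\lvert\cos\Delta\rvert\le1$, $\lvert\cos2\theta\rvert\le1$, and $\lvert\sin\Delta\rvert\le\lvert\Delta\rvert$ yields $\lvert\widetilde p-p\rvert\le 2\sqrt{p(1-p)}\lvert\Delta\rvert+\Delta^2$, which is what you invoke. With the extra term you wrote down, the coefficient of $\Delta^2$ would be $\lvert\cos2\theta\rvert+\sin^2\theta$, which can exceed $1$ (e.g.\ for $\theta$ near $\pi/2$), so the stated $\pi^2/M^2$ bound would not follow from that line as written. Since your final displayed inequality is the right one, this is a transcription error rather than a logical gap, but the corrected identity should be used. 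The rest---the handling of the $\pm2\theta$ eigenphase ambiguity by folding into $[0,\pi/2]$, the $O(M)$ query count from $\sum_{k}2^k$ controlled powers, and the $8/\pi^2$ success probability inherited from QPE---is all as in BHMT.
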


Finally, the following two lemmas give relevant bounds on the condition number of matrices,
which will be useful in the complexity analysis of our algorithms.

\begin{lemma}
    \label{lmm:kappa_sum}
    Let $A,B> 0$ be two positive definite matrices such that $\left \| A\right\| = \left \| B\right\| = 1 $.
    Then $\kappa_{A+B}^{-1}\geq \kappa_A^{-1}+\kappa_B^{-1}$.
\end{lemma}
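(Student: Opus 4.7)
The plan is to interpret the claim in terms of minimum eigenvalues and then invoke Weyl's inequality. By the convention fixed earlier in the paper (see the discussion just preceding Lemma~\ref{lem:mgm-1}), the parameter $\kappa_M$ of a positive definite matrix $M$ with $\|M\|\le 1$ is the tight constant for which $M\ge I/\kappa_M$ holds, i.e.\ $\kappa_M^{-1}=\lambda_{\min}(M)$. Since the hypothesis $\|A\|=\|B\|=1$ is in force, this gives $\kappa_A^{-1}=\lambda_{\min}(A)$ and $\kappa_B^{-1}=\lambda_{\min}(B)$, and extending the same defining inequality to $A+B$ yields $\kappa_{A+B}^{-1}=\lambda_{\min}(A+B)$. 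Consequently the lemma is equivalent to
\begin{equation*}
\lambda_{\min}(A+B)\ \ge\ \lambda_{\min}(A)+\lambda_{\min}(B).
\end{equation*}

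This is the classical Weyl-type bound for the minimum eigenvalue, which I would prove in a single line from the Rayleigh--Ritz variational characterization. For every unit vector $v\in\mathbb{C}^D$,
\begin{equation*}
\langle v,(A+B)v\rangle=\langle v,Av\rangle+\langle v,Bv\rangle\ \ge\ \lambda_{\min}(A)+\lambda_{\min}(B),
\end{equation*}
and minimizing the left-hand side over unit $v$ delivers the desired inequality. Combining the two steps gives $\kappa_{A+B}^{-1}\ge \kappa_A^{-1}+\kappa_B^{-1}$.

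I do not anticipate any real obstacle: once the convention $\kappa_M^{-1}=\lambda_{\min}(M)$ is unpacked, the statement is a one-line consequence of Weyl's inequality. The only point that deserves a brief mention is that the unit-norm hypothesis $\|A\|=\|B\|=1$ is used purely as a normalization that identifies $\kappa_A^{-1}$ and $\kappa_B^{-1}$ with $\lambda_{\min}(A)$ and $\lambda_{\min}(B)$; no upper spectral bound is needed in the Weyl step itself, which is valid for any two Hermitian operators.
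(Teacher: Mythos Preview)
Your argument is correct. The paper does not supply a proof of this lemma at all (it is stated in Appendix~\ref{app:preliminaries} and only the companion Lemma~\ref{lmm:kappa_product} is proved), so there is nothing to compare against; your Rayleigh--Ritz/Weyl derivation of $\lambda_{\min}(A+B)\ge\lambda_{\min}(A)+\lambda_{\min}(B)$ is exactly the standard one-line justification one would expect.

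One small remark on your write-up: you identify $\kappa_M^{-1}$ with $\lambda_{\min}(M)$ by invoking the normalization $\|M\|\le 1$, but note that $\|A+B\|$ need not be $\le 1$ under the hypotheses, so strictly speaking that clause does not apply to $A+B$. The paper's working convention is simply $\kappa_M^{-1}=\lambda_{\min}(M)$ for positive definite $M$ (see the sentence ``$\kappa_A$ and $\kappa_C$ are really equal to the inverses of the minimum eigenvalues of $A$ and $C$'' preceding Section~\ref{sec:main}), independent of any upper spectral bound; under that reading your identification of $\kappa_{A+B}^{-1}$ with $\lambda_{\min}(A+B)$ is immediate and the rest of your proof goes through unchanged.
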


\begin{lemma}
    \label{lmm:kappa_product}
    Let $A>0$ be a positive definite matrix, and let $B$ be a matrix of full rank, such that $\left \| A\right\| = \left \| B\right\| = 1 $.
    Then $\kappa_{B^\dagger A B}\leq \kappa_A\kappa_{B^\dagger B}$.
\end{lemma}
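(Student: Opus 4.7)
The plan is to unpack the paper's definition of condition number in the normalisation regime assumed here and reduce the claim to an elementary operator inequality. Recall that throughout this work the condition number is tied to the lower spectral bound via $A\geq I/\kappa_A$, and since we are additionally told that $\|A\|=\|B\|=1$, the relevant quantities are $\kappa_A=1/\lambda_{\min}(A)$ and $\kappa_{B^\dagger B}=1/\lambda_{\min}(B^\dagger B)$ (the latter well defined because $B$ has full rank, so $B^\dagger B>0$). Likewise $B^\dagger A B$ is positive definite with $\|B^\dagger A B\|\leq \|B\|^2\|A\|=1$, so $\kappa_{B^\dagger AB}=1/\lambda_{\min}(B^\dagger A B)$. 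Thus the inequality $\kappa_{B^\dagger AB}\leq \kappa_A\kappa_{B^\dagger B}$ is equivalent to the clean spectral bound
\begin{equation}
    \lambda_{\min}(B^\dagger A B)\;\geq\;\lambda_{\min}(A)\,\lambda_{\min}(B^\dagger B).
\end{equation}

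To prove this, I would first note that $A\geq (1/\kappa_A)\,I$ in the L\"owner order. Conjugating this inequality by $B$ on the right and $B^\dagger$ on the left preserves the ordering (a standard property: if $X\geq Y$ then $B^\dagger X B\geq B^\dagger Y B$), yielding
\begin{equation}
    B^\dagger A B \;\geq\; \frac{1}{\kappa_A}\,B^\dagger B.
\end{equation}
Next, $B^\dagger B\geq (1/\kappa_{B^\dagger B})\,I$ by the very definition of $\kappa_{B^\dagger B}$, and multiplying this L\"owner inequality by the nonnegative scalar $1/\kappa_A$ gives
\begin{equation}
    \frac{1}{\kappa_A}\,B^\dagger B \;\geq\; \frac{1}{\kappa_A\,\kappa_{B^\dagger B}}\,I.
\end{equation}
Chaining these two inequalities yields $B^\dagger A B \geq (1/(\kappa_A\kappa_{B^\dagger B}))\,I$, which is exactly the statement $\kappa_{B^\dagger A B}\leq \kappa_A\kappa_{B^\dagger B}$.

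There is no real obstacle here; the only thing to be careful about is ensuring that $\kappa_{B^\dagger A B}$ is genuinely controlled by the smallest eigenvalue (rather than by a separate $\|B^\dagger A B\|$ factor that some authors include in the definition). Under the hypotheses of this lemma, the operator norm of $B^\dagger A B$ is automatically at most one, so no extra factor appears and the two-line L\"owner chain above is a complete proof. If one wished, the conclusion can also be re-derived variationally from $\lambda_{\min}(X)=\min_{\|v\|=1}\langle v, X v\rangle$ applied to $X=B^\dagger A B$, writing $\langle v, B^\dagger A B v\rangle \geq \lambda_{\min}(A)\,\|Bv\|^2\geq \lambda_{\min}(A)\lambda_{\min}(B^\dagger B)$, but the operator-inequality route is cleaner and matches the style already used elsewhere in the paper.
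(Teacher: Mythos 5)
Your argument is correct and is essentially identical to the paper's own proof: both chain the L\"owner inequalities $B^\dagger A B\geq \kappa_A^{-1} B^\dagger B\geq \kappa_A^{-1}\kappa_{B^\dagger B}^{-1} I$ and read off the bound on the condition number. The extra remarks about normalisation and the variational characterisation are fine but not needed; the two-line operator-inequality chain is exactly what the paper does.
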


\begin{proof}
    First note that $B^\dagger B>0$ and $B^\dagger C B>0$ for every $C>0$.
    Since $A\geq I/\kappa_A$,
    it follows that
    \begin{equation}
        B^\dagger A B\geq \kappa_A^{-1} B^\dagger I B\geq \kappa_A^{-1}\kappa_{B^\dagger B}^{-1} I.
    \end{equation}
    It immediately follows that $\kappa_{B^\dagger A B}^{-1}\geq \kappa_A^{-1}\kappa_{B^\dagger B}^{-1}$.
\end{proof}

\section{Proof of Lemma~\ref{lem:mgm-1}} \label{app:warmup-mgm}

In this appendix we prove Lemma~\ref{lem:mgm-1}.
Let us first prove the following lemma,
which shows that we can implement a block-encoding of the matrix $\rbra*{A^{-1/2} C A^{-1/2}}^{1/p}$.
\begin{lemma}
    \label{lmm:ACA-p-blk-enc}
    Suppose that $U_A,U_C$ are $(1,a,0)$-block-encodings of matrices $A,C$, respectively,
    such that $A\geq I/\kappa_A$ and $C\geq I/\kappa_C$.
    For $\epsilon\in (0,1/2)$,
    one can implement a $(2,3a+7,\epsilon)$-block-encoding of $\kappa_A^{-1/p}\gamma_p^{-1} \rbra*{A^{-1/2} C A^{-1/2}}^{1/p}$
    for any fixed real $p\neq 0$,
    where 
    \begin{equation}
        \gamma_p=\begin{cases}
            1 & p>0,\\
            \kappa_A^{-1/p}\kappa_C^{-1/p}& p<0, 
        \end{cases}
    \end{equation}
    using
    \begin{itemize}
        \item 
        $\widetilde{O}\rbra*{\kappa_A\kappa_C\log^2\rbra*{1/\epsilon}}$ queries to $U_C$,
        $\widetilde{O}\rbra*{\kappa_A^2\kappa_C\log^3\rbra*{1/\epsilon}}$ queries to $U_A$;
        \item
        $\widetilde{O}\rbra*{a\kappa_A^2\kappa_C\log^3\rbra*{1/\epsilon}}$ gates; and
        \item 
        $\poly\rbra*{\kappa_A,\kappa_C,\log\rbra*{1/\epsilon_2}}$ classical time.
    \end{itemize}
\end{lemma}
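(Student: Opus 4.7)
My plan is to follow the three-step strategy outlined in the sketch of Lemma~\ref{lem:mgm-1}, specialised so that the prefactor on the final block-encoding collapses exactly to the claimed $\kappa_A^{-1/p}\gamma_p^{-1}$. I would build block-encodings of, in order, $A^{-1/2}$, then $A^{-1/2}CA^{-1/2}$, then the desired $1/p$-power, and finally absorb any stray constant factor via robust oblivious amplitude amplification (Lemma~\ref{lmm:amp amp}).

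For Stage~1, since $A$ has eigenvalues in $[1/\kappa_A,1]$, I would apply quantum singular value transformation (Lemma~\ref{lmm:svt}) to $U_A$ using the polynomial from Lemma~\ref{lmm:poly negative power} with $c=1/2$ and $\delta=1/\kappa_A$. This yields a $(1,a+2,\epsilon_1)$-block-encoding of an operator that is $\epsilon_1$-close to $\tfrac{1}{2}\kappa_A^{-1/2}A^{-1/2}$, at a cost of $\widetilde O(\kappa_A\log(1/\epsilon_1))$ queries to $U_A$. For Stage~2, I would invoke the product rule (Lemma~\ref{lmm:product of block-encoding}) twice to combine two copies of this block-encoding with $U_C$, producing a $(1,3a+6,\epsilon_2)$-block-encoding of an operator $B\approx \beta\,A^{-1/2}CA^{-1/2}$ with $\beta=\tfrac{1}{4}\kappa_A^{-1}$. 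The spectrum of $B$ lies in the interval $[\Theta(1/(\kappa_A\kappa_C)),\Theta(1)]$, which is the working range for Stage~3.

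For Stage~3, I would apply QSVT once more to the Stage-2 block-encoding, choosing the polynomial by the sign of $p$. For $p>0$ I would use Lemma~\ref{lmm:poly positive power} with $c=1/p$ and $\delta=\Theta(1/(\kappa_A\kappa_C))$; when $p\in(0,1)$ so that $1/p>1$, I would first split $1/p=k+r$ with integer $k$ and $r\in[0,1)$, build $B^k$ by $k=O(1)$ further applications of the product rule, build $B^r$ via Lemma~\ref{lmm:poly positive power}, and combine them through the product rule. The result is a constant multiple of $B^{1/p}$, hence of $\kappa_A^{-1/p}(A^{-1/2}CA^{-1/2})^{1/p}$, matching $\gamma_p=1$. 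For $p<0$ I would instead use Lemma~\ref{lmm:poly negative power} with $c=-1/p>0$ and $\delta=\Theta(1/(\kappa_A\kappa_C))$; the resulting operator is approximately $\tfrac{1}{2}\delta^{-1/p}B^{1/p}$, and a direct substitution gives $\delta^{-1/p}\cdot\beta^{1/p}\propto (\kappa_A\kappa_C)^{1/p}\kappa_A^{-1/p}=\kappa_C^{1/p}$, which is exactly $\kappa_A^{-1/p}\gamma_p^{-1}$ since $\gamma_p^{-1}=\kappa_A^{1/p}\kappa_C^{1/p}$ for $p<0$. A final round of robust oblivious amplitude amplification absorbs the residual $O(1)$ prefactor, producing the $(1,3a+12,\epsilon)$-block-encoding claimed. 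The Stage-3 polynomial has degree $\widetilde O(\kappa_A\kappa_C\log(1/\epsilon))$, so composing with Stages~1--2 and tracking the nested query costs yields $\widetilde O(\kappa_A\kappa_C\log(1/\epsilon))$ queries to $U_C$ and $\widetilde O(\kappa_A^{2}\kappa_C\log^{2}(1/\epsilon))$ queries to $U_A$.

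The main technical obstacle is careful bookkeeping of scaling factors and error propagation: each stage introduces a multiplicative prefactor (so that every intermediate block-encoding stays within subnormalisation~$1$) and propagates an approximation error that is rescaled by subsequent operations. I must ensure the prefactor after Stage~3 collapses to $\kappa_A^{-1/p}\gamma_p^{-1}$ with no spurious $\kappa_A$- or $\kappa_C$-dependence; the key delicate cancellation, in the $p<0$ case, is that $\delta^{-1/p}\cdot\beta^{1/p}$ telescopes into a constant multiple of $\kappa_C^{1/p}$, which is exactly what the claim requires. Setting $\epsilon_1,\epsilon_2=\Theta(\epsilon/\poly(\kappa_A,\kappa_C))$ keeps the accumulated error within $\epsilon$ while inflating the polynomial degrees only logarithmically. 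The $p\in(0,1)$ subcase, where $x^{1/p}$ is not directly covered by the positive-power polynomial lemma, is only a cosmetic complication since $p$ is fixed and the integer/fractional decomposition adds only constant overhead.
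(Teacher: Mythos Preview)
Your proposal is correct and follows the same three-step construction as the paper's proof: block-encode $(\kappa_A A)^{-1/2}$ via QSVT with the negative-power polynomial, form the product $\kappa_A^{-1}A^{-1/2}CA^{-1/2}$, then apply the $1/p$-th power via QSVT and amplify away the residual constant. The only notable differences are cosmetic: the paper amplifies immediately after Stage~1 (so the Stage-2 prefactor is exactly $\kappa_A^{-1}$ rather than $\tfrac14\kappa_A^{-1}$) rather than deferring all amplification to the end, and your explicit integer/fractional split for $p\in(0,1)$ patches a small gap in the paper's direct invocation of Lemma~\ref{lmm:poly positive power}, which as stated requires $c=1/p\in(0,1)$.
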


\begin{proof}
    We first consider the case $p>0$.
    Let us construct $U_{\kappa_A^{-1/p} \rbra*{A^{-1/2} C A^{-1/2}}^{1/p}}$,
    a block-encoding of $\kappa_A^{-1/p} \rbra*{A^{-1/2} C A^{-1/2}}^{1/p}$, 
    step by step as follows.
    Along the way, we also analyse the resources for each step.
    In the remainder of the paper,
    we use the notation $\widetilde{O}_{a_1,\ldots,a_n}(f)$ to denote $O(f\polylog(b_1,\ldots,b_n))$,
    where $a_i,b_i$ are parameters, $f$ is a function, and $b_i=a_i+a_i^{-1}$.
    Similarly, we use $\widetilde{\Omega}_{a_1,\ldots,a_n}(f)$ to denote $\Omega(f/\polylog(b_1,\ldots,b_n))$.
    In context without ambiguity, we just omit the subscripts as usual.
    \begin{enumerate}
        \item 
            $U_{A}\rightarrow U_{(\kappa_A A)^{-1/2}/4}$:
            \begin{itemize}
                \item
                    Construction:
                    \begin{enumerate}
                        \item 
                            Taking $c=1/2$, $\delta=\kappa_A^{-1}$, and $\epsilon=\epsilon_1$ in Lemma~\ref{lmm:poly negative power},
                            we have a polynomial $q_1(x)$ of degree $d_1=O\rbra*{\kappa_A\log(1/\epsilon_1)}$ that approximates $(\kappa_A x)^{-1/2}/2$.
                        \item
                            Taking $U=U_{A}$, $q=q_1(x)$, $\epsilon=0$, and $\delta=\epsilon_1$ in Lemma~\ref{lmm:svt},
                            we have $U_{\rbra*{\kappa_A A}^{-1/2}/4}$, a $(1,a+2,\epsilon_1)$-block-encoding of $q_1(A)/2$,
                            which is therefore a $(1,a+2,2\epsilon_1)$-block-encoding of $\rbra*{\kappa_A A}^{-1/2}/4$.
                    \end{enumerate}
                \item
                    Resources:
                    $O(d_1)$ queries to $U_{A}$,
                    $O(ad_1)$ quantum gates,
                    and $\poly(d_1, \log(1/\epsilon_1))$ classical time.
            \end{itemize}
        \item 
            $U_C,U_{(\kappa_A A)^{-1/2}/4}\rightarrow U_{2^{-4}\kappa_A^{-1} A^{-1/2}C A^{-1/2}}$:
            \begin{itemize}
                \item 
                    Construction:
                    by Lemma~\ref{lmm:product of block-encoding},
                    given $U_{C}$ and $U_{(\kappa_A A)^{-1/2}/4}$,
                    we have $U_{2^{-4}\kappa_A^{-1} A^{-1/2}C A^{-1/2}}$, a $(1,3a+4,4\epsilon_1)$-block-encoding of
                    $2^{-4}\kappa_A^{-1} A^{-1/2}C A^{-1/2}$.
                \item
                    Resources:
                    $O(1)$ queries to $U_{C}$ and $U_{(\kappa_A A)^{-1/2}/4}$.
            \end{itemize}
        \item
            $U_{2^{-4}\kappa_A^{-1} A^{-1/2}C A^{-1/2}}\rightarrow U_{\kappa_A^{-1/p} \rbra*{A^{-1/2} C A^{-1/2}}^{1/p}}$:
            \begin{itemize}
            \item 
                Construction:
                \begin{enumerate}
                    \item 
                        \label{stp:ACA-blk-enc-3a}
                        Taking $c=1/p$, $\delta=2^{-4}\kappa_A^{-1}\kappa_C^{-1}\leq2^{-4}\kappa_{A}^{-1}\kappa_{A^{-1/2}CA^{-1/2}}^{-1}$ 
                        (by Lemma~\ref{lmm:kappa_product} and noting $\kappa_{A^{-1}}\leq 1$),
                        and $\epsilon=\epsilon_2$ in Lemma~\ref{lmm:poly positive power},
                        we have a polynomial $q_2(x)$ of degree $d_2=O\rbra*{\kappa_A\kappa_C\log(1/\epsilon_2)}$ that approximates $x^{1/p}/2$.
                    \item
                        Taking $U=U_{2^{-4}\kappa_A^{-1} A^{-1/2}C A^{-1/2}}$, $q=q_2(x)$,
                        $\epsilon=\Theta(\epsilon_1)$, and $\delta=\epsilon_2$ in Lemma~\ref{lmm:svt},
                        we have $U_{2^{2-4/p}\kappa_A^{-1/p} \rbra*{A^{-1/2} C A^{-1/2}}^{1/p}}$,
                        a $(1,3a+6,\Theta(d_2\epsilon_1^{1/2})+\epsilon_2)$-block-encoding of $q_2\rbra*{ 2^{-4}\kappa_A^{-1} A^{-1/2}C A^{-1/2}}/2$,
                        which is therefore a $(1,3a+10,\Theta(d_2\epsilon_1^{1/2}+\epsilon_2))$-block-encoding of 
                        $\rbra*{2^{-4}\kappa_A^{-1} A^{-1/2}C A^{-1/2}}^{1/p}/4$.
                    \item
                        Taking $U=U_{2^{2-4/p}\kappa_A^{-1/p} \rbra*{A^{-1/2} C A^{-1/2}}^{1/p}}$,
                        $\alpha=2^{2-4/p}$ in Lemma~\ref{lemma:upscaling}, we  obtain $U_{\kappa_A^{-1/p} \rbra*{A^{-1/2} C A^{-1/2}}^{1/p}}$,
                        a $(2,3a+7, \widetilde{\Theta}(d_2^{1/2}\epsilon_1^{1/4}+\epsilon_2^{1/2}))$-block-encoding of
                        $\kappa_A^{-1/p} \rbra*{A^{-1/2} C A^{-1/2}}^{1/p}$,
                       where we use $\sqrt{x+y}\leq \sqrt{x}+\sqrt{y}$. 
                \end{enumerate}
            \item
                Resources:
                $\widetilde{O}\rbra*{d_2\log (\epsilon_1^{-1}\epsilon_2^{-1})}$ queries to $U_{\kappa_A^{-1} A^{-1/2}C A^{-1/2}}$,
                $\widetilde{O}\rbra*{ad_2\log (\epsilon_1^{-1}\epsilon_2^{-1})}$ gates,
                and $\poly\rbra*{d_2, \log(\epsilon_1^{-1}\epsilon_2^{-1})}$ classical time.
            \end{itemize}
    \end{enumerate}
   To bound the final approximation error $\widetilde{\Theta}(d_2^{1/2}\epsilon_1^{1/4}+\epsilon_2^{1/2})$
    in $U_{\kappa_A^{-1/p} \rbra*{A^{-1/2} C A^{-1/2}}^{1/p}}$ by $\epsilon$,
    it is sufficient to take 
    \begin{itemize}
        \item 
        $\epsilon_2=\widetilde{\Theta}\rbra*{\epsilon^2}$; and
        \item $\epsilon_1=\widetilde{\Theta}\rbra*{\epsilon^4 d_2^{-2}}=\widetilde{\Theta}\rbra*{\kappa_A^{-2}\kappa_C^{-2}\epsilon^4\log^{-2}\rbra*{1/\epsilon_2}}=\widetilde{\Theta}\rbra*{\kappa_A^{-2}\kappa_C^{-2}\epsilon^4}$.
    \end{itemize}
    The number of ancilla qubits for constructing $U_{\kappa_A^{-1/p} \rbra*{A^{-1/2} C A^{-1/2}}^{1/p}}$ is $3a+7$.
    
    Finally, let us calculate the complexities of each step.
    \begin{enumerate}
        \item 
        $U_{A}\rightarrow U_{(\kappa_A A)^{-1/2}/4}$:
        
        $\widetilde{O}_{\kappa_A,\kappa_C,\epsilon}\rbra*{\kappa_A\log\rbra*{1/\epsilon}}$ queries to $U_{A}$,
        $\widetilde{O}_{\kappa_A,\kappa_C,\epsilon}\rbra*{a\kappa_A\log\rbra*{1/\epsilon}}$ gates,
        
        and $\poly_{\kappa_A,\kappa_C,\epsilon}\rbra*{\kappa_A,\log\rbra*{1/\epsilon}}$ classical time.
        
        \item 
        $U_C,U_{(\kappa_A A)^{-1/2}/4}\rightarrow U_{2^{-4}\kappa_A^{-1} A^{-1/2}C A^{-1/2}}$:

        $O(1)$ queries to $U_{C}$,
        $\widetilde{O}_{\kappa_A,\kappa_C,\epsilon}\rbra*{\kappa_A\log\rbra*{1/\epsilon}}$  queries to $U_{A}$,
        $\widetilde{O}_{\kappa_A,\kappa_C,\epsilon}\rbra*{a\kappa_A\log\rbra*{1/\epsilon}}$ gates,
        and $\poly_{\kappa_A,\kappa_C,\epsilon}\rbra*{\kappa_A,\log\rbra*{1/\epsilon}}$ classical time.
        
        \item 
        $U_{2^{-4}\kappa_A^{-1} A^{-1/2}C A^{-1/2}}\rightarrow U_{\kappa_A^{-1/p} \rbra*{A^{-1/2} C A^{-1/2}}^{1/p}}$:

        $\widetilde{O}_{\kappa_A,\kappa_C,\epsilon}\rbra*{\kappa_A\kappa_C\log^2\rbra*{1/\epsilon}}$ queries to $U_C$,
        $\widetilde{O}_{\kappa_A,\kappa_C,\epsilon}\rbra*{\kappa_A^2\kappa_C\log^3\rbra*{1/\epsilon}}$ queries to $U_A$,
        
        $\widetilde{O}_{\kappa_A,\kappa_C,\epsilon}\rbra*{a\kappa_A^2\kappa_C\log^3\rbra*{1/\epsilon}}$ gates,
        and $\poly_{\kappa_A,\kappa_C,\epsilon}\rbra*{\kappa_A,\kappa_C,\log\rbra*{1/\epsilon}}$ classical time.
    \end{enumerate}

    For the case $p<0$, the analysis is the same except that
    in Step~\ref{stp:ACA-blk-enc-3a},
    we can use Lemma~\ref{lmm:poly negative power} instead of Lemma~\ref{lmm:poly positive power}.
    This only incurs an additional scaling factor $\kappa_A^{1/p}\kappa_C^{1/p}$ into the final block-encoded matrix,
    without significantly changing the complexity.
\end{proof}

Now we are ready to prove Lemma~\ref{lem:mgm-1},
which gives an implementation of a block-encoding of the weighted matrix geometric mean in Eq.~\eqref{eq:geometricmeans}.

\begin{proof}[Proof of Lemma~\ref{lem:mgm-1}]
    We first consider the case $p>0$.
    Let us construct $U_{\kappa_A^{-1/p}Y}$,
    a $(2,5a+12,\epsilon)$-block-encoding of $\kappa_A^{-1/p}Y$, step by step as follows,
    where
    \begin{equation}
        Y=A\#_{1/p} C=A^{1/2}\rbra*{A^{-1/2}CA^{-1/2}}^{1/p}A^{1/2}.
    \end{equation}
    Along the way, we also analyse the resources for each step.
    \begin{enumerate}
        \item 
            $U_{A}\rightarrow U_{A^{1/2}/4}$:
            \begin{itemize}
                \item 
                    Construction:
                    \begin{enumerate}
                        \item 
                            Taking $c=1/2$, $\delta=\kappa_A^{-1}$, and $\epsilon=\epsilon_1$ in Lemma~\ref{lmm:poly positive power},
                            we have a polynomial $q_1(x)$ of degree $d_1=O\rbra*{\kappa_A\log(1/\epsilon_1)}$ that approximates $x^{1/2}/2$.
                        \item
                            Taking $U=U_{A}$, $q=q_1(x)$, $\epsilon=0$, and $\delta=\epsilon_1$ in Lemma~\ref{lmm:svt},
                            we have $U_{A^{1/2}/4}$, a $(1,a+2,\epsilon_1)$-block-encoding of $q_1(A)/2$,
                            which is therefore a $(1,a+2,2\epsilon_1)$-block-encoding of $A^{1/2}/4$.
                    \end{enumerate}
                \item
                    Resources:
                    $O(d_1)$ queries to $U_{A}$,
                    $O(ad_1)$ quantum gates,
                    and $\poly(d_1, \log(1/\epsilon_1))$ classical time.
            \end{itemize}
        \item
            $U_A,U_C\rightarrow U_{\kappa_A^{-1/p} \rbra*{A^{-1/2} C A^{-1/2}}^{1/p}}$:
            \begin{itemize}
                \item 
                    Construction:

                    Taking $\epsilon= \epsilon_2$ in Lemma~\ref{lmm:ACA-p-blk-enc},
                    we can construct $U_{\kappa_A^{-1/p} \rbra*{A^{-1/2} C A^{-1/2}}^{1/p}}$,
                    a $(2,3a+7,\epsilon_2)$-block-encoding of $U_{\kappa_A^{-1/p} \rbra*{A^{-1/2} C A^{-1/2}}^{1/p}}$.
                \item
                    Resources:

                    According to Lemma~\ref{lmm:ACA-p-blk-enc}, the resources for the above construction are:
                    \begin{itemize}
                        \item 
                        $\widetilde{O}\rbra*{\kappa_A\kappa_C\log^2\rbra*{1/\epsilon_2}}$ queries to $U_C$,
                        $\widetilde{O}\rbra*{\kappa_A^2\kappa_C\log^3\rbra*{1/\epsilon_2}}$ queries to $U_A$;
                        \item
                        $\widetilde{O}\rbra*{a\kappa_A^2\kappa_C\log^3\rbra*{1/\epsilon_2}}$ gates; and
                        \item 
                        $\poly\rbra*{\kappa_A,\kappa_C,\log\rbra*{1/\epsilon_2}}$ classical time.
                    \end{itemize}
            \end{itemize}
        \item
            $U_{A^{1/2}/4},U_{\kappa_A^{-1/p} \rbra*{A^{-1/2} C A^{-1/2}}^{1/p}}\rightarrow U_{\kappa_A^{-1/p}Y}$:
            \begin{itemize}
                \item 
                    Construction:
                    \begin{enumerate}
                        \item 
                            By Lemma~\ref{lmm:product of block-encoding},
                            given $U_{A^{1/2}/4}$ and $U_{\kappa_A^{-1/p} \rbra*{A^{-1/2} C A^{-1/2}}^{1/p}}$,
                            we have $U_{2^{-5}\kappa_A^{-1/p}Y}$, a $(1,5a+11, \Theta(\epsilon_1+\epsilon_2))$-block-encoding of
                            $2^{-5}\kappa_A^{-1/p}Y$.
                        \item
                           
                            Taking $U_{2^{-5}\kappa_A^{-1/p}Y}$,
                            $\alpha=2^{5}$ in Lemma~\ref{lemma:upscaling}, we  obtain $U_{\kappa_A^{-1/p}Y}$,
                            a $(2,5a+12, \widetilde{\Theta}(\epsilon_1^{1/2}+\epsilon_2^{1/2}))$-block-encoding of
                            $\kappa_A^{-1/p}Y$. 
                    \end{enumerate}
                \item
                    Resources:
                   $\widetilde{O}\rbra*{\log(\epsilon_1^{-1}\epsilon_2^{-1})}$ queries to $U_{A^{1/2}/4}$ and $U_{\kappa_A^{-1/p} \rbra*{A^{-1/2} C A^{-1/2}}^{1/p}}$,
                    
                    $\widetilde{O}\rbra*{a\log(\epsilon_1^{-1}\epsilon_2^{-1})}$ gates,
                    and $\poly(\log(\epsilon_1^{-1}\epsilon_2^{-1}))$ classical time. 
            \end{itemize}
    \end{enumerate}
   To bound the final approximation error $\widetilde{\Theta}(\epsilon_1^{1/2}+\epsilon_2^{1/2})$
    in $U_{\kappa_A^{-1/p}Y}$ by $\epsilon$,
    it is sufficient to take 
    $\epsilon_1=\widetilde{\Theta}\rbra*{\epsilon^2}$ and $\epsilon_2=\widetilde{\Theta}\rbra*{\epsilon^2}$.
    The number of ancilla qubits for constructing $\kappa_A^{-1/p}Y$ is $5a+11$.
    Finally, let us calculate the complexities of each step.
    \begin{enumerate}
        \item 
        $U_{A}\rightarrow U_{A^{1/2}/4}$:
        
        $\widetilde{O}_{\kappa_A,\kappa_C,\epsilon}\rbra*{\kappa_A\log\rbra*{1/\epsilon}}$ queries to $U_{A}$,
        $\widetilde{O}_{\kappa_A,\kappa_C,\epsilon}\rbra*{a\kappa_A\log\rbra*{1/\epsilon}}$ gates,
        
        and $\poly_{\kappa_A,\kappa_C,\epsilon}\rbra*{\kappa_A,\log\rbra*{1/\epsilon}}$ classical time.
        
        \item 
        $U_A,U_C\rightarrow U_{\kappa_A^{-1/p} \rbra*{A^{-1/2} C A^{-1/2}}^{1/p}}$:

        $\widetilde{O}\rbra*{\kappa_A\kappa_C\log^2\rbra*{1/\epsilon}}$ queries to $U_C$,
        $\widetilde{O}\rbra*{\kappa_A^2\kappa_C\log^3\rbra*{1/\epsilon}}$ queries to $U_A$,
        $\widetilde{O}\rbra*{a\kappa_A^2\kappa_C\log^3\rbra*{1/\epsilon}}$ gates, and
        $\poly\rbra*{\kappa_A,\kappa_C,\log\rbra*{1/\epsilon_2}}$ classical time.
        
        \item 
        $U_{A^{1/2}/4},U_{\kappa_A^{-1/p} \rbra*{A^{-1/2} C A^{-1/2}}^{1/p}}\rightarrow U_{\kappa_A^{-1/p}Y}$:

        $\widetilde{O}\rbra*{\kappa_A\kappa_C\log^3\rbra*{1/\epsilon}}$ queries to $U_C$,
        $\widetilde{O}\rbra*{\kappa_A^2\kappa_C\log^4\rbra*{1/\epsilon}}$ queries to $U_A$,
        $\widetilde{O}\rbra*{a\kappa_A^2\kappa_C\log^4\rbra*{1/\epsilon}}$ gates, and
        $\poly\rbra*{\kappa_A,\kappa_C,\log\rbra*{1/\epsilon_2}}$ classical time.
    \end{enumerate}

    By Definition~\ref{def:blk-enc}, $U_{\kappa_A^{-1/p}Y}$ is also
    a $(2\kappa_A^{1/p},5a+12,\kappa_A^{1/p}\epsilon)$-block-encoding of $Y$.
    Replacing the precision parameter immediately yields the results for the case $p>0$ in Lemma~\ref{lem:mgm-1}. 
    For the case $p<0$, the analysis is similar and is omitted.
\end{proof}

\section{Proof of Lemma~\ref{lem:simplericcati}} \label{app:proofb=0}

In this appendix we prove Lemma~\ref{lem:simplericcati}.

\begin{proof} [Proof of Lemma~\ref{lem:simplericcati}]
    Let us construct $U_{\kappa_A^{-1}Y}$,
    a $(2,5a+11,\epsilon)$-block-encoding of $\kappa_A^{-1}Y$, step by step as follows,
    where 
    \begin{equation}
        Y=A^{-1}\# C=A^{-1/2}\rbra*{A^{1/2}CA^{1/2}}^{1/2}A^{-1/2}.
    \end{equation}
    Along the way, we also analyse the resources for each step.
    \begin{enumerate}
        \item 
        $U_{A}\rightarrow U_{A^{1/2}/4}$:
        \begin{itemize}
            \item 
                Construction:
                \begin{enumerate}
                    \item 
                        Taking $c=1/2$, $\delta=\kappa_A^{-1}$, and $\epsilon=\epsilon_1$ in Lemma~\ref{lmm:poly positive power},
                        we have a polynomial $q_1(x)$ of degree $d_1=O\rbra*{\kappa_A\log(1/\epsilon_1)}$ that approximates $x^{1/2}/2$.
                    \item
                        Taking $U=U_{A}$, $q=q_1(x)$, $\epsilon=0$, and $\delta=\epsilon_1$ in Lemma~\ref{lmm:svt},
                        we have $U_{A^{1/2}/4}$, a $(1,a+2,\epsilon_1)$-block-encoding of $q_1(A)/2$,
                        which is therefore a $(1,a+2,2\epsilon_1)$-block-encoding of $A^{1/2}/4$.
                \end{enumerate}
            \item
                Resources:
                $O(d_1)$ queries to $U_{A}$,
                and $O(a d_1)$ gates,
                and $\poly(d_1, \log(1/\epsilon_1))$ classical time.
        \end{itemize}
        \item
		$U_{C},U_{A^{1/2}/4}\rightarrow U_{2^{-4}A^{1/2}C A^{1/2}}$:
		\begin{itemize}
            \item 
                Construction:
                by Lemma~\ref{lmm:product of block-encoding},
                given $U_{C}$ and $U_{A^{1/2}/4}$,
                we have $U_{2^{-4}A^{1/2}CA^{1/2}}$, a $(1,3a+4, 4\epsilon_1)$-block-encoding of
                $2^{-4}A^{1/2}CA^{1/2}$.
            \item
                Resources:
                $O(1)$ queries to $U_{C}$ and $U_{A^{1/2}}$.
        \end{itemize}
        \item
        $U_{2^{-4}A^{1/2}CA^{1/2}}\rightarrow U_{2^{-4}\rbra*{A^{1/2}CA^{1/2}}^{1/2}}$:
        \begin{itemize}
            \item 
                Construction:
                \begin{enumerate}
                    \item 
                        \label{stp:B=0-blk-enc-3a}
                        Taking $c=1/2$, $\delta=2^{-4}\kappa_A^{-1}\kappa_C^{-1}\leq2^{-4}\kappa_{A^{1/2}CA^{1/2}}^{-1}$ (by Lemma~\ref{lmm:kappa_product}),
                        and $\epsilon=\epsilon_2$ in Lemma~\ref{lmm:poly positive power},
                        we have a polynomial $q_2(x)$ of degree $d_2=O\rbra*{\kappa_A\kappa_C\log(1/\epsilon_2)}$ that approximates $x^{1/2}/2$.
                    \item
                        Taking $U=U_{2^{-4}A^{1/2}CA^{1/2}}$, $q=q_2(x)$,
                        $\epsilon=\Theta(\epsilon_1)$, and $\delta=\epsilon_2$ in Lemma~\ref{lmm:svt},
                        we have $U_{2^{-4}\rbra*{A^{1/2}CA^{1/2}}^{1/2}}$,
                        a $(1,3a+6,\Theta(d_2\epsilon_1^{1/2})+\epsilon_2)$-block-encoding of $q_2\rbra*{2^{-4}A^{1/2}CA^{1/2}}/2$,
                        which is therefore a $(1,3a+6,\Theta(d_2\epsilon_1^{1/2}+\epsilon_2))$-block-encoding of 
                        $\rbra*{2^{-4}A^{1/2}CA^{1/2}}^{1/2}/4$.
                \end{enumerate}
            \item
                Resources:
                $O(d_2)$ queries to $U_{2^{-4}A^{1/2}CA^{1/2}}$,
                and $O(ad_2)$ gates,
                and $\poly(d_2, \log(1/\epsilon_2))$ classical time.
        \end{itemize}
        \item
        $U_{A}\rightarrow U_{(\kappa_A A)^{-1/2}/4}$:
        \begin{itemize}
            \item
                Construction:
                \begin{enumerate}
                    \item 
                        Taking $c=1/2$, $\delta=\kappa_A^{-1}$, and $\epsilon=\epsilon_3$ in Lemma~\ref{lmm:poly negative power},
                        we have a polynomial $q_3(x)$ of degree $d_3=O\rbra*{\kappa_A\log(1/\epsilon_3)}$ that approximates $(\kappa_A x)^{-1/2}/2$.
                    \item
                        Taking $U=U_{A}$, $q=q_3(x)$, $\epsilon=0$, and $\delta=\epsilon_3$ in Lemma~\ref{lmm:svt},
                        we have $U_{\rbra*{\kappa_A A}^{-1/2}/4}$, a $(1,a+2,\epsilon_3)$-block-encoding of $q_3(A)/2$,
                        which is therefore a $(1,a+2,2\epsilon_3)$-block-encoding of $\rbra*{\kappa_A A}^{-1/2}/4$.
                \end{enumerate}
            \item
                Resources:
                $O(d_3)$ queries to $U_{A}$,
                and $O(ad_3)$ gates,
                and $\poly(d_3, \log(1/\epsilon_3))$ classical time.
        \end{itemize}
        \item
        $U_{(\kappa_A A)^{-1/2}/4},U_{2^{-4}\rbra*{A^{1/2}CA^{1/2}}^{1/2}}\rightarrow U_{\kappa_A^{-1}Y}$:
        \begin{itemize}
            \item 
                Construction:
                \begin{enumerate}
                    \item 
                    By Lemma~\ref{lmm:product of block-encoding},
                    given $U_{(\kappa_A A)^{-1/2}/4}$ and $U_{2^{-4}\rbra*{A^{1/2}CA^{1/2}}^{1/2}}$,
                    we have $U_{2^{-8}\kappa_A^{-1}Y}$,
                    a $(1, 5a+10,\Theta(d_2\epsilon_1^{1/2}+\epsilon_2+\epsilon_3))$-block-encoding 
                    of $2^{-8}\kappa_A^{-1}Y$.
                \item
                   
                    Taking $U_{2^{-8}\kappa_A^{-1}Y}$,
                    $\alpha=2^{8}$ in Lemma~\ref{lemma:upscaling}, we  obtain $U_{\kappa_A^{-1}Y}$,
                    a $(2,5a+11, \widetilde{\Theta}(d_2^{1/2}\epsilon_1^{1/4}+\epsilon_2^{1/2}+\epsilon_3^{1/2}))$-block-encoding of
                    $\kappa_A^{-1/p}Y$. 
                \end{enumerate}
            \item
                Resources:
               $\widetilde{O}\rbra*{\log(\epsilon_1^{-1}\epsilon_2^{-1}\epsilon_3^{-1})}$ queries to $U_{(\kappa_A A)^{-1/2}/4}$ and $U_{2^{-4}\rbra*{A^{1/2}CA^{1/2}}^{1/2}}$,
                
                $\widetilde{O}\rbra*{a\log(\epsilon_1^{-1}\epsilon_2^{-1}\epsilon_3^{-1})}$ gates,
                and $\poly(\log(\epsilon_1^{-1}\epsilon_2^{-1}\epsilon_3^{-1}))$ classical time. 
        \end{itemize}
    \end{enumerate}
   To bound the final approximation error $\widetilde{\Theta}(d_2^{1/2}\epsilon_1^{1/4}+\epsilon_2^{1/2}+\epsilon_3^{1/2})$
    in $U_{\kappa_A^{-1}Y}$ by $\epsilon$,
    it is sufficient to take 
    \begin{itemize}
        \item 
        $\epsilon_3=\widetilde{\Theta}(\epsilon^2)$.
        \item 
        $\epsilon_2=\widetilde{\Theta}(\epsilon^2)$.
        \item 
        $\epsilon_1=\widetilde{\Theta}(\epsilon^4 d_2^{-2})=\widetilde{\Theta}\rbra*{\kappa_A^{-2}\kappa_C^{-2}\epsilon^4\log^{-2}\rbra*{1/\epsilon_2}}=\widetilde{\Theta}\rbra*{\kappa_A^{-2}\kappa_C^{-2}\epsilon^4}$.
    \end{itemize}
    The number of ancilla qubits for constructing $U_{\kappa_A^{-1}Y}$ is $5a+11$.

    Finally, let us calculate the complexities of each step.
    \begin{enumerate}
        \item 
        $U_{A}\rightarrow U_{A^{1/2}/4}$:
        
        $\widetilde{O}_{\kappa_A,\kappa_C,\epsilon}\rbra*{\kappa_A\log\rbra*{1/\epsilon}}$ queries to $U_{A}$,
        $\widetilde{O}_{\kappa_A,\kappa_C,\epsilon}\rbra*{a\kappa_A\log\rbra*{1/\epsilon}}$ gates,
        
        and $\poly_{\kappa_A,\kappa_C,\epsilon}\rbra*{\kappa_A,\log\rbra*{1/\epsilon}}$ classical time.
        
        \item 
        $U_{C},U_{A^{1/2}/4}\rightarrow U_{2^{-4}A^{1/2}C A^{1/2}}$:

        $O(1)$ queries to $U_{C}$,
        $\widetilde{O}_{\kappa_A,\kappa_C,\epsilon}\rbra*{\kappa_A\log\rbra*{1/\epsilon}}$ queries to $U_{A}$,
        $\widetilde{O}_{\kappa_A,\kappa_C,\epsilon}\rbra*{a\kappa_A\log\rbra*{1/\epsilon}}$ gates,
        and $\poly_{\kappa_A,\kappa_C,\epsilon}\rbra*{\kappa_A,\log\rbra*{1/\epsilon}}$ classical time.
        
        \item 
        $U_{2^{-4}A^{1/2}CA^{1/2}}\rightarrow U_{2^{-4}\rbra*{A^{1/2}CA^{1/2}}^{1/2}}$:

        $\widetilde{O}_{\kappa_A,\kappa_C,\epsilon}\rbra*{\kappa_A\kappa_C\log\rbra*{1/\epsilon}}$ queries to $U_C$,
        $\widetilde{O}_{\kappa_A,\kappa_C,\epsilon}\rbra*{\kappa_A^2\kappa_C\log^2\rbra*{1/\epsilon}}$ queries to $U_A$,
        \newline $\widetilde{O}_{\kappa_A,\kappa_C,\epsilon}\rbra*{a\kappa_A^2\kappa_C\log^2\rbra*{1/\epsilon}}$ gates,
        and $\poly_{\kappa_A,\kappa_C,\epsilon}\rbra*{\kappa_A,\kappa_C,\log\rbra*{1/\epsilon}}$ classical time.
        
        \item 
        $U_{A}\rightarrow U_{(\kappa_A A)^{-1/2}/4}$:

        $\widetilde{O}_{\kappa_A,\kappa_C,\epsilon}\rbra*{\kappa_A\log\rbra*{1/\epsilon}}$ queries to $U_{A}$,
        $\widetilde{O}_{\kappa_A,\kappa_C,\epsilon}\rbra*{a\kappa_A\log\rbra*{1/\epsilon}}$ gates,
        and $\poly\rbra*{\kappa_A, \log\rbra*{1/\epsilon}}$ classical time.

        \item
        $U_{(\kappa_A A)^{-1/2}/4},U_{2^{-4}\rbra*{A^{1/2}CA^{1/2}}^{1/2}}\rightarrow U_{\kappa_A^{-1}Y}$:
        
        $\widetilde{O}_{\kappa_A,\kappa_C,\epsilon}\rbra*{\kappa_A\kappa_C\log^2\rbra*{1/\epsilon}}$ queries to $U_C$,
        $\widetilde{O}_{\kappa_A,\kappa_C,\epsilon}\rbra*{\kappa_A^2\kappa_C\log^3\rbra*{1/\epsilon}}$ queries to $U_A$,
        \newline $\widetilde{O}_{\kappa_A,\kappa_C,\epsilon}\rbra*{a\kappa_A^2\kappa_C\log^3\rbra*{1/\epsilon}}$ gates,
        and $\poly_{\kappa_A,\kappa_C,\epsilon}\rbra*{\kappa_A,\kappa_C,\log\rbra*{1/\epsilon}}$ classical time.
    \end{enumerate}
    
    By Definition~\ref{def:blk-enc}, $U_{\kappa_A^{-1}Y}$ is also
    a $(2\kappa_A,5a+11,\kappa_A\epsilon)$-block-encoding of $Y$.
    Replacing the precision parameter immediately yields the results in Lemma~\ref{lem:simplericcati}.
\end{proof}

\section{Proof of Lemma~\ref{lem:Briccati}} \label{app:proofbneq0}

In this appendix we prove Lemma~\ref{lem:Briccati}.

\begin{proof} [Proof of Lemma~\ref{lem:Briccati}]
    Let $D=B^\dagger A^{-1}B +C$
    and $E=A^{1/2}DA^{1/2}$.
    As $B^\dagger A^{-1}B\geq 0$,
    we have 
    \begin{equation}
        \kappa_{D}^{-1}\geq \kappa_C^{-1}\label{eq:kappa_D}
    \end{equation}
    by Lemma~\ref{lmm:kappa_sum}.
    
    Similar to the proof of Lemma~\ref{lem:simplericcati} in Appendix~\ref{app:proofb=0},
    let us construct $U_{\kappa_A^{-3/2}Y}$,
    a $(2,b,\epsilon)$-block-encoding of $\kappa_A^{-3/2}Y$, step by step as follows,
    where $b=O\rbra*{a+\log\rbra*{\kappa_A\kappa_C/\epsilon}}$ and
    \begin{equation}
        Y=A^{-1}\# D+A^{-1}B
        =A^{-1/2}E^{1/2}A^{-1/2}+A^{-1}B.
    \end{equation}
    Along the way, we also analyse the resources for each step.
    \begin{enumerate}
        \item 
        \label{stp:Bneq0-blk-enc-inverse}
         $U_{A}\rightarrow U_{(\kappa_A A)^{-1}/4}$:
        \begin{itemize}
            \item
                Construction:
                \begin{enumerate}
                    \item 
                        Taking $c=1$, $\delta=\kappa_A^{-1}$, and $\epsilon=\epsilon_1$ in Lemma~\ref{lmm:poly negative power},
                        we have a polynomial $q_1(x)$ of degree $d_1=O\rbra*{\kappa_A\log(1/\epsilon_1)}$ that approximates $(\kappa_A x)^{-1}/2$.
                    \item
                        Taking $U=U_{A}$, $q=q_1(x)$, $\epsilon=0$, and $\delta=\epsilon_1$ in Lemma~\ref{lmm:svt},
                        we have $U_{\rbra*{\kappa_A A}^{-1}/4}$, a $(1,a+2,\epsilon_1)$-block-encoding of $q_1(A)/2$,
                        which is therefore a $(1,a+2,2\epsilon_1)$-block-encoding of $\rbra*{\kappa_A A}^{-1}/4$.
                \end{enumerate}
            \item
                Resources:
                $O(d_1)$ queries to $U_{A}$,
                and $O(ad_1)$ gates,
                and $\poly(d_1, \log(1/\epsilon_1))$ classical time.
        \end{itemize}

        \item 
        $U_B,U_{\rbra*{\kappa_A A}^{-1}/4}\rightarrow U_{2^{-2}\kappa_A^{-1}B^\dagger A^{-1}B}$:

		\begin{itemize}
            \item 
                Construction:
                Note that given $U_B$,
                one can construct $U_{B^\dagger}=U_{B}^\dagger$, a $(1,a,0)$-block-encoding of $B^\dagger$,
                using $1$ query to $U_B$.
                By Lemma~\ref{lmm:product of block-encoding},
                given $U_{B}$ and $U_{\rbra*{\kappa_A A}^{-1}/4}$,
                we have $U_{\kappa_A^{-1}B^\dagger A^{-1}B}$, a $(1,3a+2, 2\epsilon_1)$-block-encoding of
                $2^{-2}\kappa_A^{-1}B^\dagger A^{-1}B$.
            \item
                Resources:
                $O(1)$ queries to $U_{B}$ and $U_{\rbra*{\kappa_A A}^{-1}/4}$.
        \end{itemize}

        \item
        $U_C,U_{2^{-2}\kappa_A^{-1}B^\dagger A^{-1}B}\rightarrow U_{2^{-3}\kappa_A^{-1}D}$:

        \begin{itemize}
            \item 
                Construction:
                Taking $m=2$, $\boldsymbol{x}=(1,2^{-2}\kappa_A^{-1})$, $\beta=2$,
                $U_1=U_{2^{-2}\kappa_A^{-1}B^\dagger A^{-1}B}$, $U_2=U_C$ and $\epsilon=\Theta(\epsilon_1)$
                in Lemma~\ref{lmm:LCU},
                we obtain $U_{2^{-3}\kappa_A^{-1}D}$, a $\rbra*{1,3a+2+\eta_1\log\rbra*{1/\epsilon_1},\Theta(\epsilon_1)}$-block-encoding of $2^{-3}\kappa_A^{-1}D$,
                for some constant $\eta_1$.
            \item
                Resources:
                $O(1)$ queries to $U_C$ and $U_{2^{-2}\kappa_A^{-1}B^\dagger A^{-1}B}$,
                and $\polylog\rbra*{1/\epsilon_1}$ gates.
        \end{itemize}

        \item
        \label{stp:Bneq0-blk-enc-sqrt}
        $U_{A}\rightarrow U_{A^{1/2}/4}$:
        \begin{itemize}
            \item 
                Construction:
                \begin{enumerate}
                    \item 
                        Taking $c=1/2$, $\delta=\kappa_A^{-1}$, and $\epsilon=\epsilon_2$ in Lemma~\ref{lmm:poly positive power},
                        we have a polynomial $q_2(x)$ of degree $d_2=O\rbra*{\kappa_A\log(1/\epsilon_2)}$ that approximates $x^{1/2}/2$.
                    \item
                        Taking $U=U_{A}$, $q=q_2(x)$, $\epsilon=0$, and $\delta=\epsilon_2$ in Lemma~\ref{lmm:svt},
                        we have $U_{A^{1/2}/4}$, a $(1,a+2,\epsilon_2)$-block-encoding of $q_2(A)/2$,
                        which is therefore a $(1,a+2,2\epsilon_2)$-block-encoding of $A^{1/2}/4$.
                \end{enumerate}
            \item
                Resources:
                $O(d_2)$ queries to $U_{A}$,
                and $O(a d_2)$ gates,
                and $\poly(d_2, \log(1/\epsilon_2))$ classical time.
        \end{itemize}
        
        \item 
        $U_{2^{-3}\kappa_A^{-1}D},U_{A^{1/2}/4}\rightarrow U_{2^{-7}\kappa_A^{-1}E}$:
        \begin{itemize}
            \item 
                Construction:
                By Lemma~\ref{lmm:product of block-encoding},
                given $U_{2^{-3}\kappa_A^{-1}D}$ and $U_{A^{1/2}/4}$,
                we have $U_{2^{-7}\kappa_A^{-1}E}$, a $(1,5a+6+\eta_1\log\rbra*{1/\epsilon_1}, \Theta(\epsilon_1+\epsilon_2))$-block-encoding of
                $2^{-7}\kappa_A^{-1}E$.
            \item
                Resources:
                $O(1)$ queries to $U_{2^{-3}\kappa_A^{-1}D}$ and $U_{A^{1/2}/4}$.
        \end{itemize}

        \item
        $U_{2^{-7}\kappa_A^{-1}E}\rightarrow U_{2^{-5.5}\kappa_A^{-1/2}E^{1/2}}$:
        \begin{itemize}
            \item 
                Construction:
                \begin{enumerate}
                    \item 
                        Taking $c=1/2$, $\delta=2^{-7}\kappa_A^{-2}\kappa_C^{-1}\leq 2^{-7}\kappa_A^{-2}\kappa_D^{-1}\leq 2^{-7}\kappa_A^{-1}\kappa_{E}^{-1}$ (by Lemma~\ref{eq:kappa_D} and Lemma~\ref{lmm:kappa_product}), and $\epsilon=\epsilon_3$ in Lemma~\ref{lmm:poly positive power},
                        we have a polynomial $q_3(x)$ of degree $d_3=O\rbra*{\kappa_A\kappa_C\log(1/\epsilon_3)}$ that approximates $x^{1/2}/2$.
                    \item
                        Taking $U=U_{2^{-7}\kappa_A^{-1}E}$, $q=q_3(x)$, $\epsilon=\Theta(\epsilon_1+\epsilon_2)$, and $\delta=\epsilon_3$ in Lemma~\ref{lmm:svt},
                        we have $U_{2^{-2.5}\kappa_A^{-1/2}E^{1/2}}$,
                        a $(1,5a+8+\eta_1\log\rbra*{1/\epsilon_1},\Theta(d_3\epsilon_1^{1/2}+d_3\epsilon_2^{1/2})+\epsilon_3)$-block-encoding of $q_3\rbra*{2^{-7}\kappa_A^{-1}E}/2$,
                        which is therefore
                        
                        a $(1,5a+8+\eta_1\log\rbra*{1/\epsilon_1},\Theta(d_3\epsilon_1^{1/2}+d_3\epsilon_2^{1/2}+\epsilon_3))$-block-encoding of $2^{-5.5}\kappa_A^{-1/2}E^{1/2}$.
                \end{enumerate}
            \item
                Resources:
                $O(d_3)$ queries to $U_{2^{-7}\kappa_A^{-1}E}$,
                and $O\rbra*{\rbra*{a+\log\rbra*{1/\epsilon_1}}d_3}$ gates,
                and $\poly(d_3, \log(1/\epsilon_3))$ classical time.
        \end{itemize}
        
        \item 
         $U_{A}\rightarrow U_{(\kappa_A A)^{-1/2}/4}$:
        \begin{itemize}
            \item
                Construction:
                \begin{enumerate}
                    \item 
                        Taking $c=1/2$, $\delta=\kappa_A^{-1}$, and $\epsilon=\epsilon_4$ in Lemma~\ref{lmm:poly negative power},
                        we have a polynomial $q_4(x)$ of degree $d_4=O\rbra*{\kappa_A\log(1/\epsilon_4)}$ that approximates $(\kappa_A x)^{-1/2}/2$.
                    \item
                        Taking $U=U_{A}$, $q=q_4(x)$, $\epsilon=0$, and $\delta=\epsilon_4$ in Lemma~\ref{lmm:svt},
                        we have $U_{\rbra*{\kappa_A A}^{-1/2}/4}$, a $(1,a+2,\epsilon_4)$-block-encoding of $q_4(A)/2$,
                        which is therefore a $(1,a+2,2\epsilon_4)$-block-encoding of $\rbra*{\kappa_A A}^{-1/2}/4$.
                \end{enumerate}
            \item
                Resources:
                $O(d_4)$ queries to $U_{A}$,
                and $O(ad_4)$ gates,
                and $\poly(d_4, \log(1/\epsilon_4))$ classical time.
        \end{itemize}
        
        \item 
        $U_{(\kappa_A A)^{-1/2}/4},U_{2^{-5.5}\kappa_A^{-1/2}E^{1/2}}\rightarrow U_{2^{-9.5}\kappa_A^{-3/2}A^{-1}\# D}$:

        \begin{itemize}
            \item 
                Construction:
                By Lemma~\ref{lmm:product of block-encoding},
                given $U_{(\kappa_A A)^{-1/2}/4}$ and $U_{2^{-5.5}\kappa_A^{-1/2}E^{1/2}}$,
                we have $U_{\kappa_A^{-3/2}A^{-1}\# D}$,
                a $(1,7a+12+\eta_1\log\rbra*{1/\epsilon_1},\Theta(d_3\epsilon_1^{1/2}+d_3\epsilon_2^{1/2}+\epsilon_3+\epsilon_4))$-block-encoding
                of $2^{-9.5}\kappa_A^{-3/2}A^{-1}\# D$.
            \item
                Resources:
                $O(1)$ queries to $U_{(\kappa_A A)^{-1/2}/4}$ and $U_{2^{-5.5}\kappa_A^{-1/2}E^{1/2}}$.
        \end{itemize}
        
        \item 
        $U_{(\kappa_A A)^{-1}/4},U_{B}\rightarrow U_{2^{-2}\kappa_A^{-1}A^{-1}B}$:
        
        \begin{itemize}
            \item 
                Construction:
                By Lemma~\ref{lmm:product of block-encoding},
                given $U_{(\kappa_A A)^{-1}/4}$ and $U_{B}$,
                we have $U_{2^{-2}\kappa_A^{-1}A^{-1} B}$,
                a $(1,2a+2,2\epsilon_1)$-block-encoding
                of $2^{-2}\kappa_A^{-1}A^{-1}B$.
            \item
                Resources:
                $O(1)$ queries to $U_{(\kappa_A A)^{-1}/4}$ and $U_{B}$.
        \end{itemize}
        
        \item 
        $U_{2^{-9.5}\kappa_A^{-3/2}A^{-1}\# D},U_{2^{-2}\kappa_A^{-1}A^{-1}B}\rightarrow U_{\kappa_A^{-3/2}Y}$:
        
        \begin{itemize}
            \item 
                Construction:
                \begin{enumerate}
                    \item 
                    Taking $m=2$, $\boldsymbol{x}=(1,2^{-7.5}\kappa_A^{-1/2})$, $\beta=2$,
                    $U_1=U_{2^{-9.5}\kappa_A^{-3/2}A^{-1}\# D}$, $U_2=U_{2^{-2}\kappa_A^{-1}A^{-1}B}$ and 
                    $\epsilon=\epsilon_5=\Theta(d_3\epsilon_1^{1/2}+d_3\epsilon_2^{1/2}+\epsilon_3+\epsilon_4)$
                    in Lemma~\ref{lmm:LCU},
                    we obtain $U_{2^{-10.5}\kappa_A^{-3/2}Y}$,
                    a $\rbra*{1,7a+12+\eta_1\log\rbra*{1/\epsilon_1}+\eta_2\log\rbra*{1/\epsilon_5},\epsilon_5}$-block-encoding of $2^{-10.5}\kappa_A^{-3/2}Y$,
                    for some constant $\eta_2$.
                    
                    \item 
                    Taking $U=U_{2^{-10.5}\kappa_A^{-3/2}Y}$ and $\alpha=2^{10.5}$ in Lemma~\ref{lemma:upscaling},
                    we obtain $U_{\kappa_A^{-3/2}Y}$,
                    a $\rbra*{2,7a+13+\eta_1\log\rbra*{1/\epsilon_1}+\eta_2\log\rbra*{1/\epsilon_5},\widetilde{\Theta}(\epsilon_5^{1/2})}$-block-encoding of $\kappa_A^{-3/2}Y$.
                \end{enumerate}
            \item
                Resources:
                $\widetilde{O}\rbra*{\log\rbra*{\epsilon_5^{-1}}}$ queries to $U_{2^{-9.5}\kappa_A^{-3/2}A^{-1}\# D}$ and $U_{2^{-2}\kappa_A^{-1}A^{-1}B}$, 
                
               $\widetilde{O}\rbra*{a\log^2\rbra*{\epsilon_1^{-1}\epsilon_5^{-1}}}$ gates
                and $\poly(\log(\epsilon_5^{-1}))$ classical time. 
        \end{itemize}
    \end{enumerate}
   To bound the final approximation error $\widetilde{\Theta}(\epsilon_5^{1/2})$
    in $U_{\kappa_A^{-3/2}Y}$ by $\epsilon$,
    it is sufficient to take 
    \begin{itemize}
        \item 
        $\epsilon_5=\widetilde{\Theta}(\epsilon^2)$.
        \item 
        $\epsilon_4=\epsilon_3=\widetilde{\Theta}(\epsilon^2)$.
        \item 
        $\epsilon_2=\epsilon_1=\widetilde{\Theta}(d_3^{-2}\epsilon^4)=\widetilde{\Theta}\rbra*{\kappa_A^{-2}\kappa_C^{-2}\epsilon^4\log^{-2}\rbra*{1/\epsilon_3}}=\widetilde{\Theta}\rbra*{\kappa_A^{-2}\kappa_C^{-2}\epsilon^4}$.
    \end{itemize}
    The number of ancilla qubits for constructing $U_{\kappa_A^{-3/2}Y}$ is 
    \begin{equation}
        7a+13+\eta_1\log\rbra*{1/\epsilon_1}+\eta_2\log\rbra*{1/\epsilon_5}=O\rbra*{a+\log\rbra*{\kappa_A\kappa_C/\epsilon}}.
    \end{equation}
    Finally, let us calculate the complexities of each step.
    \begin{enumerate}
        \item 
        $U_{A}\rightarrow U_{(\kappa_A A)^{-1}/4}$:
        
        $\widetilde{O}_{\kappa_A,\kappa_C,\epsilon}\rbra*{\kappa_A\log\rbra*{1/\epsilon}}$ queries to $U_{A}$,
        $\widetilde{O}_{\kappa_A,\kappa_C,\epsilon}\rbra*{a\kappa_A\log\rbra*{1/\epsilon}}$ gates,
        
        and $\poly_{\kappa_A,\kappa_C,\epsilon}\rbra*{\kappa_A,\log\rbra*{1/\epsilon}}$ classical time.
        
        \item
        $U_B,U_{\rbra*{\kappa_A A}^{-1}/4}\rightarrow U_{2^{-2}\kappa_A^{-1}B^\dagger A^{-1}B}$:

        $O(1)$ queries to $U_B$,
        $\widetilde{O}_{\kappa_A,\kappa_C,\epsilon}\rbra*{\kappa_A\log\rbra*{1/\epsilon}}$ queries to $U_{A}$,
        $\widetilde{O}_{\kappa_A,\kappa_C,\epsilon}\rbra*{a\kappa_A\log\rbra*{1/\epsilon}}$ gates,
        and $\poly_{\kappa_A,\kappa_C,\epsilon}\rbra*{\kappa_A,\log\rbra*{1/\epsilon}}$ classical time.

        \item
        $U_C,U_{2^{-2}\kappa_A^{-1}B^\dagger A^{-1}B}\rightarrow U_{2^{-3}\kappa_A^{-1}D}$:
        
        $O(1)$ queries to $U_C$ and $U_B$,
        $\widetilde{O}_{\kappa_A,\kappa_C,\epsilon}\rbra*{\kappa_A\log\rbra*{1/\epsilon}}$ queries to $U_{A}$,
        $\widetilde{O}_{\kappa_A,\kappa_C,\epsilon}\rbra*{a\kappa_A\log\rbra*{1/\epsilon}}$ gates,
        and $\poly_{\kappa_A,\kappa_C,\epsilon}\rbra*{\kappa_A,\log\rbra*{1/\epsilon}}$ classical time.

        \item 
        $U_{A}\rightarrow U_{A^{1/2}/4}$:
        
        $\widetilde{O}_{\kappa_A,\kappa_C,\epsilon}\rbra*{\kappa_A\log\rbra*{1/\epsilon}}$ queries to $U_{A}$,
        $\widetilde{O}_{\kappa_A,\kappa_C,\epsilon}\rbra*{a\kappa_A\log\rbra*{1/\epsilon}}$ gates,
        
        and $\poly_{\kappa_A,\kappa_C,\epsilon}\rbra*{\kappa_A,\log\rbra*{1/\epsilon}}$ classical time.

        \item
        $U_{2^{-3}\kappa_A^{-1}D},U_{A^{1/2}/4}\rightarrow U_{2^{-7}\kappa_A^{-1}E}$:
        
        $O(1)$ queries to $U_C$ and $U_B$,
        $\widetilde{O}_{\kappa_A,\kappa_C,\epsilon}\rbra*{\kappa_A\log\rbra*{1/\epsilon}}$ queries to $U_{A}$,
        $\widetilde{O}_{\kappa_A,\kappa_C,\epsilon}\rbra*{a\kappa_A\log\rbra*{1/\epsilon}}$ gates,
        and $\poly_{\kappa_A,\kappa_C,\epsilon}\rbra*{\kappa_A,\log\rbra*{1/\epsilon}}$ classical time.
        
        \item
        $U_{2^{-7}\kappa_A^{-1}E}\rightarrow U_{2^{-5.5}\kappa_A^{-1/2}E^{1/2}}$:
        
        $\widetilde{O}_{\kappa_A,\kappa_C,\epsilon}\rbra*{\kappa_A\kappa_C\log\rbra*{1/\epsilon}}$ queries to $U_C$ and $U_B$,
        $\widetilde{O}_{\kappa_A,\kappa_C,\epsilon}\rbra*{\kappa_A^2\kappa_C\log^2\rbra*{1/\epsilon}}$ queries to $U_{A}$,
        
        $\widetilde{O}_{\kappa_A,\kappa_C,\epsilon}\rbra*{a\kappa_A^2\kappa_C\log^2\rbra*{1/\epsilon}}$ gates,
        and $\poly_{\kappa_A,\kappa_C,\epsilon}\rbra*{\kappa_A,\kappa_C,\log\rbra*{1/\epsilon}}$ classical time.
        
        \item 
        $U_{A}\rightarrow U_{(\kappa_A A)^{-1/2}/4}$:
        
        $\widetilde{O}_{\kappa_A,\kappa_C,\epsilon}\rbra*{\kappa_A\log\rbra*{1/\epsilon}}$ queries to $U_{A}$,
        $\widetilde{O}_{\kappa_A,\kappa_C,\epsilon}\rbra*{a\kappa_A\log\rbra*{1/\epsilon}}$ gates,
        and $\poly\rbra*{\kappa_A,\log\rbra*{1/\epsilon}}$ classical time.
        
        \item 
        $U_{(\kappa_A A)^{-1/2}/4},U_{2^{-5.5}\kappa_A^{-1/2}E^{1/2}}\rightarrow U_{2^{-9.5}\kappa_A^{-3/2}A^{-1}\# D}$:
        
        $\widetilde{O}_{\kappa_A,\kappa_C,\epsilon}\rbra*{\kappa_A\kappa_C\log\rbra*{1/\epsilon}}$ queries to $U_C$ and $U_B$,
        $\widetilde{O}_{\kappa_A,\kappa_C,\epsilon}\rbra*{\kappa_A^2\kappa_C\log^2\rbra*{1/\epsilon}}$ queries to $U_{A}$,
        
        $\widetilde{O}_{\kappa_A,\kappa_C,\epsilon}\rbra*{a\kappa_A^2\kappa_C\log^2\rbra*{1/\epsilon}}$ gates,
        and $\poly_{\kappa_A,\kappa_C,\epsilon}\rbra*{\kappa_A,\kappa_C,\log\rbra*{1/\epsilon}}$ classical time.
        
        \item 
        $U_{(\kappa_A A)^{-1}/4},U_{B}\rightarrow U_{2^{-2}\kappa_A^{-1}A^{-1}B}$:

        $O(1)$ queries to $U_B$,
        $\widetilde{O}_{\kappa_A,\kappa_C,\epsilon}\rbra*{\kappa_A\log\rbra*{1/\epsilon}}$ queries to $U_{A}$,
        $\widetilde{O}_{\kappa_A,\kappa_C,\epsilon}\rbra*{a\kappa_A\log\rbra*{1/\epsilon}}$ gates,
        and $\poly_{\kappa_A,\kappa_C,\epsilon}\rbra*{\kappa_A,\log\rbra*{1/\epsilon}}$ classical time.
        
        \item 
        $U_{2^{-9.5}\kappa_A^{-3/2}A^{-1}\# D},U_{2^{-2}\kappa_A^{-1}A^{-1}B}\rightarrow U_{\kappa_A^{-3/2}Y}$:
        
        $\widetilde{O}_{\kappa_A,\kappa_C,\epsilon}\rbra*{\kappa_A\kappa_C\log^2\rbra*{1/\epsilon}}$ queries to $U_C$ and $U_B$,
        $\widetilde{O}_{\kappa_A,\kappa_C,\epsilon}\rbra*{\kappa_A^2\kappa_C\log^3\rbra*{1/\epsilon}}$ queries to $U_{A}$,
        
        $\widetilde{O}_{\kappa_A,\kappa_C,\epsilon}\rbra*{a\kappa_A^2\kappa_C\log^3\rbra*{1/\epsilon}}$ gates,
        and $\poly_{\kappa_A,\kappa_C,\epsilon}\rbra*{\kappa_A,\kappa_C,\log\rbra*{1/\epsilon}}$ classical time.
    \end{enumerate}
    By Definition~\ref{def:blk-enc}, $U_{\kappa_A^{-3/2}Y}$ is also
    a $(2\kappa_A^{3/2},b,\kappa_A^{3/2}\epsilon)$-block-encoding of $Y$.
    Replacing the precision parameter immediately yields the results in Lemma~\ref{lem:Briccati}.
\end{proof}

\section{Proof of Lemma~\ref{lem:high-order-riccati}}

\label{app:high-order-riccati}

In this appendix we prove Lemma~\ref{lem:high-order-riccati}.

\begin{proof} [Proof of Lemma~\ref{lem:high-order-riccati}]
    The proof is similar to that of Lemma~\ref{lem:simplericcati}.
    For $p>0$, we can simply take $c=1/p$ instead in Step~\ref{stp:B=0-blk-enc-3a}, without significantly changing the complexity.
    
    For $p<0$, in Step~\ref{stp:B=0-blk-enc-3a},
    we can use Lemma \ref{lmm:poly negative power} instead of Lemma \ref{lmm:poly positive power},
    and take $c=-1/p$. This only incurs an additional scaling factor $\kappa_A^{1/p}\kappa_C^{1/p}$ into the final block-encoded matrix,
    without significantly changing the complexity.
\end{proof}

\section{Proof of Lemma~\ref{lem:lyoptimise}} \label{app:convexity}
   Although in~\cite{zadeh2016geometric} the lemma was stated only for real, symmetric positive definite matrices (SPDs), each step in the proof is also applicable to positive definite Hermitian matrices as we show below.\\

     To find the global minimum of $L(Y)$, it is sufficient to find the solution to $\nabla L(Y)=0$ when $L(Y)$ is strictly convex and is also strictly geodesically convex on the manifold of positive definite Hermitian matrices.  For the definition of distances on this manifold and the geometric interpretation for the matrix geometric mean, see Section~\ref{sec:introgeometricmeans}. \\

     The strict convexity of $Y\mapsto L(Y)$ can be proved for the two terms separately since strict convexity is preserved in a sum. The term $\operatorname{Tr}(YA)$ is clearly strictly convex since it is linear and $A$ is positive definite. For strict convexity of the second term $\operatorname{Tr}(Y^{-1}C)$, it follows directly from the fact that $Y \to Y^{-1}$ is strictly operator convex and $C$ is positive definite. As an alternative proof, we evoke the following relationship. It is known that a twice-differentiable function $L: V \rightarrow \mathbf{R}$ on an open subset $\mathcal{Y}$ of a vector space $\mathcal{Z}$ is convex if and only if for all $Y \in \mathcal{Y}$ and $Z \in \mathcal{Z}$
     \begin{align} \label{eq:twicediff}
         \frac{d^2 L(Y+t Z)}{dt^2} \vert_{t=0}>0.
     \end{align}
    Using the Woodbury matrix identity, we can rewrite 
    \begin{align}
         (Y+tZ)^{-1} & =(Y(I+t Y^{-1}Z))^{-1}=Y^{-1}-tY^{-1}(I+tZY^{-1})^{-1}ZY^{-1} \nonumber \\
        & =Y^{-1}-tY^{-1}ZY^{-1}+t^2 Y^{-1} Z Y^{-1} Z Y^{-1}+O(t^3).
    \end{align}
    Therefore the condition in Eq.~\eqref{eq:twicediff} for $\operatorname{Tr}(Y^{-1}C)$ is equivalent to showing that 
    \begin{align}
        \operatorname{Tr}(DC)>0, \qquad D=Y^{-1} Z Y^{-1} Z Y^{-1}.
    \end{align}
   $Y$ is positive definite Hermitian and let $Z$ be Hermitian so $D=D^{\dagger}$. We note that $DC$ is similar to the matrix $D^{-1/2}(DC)D^{1/2}=D^{1/2}CD^{1/2}$, so they have identical eigenvalues. Then it suffices to show that $D^{1/2}CD^{1/2}$ only has positive eigenvalues. Since $D^{1/2}$ is also Hermitian and $C$ is positive definite Hermitian, then $\operatorname{Tr}(D^{1/2}CD^{1/2})=\operatorname{Tr}(CD)>0$.  
    
    By strictly geodesically convex, it means that for all positive definite Hermitian matrices $Y_1, Y_2$, we have 
    \begin{align*}
        L(Y_1 \#_t Y_2) < t L(Y_1)+(1-t)L(Y_2), \qquad t \in [0,1].
    \end{align*}
    To show geodesic convexity, we also need the following two facts. From~\cite{ando1979concavity}, there is the fundamental operator inequality for positive definite matrices for $t \in [0,1]$
    \begin{align}
        Y_1 \#_t Y_2 \leq (1-t)Y_1+tY_2. 
    \end{align}
    For $Y_1 \neq Y_2$ for $t=1/2$, this is a strict inequality. 
    From the definition, it can also be shown that~\cite{bhatia2009positive}
    \begin{align}
        (Y_1 \#_t Y_2)^{-1}=Y_1^{-1} \#_t Y_2^{-1}.
    \end{align}
    Since midpoint convexity (convexity at $t=1/2$) and continuity imply convexity, we have 
\begin{align}
     L(Y_1 \#_{1/2} Y_2) & =\operatorname{Tr}((Y_1 \#_{1/2} Y_2)A)+\operatorname{Tr}((Y_1 \#_{1/2} Y_2)^{-1}C) \nonumber \\
    &<\frac{1}{2}(\operatorname{Tr}(Y_1 A)+ \operatorname{Tr}(Y_2 A))+\operatorname{Tr}((Y_1 \#_{1/2} Y_2)^{-1}C) \nonumber \\
    &=\frac{1}{2}(\operatorname{Tr}(Y_1 A)+ \operatorname{Tr}(Y_2 A))+\operatorname{Tr}((Y_1^{-1} \#_{1/2} Y_2^{-1})C) \nonumber \\
    & < \frac{1}{2}(\operatorname{Tr}(Y_1 A)+\operatorname{Tr}(Y_2 A)+ \operatorname{Tr}(Y_1^{-1}C)+\operatorname{Tr}(Y_2^{-1}C)) \nonumber \\
    &=\frac{1}{2}(L(Y_2)+L(Y_1)).
\end{align}
    Thus we have strict geodesic convexity.

\section{Proof of Theorems~\ref{thm:hat-F-alpha} and~\ref{thm:g-Renyi-relative-entropy}} \label{app:prooftheorem18}
Here we provide details of the proofs of Theorems~\ref{thm:hat-F-alpha} and~\ref{thm:g-Renyi-relative-entropy}.
\begin{proof} [Proof of Theorem~\ref{thm:hat-F-alpha}]
    By Eq.~\eqref{eq:def-hat-F-alpha}, the definition of $\widehat{F}_\alpha\rbra{\rho, \sigma}$, we have
    \begin{equation} \label{eq:hat-F-alpha-by-tr}
        \widehat{F}_\alpha\rbra{\rho, \sigma} = \Tr\rbra*{\rho \rbra*{\rho^{-1/2} \sigma \rho^{-1/2}}^{1-\alpha}} = \Tr\rbra*{\sigma \rbra*{\sigma^{-1/2} \rho \sigma^{-1/2}}^{\alpha}}.
    \end{equation}
    Suppose that $\rho$ and $\sigma$ are $n$-qubit mixed quantum states and $\mathcal{O}_\rho$ and $\mathcal{O}_\sigma$ are $\rbra{n+a}$-qubit unitary operators. 
    By Lemma~\ref{lmm:purified to block-encoding}, we can implement two unitary operators $U_{\rho}$ and $U_{\sigma}$ that are $\rbra{1, n+a, 0}$-block-encodings of $\rho$ and $\sigma$ using $O\rbra{1}$ queries to $\mathcal{O}_\rho$ and $\mathcal{O}_\sigma$, respectively. 
    We consider two approaches via the first and second formulas in Eq.~\eqref{eq:hat-F-alpha-by-tr} separately. 

    \textbf{Via the first formula}. By Lemma~\ref{lmm:ACA-p-blk-enc}, we can implement a $\rbra{2, b, \delta}$-block-encoding $W$ of $\kappa_\rho^{\alpha-1} \gamma_{\alpha} \rbra{\rho^{-1/2} \sigma \rho^{-1/2}}^{1-\alpha}$ using $\widetilde O\rbra{\kappa_\rho^2 \kappa_\sigma \log^2\rbra{1/\delta}}$ queries to $U_\rho$ and $\widetilde O\rbra{\kappa_\rho \kappa_\sigma \log\rbra{1/\delta}}$ queries to $U_\sigma$, where $b = 3a+7$, and 
    \begin{equation}
        \gamma_\alpha = 
        \begin{cases}
            1, & \alpha \in \rbra{0, 1}, \\
            \kappa_\rho^{1-\alpha} \kappa_\sigma^{1-\alpha}, & \alpha \in (1, 2].
        \end{cases}
    \end{equation}

    By the Hadamard test (given in Lemma~\ref{lemma:hadamard-test}), there is a quantum circuit $C$ that outputs $0$ with probability $\frac{1}{2}\rbra{ 1+\Re{\Tr\rbra{\rho \bra{0}_b W \ket{0}_b}} }$, using one query to $W$ and one sample of~$\rho$.
    By noting that
    \begin{equation}
        \abs*{ 2\kappa_\rho^{1-\alpha} \gamma_\alpha^{-1} \Re{\Tr\rbra{\rho \bra{0}_b W \ket{0}_b}} - \widehat{F}_\alpha\rbra{\rho, \sigma} } \leq \Theta\rbra{\kappa_\rho^{1-\alpha}\gamma_\alpha^{-1}\delta},
    \end{equation}
    we conclude that an $O\rbra{\kappa_\rho^{\alpha-1}\gamma_\alpha\epsilon}$-estimate of $\Re{\Tr\rbra{\rho \bra{0}_b U_M \ket{0}_b}}$ with $\delta = \Theta\rbra{\kappa_\rho^{\alpha-1}\gamma_\alpha\epsilon}$ suffices to obtain an $\epsilon$-estimate of $\widehat{F}_\alpha\rbra{\rho, \sigma}$. 
    By quantum amplitude estimation (given in Lemma~\ref{lemma:amp-estimation}), this can be done using $O\rbra{1/\delta} = O\rbra{\kappa_\rho^{1-\alpha}\gamma_\alpha^{-1}\epsilon^{-1}}$ queries to $C$. 
   
    To conclude, an $\epsilon$-estimate of $\widehat{F}_\alpha\rbra{\rho, \sigma}$ can be obtained by using 
    \begin{equation}
        \widetilde O\rbra{\kappa_\rho^2 \kappa_\sigma \log^2\rbra{1/\delta}} \cdot O\rbra{\kappa_\rho^{1-\alpha}\gamma_\alpha^{-1}\epsilon^{-1}} = \begin{cases}
            \widetilde O\rbra{\kappa_\rho^{3-\alpha} \kappa_\sigma/\epsilon}, & \alpha \in \rbra{0, 1}, \\
            \widetilde O\rbra{\kappa_\rho^2\kappa_\sigma^{\alpha}/\epsilon}, & \alpha \in (1, 2],
        \end{cases}
    \end{equation}
    queries to $\mathcal{O}_\rho$ and 
    \begin{equation}
        \widetilde O\rbra{\kappa_\rho \kappa_\sigma \log\rbra{1/\delta}} \cdot O\rbra{\kappa_\rho^{1-\alpha}\gamma_\alpha^{-1}\epsilon^{-1}} = \begin{cases}
            \widetilde O\rbra{\kappa_\rho^{2-\alpha} \kappa_\sigma/\epsilon}, & \alpha \in \rbra{0, 1}, \\
            \widetilde O\rbra{\kappa_\rho\kappa_\sigma^{\alpha}/\epsilon}, & \alpha \in (1, 2],
        \end{cases}
    \end{equation}
    queries to $\mathcal{O}_\sigma$.

    \textbf{Via the second formula}.
    By Lemma~\ref{lmm:ACA-p-blk-enc}, we can implement a $\rbra{2, b, \delta}$-block-encoding $W$ of $\kappa_\sigma^{-\alpha} \rbra{\sigma^{-1/2} \rho \sigma^{-1/2}}^{\alpha}$ using $\widetilde O\rbra{\kappa_\sigma^2 \kappa_\rho \log^2\rbra{1/\delta}}$ queries to $U_\sigma$ and $\widetilde O\rbra{\kappa_\rho \kappa_\sigma \log\rbra{1/\delta}}$ queries to $U_\rho$, where $b = 3a+7$.
    By the Hadamard test (given in Lemma~\ref{lemma:hadamard-test}), there is a quantum circuit $C$ that outputs $0$ with probability $\frac{1}{2}\rbra{ 1+\Re{\Tr\rbra{\rho \bra{0}_b W \ket{0}_b}} }$, using one query to $W$ and one sample of~$\rho$.
    By noting that
    \begin{equation}
        \abs*{ 2\kappa_\sigma^{\alpha}  \Re{\Tr\rbra{\rho \bra{0}_b W \ket{0}_b}} - \widehat{F}_\alpha\rbra{\rho, \sigma} } \leq \Theta\rbra{\kappa_\sigma^{\alpha}\delta},
    \end{equation}
    we conclude that an $O\rbra{\kappa_\sigma^{-\alpha}\epsilon}$-estimate of $\Re{\Tr\rbra{\rho \bra{0}_b U_M \ket{0}_b}}$ with $\delta = \Theta\rbra{\kappa_\sigma^{-\alpha}\epsilon}$ suffices to obtain an $\epsilon$-estimate of $\widehat{F}_\alpha\rbra{\rho, \sigma}$. 
    By quantum amplitude estimation (given in Lemma~\ref{lemma:amp-estimation}), this can be done using $O\rbra{1/\delta} = O\rbra{\kappa_\sigma^{\alpha}\epsilon^{-1}}$ queries to $C$. 
    
    To conclude, an $\epsilon$-estimate of $\widehat{F}_\alpha\rbra{\rho, \sigma}$ can be obtained by using $\widetilde O\rbra{\kappa_\sigma^2 \kappa_\rho \log^2\rbra{1/\delta}} \cdot O\rbra{1/\delta} = \widetilde O\rbra{\kappa_\sigma^{2+\alpha}\kappa_\rho/\epsilon}$ queries to $\mathcal{O}_\sigma$ and $\widetilde O\rbra{\kappa_\sigma \kappa_\rho \log\rbra{1/\delta}} \cdot O\rbra{1/\delta} = \widetilde O\rbra{\kappa_\sigma^{1+\alpha}\kappa_\rho/\epsilon}$ queries to $\mathcal{O}_\rho$.

    \textbf{Conclusion}. Combining the above cases (and their symmetrical cases), the query complexity is 
    \begin{itemize}
        \item $\widetilde O\rbra{\kappa_\rho\kappa_\sigma/\epsilon \cdot \min\cbra{\kappa_\rho, \kappa_\sigma}^{\min\cbra{1+\alpha, 2-\alpha}}}$ for $\alpha \in (0, 1)$,
        \item $\widetilde O\rbra{\kappa_\rho\kappa_\sigma/\epsilon \cdot \min\cbra{\kappa_\rho\kappa_\sigma^{\alpha-1}, \kappa_\rho^{\alpha-1}\kappa_\sigma, \kappa_{\rho}^{1+\alpha}, \kappa_\sigma^{1+\alpha}}}$ for $\alpha \in (1, 2]$.
    \end{itemize}
\end{proof}

\begin{proof} [Proof of Theorem~\ref{thm:g-Renyi-relative-entropy}]
    Note that $I/\kappa_\sigma \leq \rho^{-1/2} \sigma \rho^{-1/2} \leq \kappa_\rho I$. 
    Thus $\kappa_\sigma^{\alpha-1}I \leq \rbra{\rho^{-1/2} \sigma \rho^{-1/2}}^{1-\alpha} \leq \kappa_\rho^{1-\alpha} I$ for $\alpha \in (0, 1)$ and $\kappa_\rho^{1-\alpha} I \leq \rbra{\rho^{-1/2} \sigma \rho^{-1/2}}^{1-\alpha} \leq \kappa_\sigma^{\alpha-1}I$ for $\alpha \in (1, 2]$.
    By Eq.~\eqref{eq:hat-F-alpha-by-tr}, we have $\kappa_\sigma^{\alpha-1} \leq \widehat{F}_\alpha\rbra{\rho, \sigma} \leq \kappa_\rho^{1-\alpha}$ for $\alpha \in (0, 1)$ and $\kappa_\rho^{1-\alpha} \leq \widehat{F}_\alpha\rbra{\rho, \sigma} \leq \kappa_\sigma^{\alpha-1}$ for $\alpha \in (1, 2]$.

    For $\alpha \in (0, 1)$, to estimate $\widehat{D}_{\alpha}\rbra{\rho \Vert \sigma}$ within additive error $\epsilon$, we can estimate $\widehat{F}_\alpha\rbra{\rho, \sigma}$ to relative error $\epsilon$ (i.e., within additive error $\kappa_\sigma^{\alpha-1}\epsilon$). 
    By Theorem~\ref{thm:hat-F-alpha}, this can be done by using using 
    $\widetilde O\rbra{\kappa_\rho\kappa_\sigma^{2-\alpha}/\epsilon \cdot \min\cbra{\kappa_\rho, \kappa_\sigma}^{\min\cbra{1+\alpha, 2-\alpha}}}$
    queries to $\mathcal{O}_\rho$ and $\mathcal{O}_\sigma$.

    For $\alpha \in (1, 2]$, to estimate $\widehat{D}_{\alpha}\rbra{\rho \Vert \sigma}$ within additive error $\epsilon$, we can estimate $\widehat{F}_\alpha\rbra{\rho, \sigma}$ to relative error $\epsilon$ (i.e., within additive error $\kappa_\rho^{1-\alpha}\epsilon$). 
    By Theorem~\ref{thm:hat-F-alpha}, this can be done by using using 
    $\widetilde O\rbra{\kappa_\rho^\alpha\kappa_\sigma/\epsilon \cdot \min\cbra{\kappa_\rho\kappa_\sigma^{\alpha-1}, \kappa_\rho^{\alpha-1}\kappa_\sigma, \kappa_{\rho}^{1+\alpha}, \kappa_\sigma^{1+\alpha}}}$
    queries to $\mathcal{O}_\rho$ and $\mathcal{O}_\sigma$.
\end{proof}
    
\section{Proof of Lemmas~\ref{lemma:fidelity-lower-bound} and \ref{lemma:optimal-geo-fidelity}} \label{app:fidelity-lower-bound}

To prove the lower bound, we need the quantum query lower bound for distinguishing probability distributions given in~\cite{Bel19}.

\begin{lemma} [{\cite[Theorem 4]{Bel19}}] \label{lemma:q-dis-prob-distri}
    Let $p, q \colon \cbra{1, 2, \dots, n} \to \sbra{0, 1}$ be two probability distributions on a sample space of size $n$. 
    Let 
    \begin{align}
        U_p \ket{0} & = \sum_{j=1}^n \sqrt{p_j} \ket{j} \ket{\varphi_j}, \\
        U_q \ket{0} & = \sum_{j=1}^n \sqrt{q_j} \ket{j} \ket{\psi_j},
    \end{align}
    where $\cbra{\ket{\varphi_j}}_{j=1}^n$ and $\cbra{\ket{\psi_j}}_{j=1}^n$ are orthonormal bases. 
    Then, given an unknown unitary operator~$U$, any quantum query algorithm that determines whether $U = U_p$ or $U = U_q$ with probability at least $2/3$, promised that one or the other holds, has query complexity $\Omega\rbra{1/d_{\textup{H}}\rbra{p, q}}$, where
    \begin{equation}
        d_{\textup{H}}\rbra{p, q} \coloneqq \sqrt{\frac 1 2 \sum_{j=1}^n \rbra*{\sqrt{p_j} - \sqrt{q_j}}^2}
    \end{equation}
    is the Hellinger distance. 
\end{lemma}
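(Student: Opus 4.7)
The plan is to prove the bound via the Bennett--Bernstein--Brassard--Vazirani (BBBV) hybrid argument, combined with a careful choice of unitary extensions of $U_p$ and $U_q$ that minimises their operator norm distance subject to the given action on $\ket{0}$.

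First I would exploit the fact that the hypotheses only fix $U_p\ket{0}$ and $U_q\ket{0}$, leaving the action on $\ket{0}^\perp$ free. I would show that, among all unitary extensions consistent with these constraints, the minimum of $\|U_p-U_q\|_{op}$ equals $\|U_p\ket{0}-U_q\ket{0}\|$. To see this, set $W=U_q^{\dagger}U_p$, so that $W$ is a unitary with prescribed action $W\ket{0}=U_q^{\dagger}U_p\ket{0}$; choosing $W$ to act as the identity on the orthogonal complement of $\operatorname{span}(\ket{0},W\ket{0})$ and as a planar rotation inside that two-dimensional subspace gives $\|W-I\|_{op}=\|W\ket{0}-\ket{0}\|$, which by unitary invariance equals $\|U_p\ket{0}-U_q\ket{0}\|$. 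Specialising to the natural ``worst-case'' pair in which the auxiliary bases coincide ($\ket{\varphi_j}=\ket{\psi_j}$), a direct computation gives
\begin{equation}
    \|U_p\ket{0}-U_q\ket{0}\|^{2} \;=\; \sum_{j=1}^{n}\bigl(\sqrt{p_j}-\sqrt{q_j}\bigr)^{2} \;=\; 2\,d_{\textup{H}}(p,q)^{2},
\end{equation}
so there exist extensions with $\|U_p-U_q\|_{op}=\sqrt{2}\,d_{\textup{H}}(p,q)$.

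Next I would apply the BBBV hybrid argument. Any $T$-query algorithm producing an output state $\ket{\psi_T(U)}=V_T U V_{T-1}U\cdots V_1 U V_0 \ket{0}$, with oracle-independent unitaries $V_0,\ldots,V_T$, admits a telescoping through hybrids that swap $U_q$ for $U_p$ one query at a time. Each hybrid difference has norm at most $\|(U_p-U_q)\ket{\phi_k}\|\leq\|U_p-U_q\|_{op}$ on the intermediate state $\ket{\phi_k}$, giving
\begin{equation}
    \bigl\|\ket{\psi_T(U_p)}-\ket{\psi_T(U_q)}\bigr\| \;\leq\; T\cdot \|U_p-U_q\|_{op} \;\leq\; \sqrt{2}\,T\,d_{\textup{H}}(p,q).
\end{equation}
Finally, since success probability at least $2/3$ forces the trace distance between the two output density matrices to be at least $1/3$, and the trace distance between pure states is upper bounded by twice the norm difference of the state vectors, one needs $\|\ket{\psi_T(U_p)}-\ket{\psi_T(U_q)}\|=\Omega(1)$. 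Combining yields $T=\Omega\!\bigl(1/d_{\textup{H}}(p,q)\bigr)$.

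The main obstacle is the first step: rigorously justifying that we may pick extensions for which $\|U_p-U_q\|_{op}$ is controlled by $d_{\textup{H}}(p,q)$. For arbitrary bases $\{\ket{\varphi_j}\}$ and $\{\ket{\psi_j}\}$ the quantity $\|U_p\ket{0}-U_q\ket{0}\|$ can in principle be as large as $2$, and a naive hybrid argument would not deliver the advertised bound; so one has to interpret the lemma as asserting the existence of adversarial $U_p,U_q$ of the prescribed form whose distinguishing complexity is $\Omega(1/d_{\textup{H}}(p,q))$, realised by the symmetric choice $\ket{\varphi_j}=\ket{\psi_j}$. Belovs's original argument bypasses this issue by invoking the general quantum adversary method, which directly yields a bound in terms of the Hellinger distance without any case analysis on the auxiliary bases; the hybrid version above captures the same asymptotic dependence and is sufficient for every use of the lemma in this paper.
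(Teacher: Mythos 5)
The paper states Lemma~\ref{lemma:q-dis-prob-distri} as a citation to Belovs and does not reprove it, so there is no in-paper proof to compare against; Belovs's argument is via the (general, negative-weight) quantum adversary method. Your proposal takes a genuinely different and more elementary route: a BBBV-style hybrid argument together with a ``minimal'' unitary extension so that $\|U_p-U_q\|_{\mathrm{op}}=\|U_p\ket{0}-U_q\ket{0}\|$, which for the symmetric choice $\ket{\varphi_j}=\ket{\psi_j}$ evaluates to $\sqrt{2}\,d_{\mathrm{H}}(p,q)$. Your extension lemma is correct (not only for planar rotations on real vectors: for a general unit vector $W\ket{0}=c_0\ket{0}+c_1\ket{e_1}$, the $2\times 2$ unitary $\begin{psmallmatrix} c_0 & -\bar c_1 \\ c_1 & \bar c_0\end{psmallmatrix}$ padded by identity has $\|W-I\|_{\mathrm{op}}=\sqrt{2(1-\Re c_0)}=\|W\ket{0}-\ket{0}\|$), and the hybrid bound and trace-distance step are standard.

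You also correctly flag the gap between your statement and Belovs's: the hybrid argument only produces a hard instance, i.e.\ it shows the existence of extensions and ancilla bases of the prescribed form that require $\Omega(1/d_{\mathrm{H}})$ queries, whereas Belovs's adversary bound is insensitive to the auxiliary bases and the extension. What each approach buys: the adversary method yields the lower bound uniformly over all valid oracles and moreover is tight (it matches the upper bound in Belovs's paper), whereas the hybrid argument is elementary and self-contained. For the paper's purposes the weaker statement suffices, since in the proofs of Lemmas~\ref{lemma:fidelity-lower-bound} and~\ref{lemma:optimal-geo-fidelity} the adversary is free to pick the purification oracles, so one may take $\ket{\varphi_j}=\ket{\psi_j}=\ket{j}$ and the norm-minimising extensions; the fidelity/geometric-fidelity estimation algorithms must in any case succeed on every valid purification oracle. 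A small cosmetic remark: the pure-state trace distance is bounded by the \emph{Euclidean} distance between the state vectors (not twice that quantity), i.e.\ $\frac12\|\,|\psi\rangle\!\langle\psi|-|\phi\rangle\!\langle\phi|\,\|_1=\sqrt{1-|\langle\psi|\phi\rangle|^2}\le\|\ket{\psi}-\ket{\phi}\|$; this is slightly tighter than what you wrote and does not change the conclusion.
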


Lemma~\ref{lemma:q-dis-prob-distri} was also used to prove quantum query lower bounds in \cite[Section 4.2]{GHS21}, \cite[Theorem 13]{LWL24}, and \cite[Section V]{Wan24}.

\begin{proof} [Proof of Lemma~\ref{lemma:fidelity-lower-bound}]
Let $\epsilon \in (0, 1/4)$.
Consider the discrimination of the two probability distributions $p, q \colon \cbra{0, 1} \to \sbra{0, 1}$ on a sample space of size two such that for each $j \in \cbra{0, 1}$,
\begin{align}
    p_j & = \frac{1 + \rbra{-1}^j \epsilon}{2}, \\
    q_j & = \frac{1 + \rbra{-1}^j 2\epsilon}{2}.
\end{align}
It can be verified that their Hellinger distance is upper bounded by
\begin{equation}
    d_\text{H} \rbra{p, q} = \sqrt{1 - \frac{\sqrt{\rbra{1+\epsilon}\rbra{1+2\epsilon}} + \sqrt{\rbra{1-\epsilon}\rbra{1-2\epsilon}}}{2}} \leq \epsilon.
\end{equation}
Suppose that two unitary operators $U_p$ and $U_q$ are given such that
\begin{align}
    U_p \ket{0} & = \sqrt{p_0} \ket{0} \ket{\varphi_0} + \sqrt{p_1} \ket{1} \ket{\varphi_1}, \\
    U_q \ket{0} & = \sqrt{q_0} \ket{0} \ket{\psi_0} + \sqrt{q_1} \ket{1} \ket{\psi_1},
\end{align}
where $\cbra{\ket{\varphi_0}, \ket{\varphi_1}}$ and $\cbra{\ket{\psi_0}, \ket{\psi_1}}$ are orthonormal bases. 

Let $\mathcal{A}\rbra{\mathcal{O}_\rho, \mathcal{O}_\sigma, \kappa_\rho, \kappa_\sigma, \epsilon}$ be any quantum query algorithm that estimates the fidelity $F\rbra{\rho, \sigma}$ between two mixed quantum states $\rho$ and $\sigma$ within additive error $\epsilon$, where $\mathcal{O}_\rho$ and $\mathcal{O}_\sigma$ prepare purifications of $\rho$ and $\sigma$, respectively, with $\rho \geq I/\kappa_\rho$ and $\sigma \geq I/\kappa_\sigma$. 
In the following, we use $\mathcal{A}\rbra{\mathcal{O}_\rho, \mathcal{O}_\sigma, \kappa_\rho, \kappa_\sigma, \epsilon}$ to distinguish $U_p$ and $U_q$. 
We first note that $U_p$ and $U_q$ can be understood as quantum unitary oracles that prepare purifications of the following two quantum states:
\begin{equation} \label{eq:hard-instance-fidelity-estimation}
     \rho = \frac{1 + \epsilon}{2} |0\rangle\!\langle 0| + \frac{1 - \epsilon}{2} |1\rangle\!\langle 1|, \quad \sigma = \frac{1 + 2\epsilon}{2} |0\rangle\!\langle 0| + \frac{1 - 2\epsilon}{2} |1\rangle\!\langle 1|.
\end{equation}
Then, one can set $\kappa_{\rho} = \kappa_\sigma = 4 = \Theta\rbra{1}$. 
Consider the quantum state
\begin{equation} \label{eq:hard-instance-eta}
    \eta = \frac 1 4 |0\rangle\!\langle 0| + \frac{3}{4} |1\rangle\!\langle 1|,
\end{equation}
and let $\mathcal{O}_\eta$ be a quantum oracle that prepares a purification of $\eta$.
We note that
\begin{align}  
F\rbra{\rho, \eta} & = \frac{\sqrt{1+\epsilon}+\sqrt{3\rbra{1-\epsilon}}}{\sqrt{8}}. \\
F\rbra{\sigma, \eta} & = \frac{\sqrt{1+2\epsilon}+\sqrt{3\rbra{1-2\epsilon}}}{\sqrt{8}}.
\end{align}
By simple calculation, we have 
\begin{equation} \label{eq:diff-fidelity}
    F\rbra{\rho, \eta} - F\rbra{\sigma, \eta} \geq \frac{\epsilon}{16}.
\end{equation}
Let $U$ be the unitary oracle to be tested, promised that either $U = U_p$ or $U = U_q$. 
For convenience, suppose that $U$ prepares a purification of $\varrho$, promised that either $\varrho = \rho$ or $\varrho = \sigma$. 
Our algorithm for determining which is the case is given as follows.
\begin{enumerate}
    \item Apply $\mathcal{A}\rbra{U, \mathcal{O}_\eta, 4, 4, \epsilon/64}$ to obtain an $\epsilon/64$-estimate $\tilde x$ of $F\rbra{\varrho, \eta}$. 
    \item If $\abs{\tilde x - F\rbra{\rho, \eta}} \leq \epsilon/32$, then return that $U = U_p$; otherwise, return that $U = U_q$.
\end{enumerate}
It can be verified that the above algorithm determines whether $U = U_p$ or $U = U_q$ with high probability, where the correctness is mainly based on Eq.~\eqref{eq:diff-fidelity}.

On the other hand, by Lemma~\ref{lemma:q-dis-prob-distri}, any quantum query algorithm that distinguishes $U_p$ and $U_q$ has query complexity $\Omega\rbra{1/d_\text{H} \rbra{p, q}} = \Omega\rbra{1/\epsilon}$.
Therefore, the algorithm $\mathcal{A}\rbra{U, \mathcal{O}_\eta, 4, 4, \epsilon/64}$ should use at least $\Omega\rbra{1/\epsilon}$ queries to $U$, which completes the proof. 
\end{proof}

Using the same hard instance, we can prove Lemma~\ref{lemma:optimal-geo-fidelity}.

\begin{proof} [Proof of Lemma~\ref{lemma:optimal-geo-fidelity}]
Note that under the choice of $\rho, \sigma, \eta$ the same as the proof of Lemma~\ref{lemma:fidelity-lower-bound}, we still have
\begin{equation}
    \widehat{F}_{1/2} \rbra{\rho, \eta} - \widehat{F}_{1/2} \rbra{\sigma, \eta} \geq \frac{\epsilon}{16},
\end{equation}
which is similar to Eq.~\eqref{eq:diff-fidelity}. 

Such an observation can be generalized to the general case when $0 < \alpha < 1$, which, however, becomes a bit more complicated. 
We first note that
\begin{align}  
\widehat{F}_\alpha\rbra{\rho, \eta} & = \rbra*{\frac{1}{4}}^{1-\alpha} \rbra*{\frac{1+\epsilon}{2}}^{\alpha} + \rbra*{\frac{3}{4}}^{1-\alpha} \rbra*{\frac{1-\epsilon}{2}}^{\alpha}, \\
\widehat{F}_\alpha\rbra{\sigma, \eta} & = \rbra*{\frac{1}{4}}^{1-\alpha} \rbra*{\frac{1+2\epsilon}{2}}^{\alpha} + \rbra*{\frac{3}{4}}^{1-\alpha} \rbra*{\frac{1-2\epsilon}{2}}^{\alpha}.
\end{align}
To make the construction in the proof of Lemma~\ref{lemma:fidelity-lower-bound} applicable to $\widehat{F}_\alpha\rbra{\cdot, \cdot}$ for $0 < \alpha < 1$, we only have to show that there is a constant $c > 0$ and $\epsilon_0 > 0$ (which  depends only on $\alpha$) such that for all $0 < \epsilon < \epsilon_0$, it holds that
\begin{equation}
    \widehat{F}_\alpha\rbra{\rho, \eta} - \widehat{F}_\alpha\rbra{\sigma, \eta} \geq c \epsilon. 
\end{equation}
To complete the proof, we show that this is achievable by noting that 
\begin{equation}
    \lim_{\epsilon \to 0} \frac{\widehat{F}_\alpha\rbra{\rho, \eta} - \widehat{F}_\alpha\rbra{\sigma, \eta}}{\epsilon} = \rbra*{\frac{1}{2}}^{\alpha} \sbra*{\rbra*{\frac{3}{4}}^{1-\alpha} - \rbra*{\frac{1}{4}}^{1-\alpha}} \alpha > 0.
\end{equation}
\end{proof}

\section{Sample complexity for fidelity estimation} \label{app:fidelity-estimation-sample}

In this section, we show how to extend our quantum query algorithm in Theorem~\ref{thm:fidelity-estimation} to a quantum algorithm that only uses samples of quantum states as input. 
To this end, we need the technique of density matrix exponentiation \cite{lloyd2014quantum,KLL+17}. 
Here, we use the extension given in \cite{WZ24} that is easy to use for quantum query algorithms.

\begin{lemma}[Samplizer, {\cite[Theorem 1.3]{WZ24}}] \label{lemma:samplizer}
    Let $\mathcal{A}^{U}$ be a quantum query algorithm that uses $Q$ queries to the unitary oracle $U$. 
    Then, for any $\delta \in \rbra{0, 1}$ and quantum state $\rho$, we can implement a quantum channel $\mathsf{Samplize}_{\delta} \ave{\mathcal{A}^{U}}\sbra{\rho}$ by using $\widetilde{O}\rbra{Q^2/\delta}$ samples of $\rho$, such that there is a unitary operator $U_{\rho}$ that is a $\rbra{1, a, 0}$-block-encoding of $\rho/2$ for some $a > 0$ satisfying
    \begin{equation}
    \Abs*{\mathcal{A}^{U_\rho} - \mathsf{Samplize}_{\delta} \ave{\mathcal{A}^{U}}\sbra{\rho}}_\diamond \leq \delta.
    \end{equation}
\end{lemma}

Now let $\mathcal{A}^{U_A, U_C}$ be the quantum query algorithm in Lemma~\ref{lem:simplericcati}, where $U_A$ and $U_C$ are supposed to be $\rbra{1, a, 0}$-block-encodings of matrices $A$ and $C$, respectively. 
Assume that if it is known that $A \geq I/\kappa_A$ and $C \geq I/\kappa_C$, then $\mathcal{A}^{U_A, U_C}$ uses $Q_A = \widetilde{O}\rbra*{\kappa_A^2\kappa_C\log^3\rbra*{1/\delta}}$ queries to $U_A$ and $Q_C = \widetilde{O}\rbra*{\kappa_A\kappa_C\log^2\rbra*{1/\delta}}$ queries to $U_C$.
Here, $\mathcal{A}^{U_A, U_C}$ is a $\rbra{2\kappa_A, b, \delta}$-block-encoding of $A^{-1}\# C$, where $b = 5a+11$. 

Let $\delta_A, \delta_C \in \rbra{0, 1}$ be parameters to be determined. 
By Lemma~\ref{lemma:samplizer}, we can implement a quantum query algorithm (using queries to $U_C$)
\begin{equation}
    \mathcal{B}^{U_C} \coloneqq \mathsf{Samplize}_{\delta_A} \ave{\mathcal{A}^{\xboxed{U_A}, U_C}}\sbra{\sigma}
\end{equation}
that uses $S_A = \widetilde{O}\rbra{Q_A^2/\delta_A}$ samples of $\sigma$ such that there is a unitary operator $U_\sigma$ that is a $\rbra{1, a_\sigma, 0}$-block-encoding of $\sigma/2$ for some $a_\sigma > 0$ satisfying
\begin{equation} \label{eq:samplizer-ab}
    \Abs*{\mathcal{A}^{U_\sigma, U_C} - \mathcal{B}^{U_C}}_\diamond \leq \delta_A,
\end{equation}
Here, the boxed oracle $\xboxed{U}$ denotes the oracle to be samplized.

Again, by Lemma~\ref{lemma:samplizer}, we can implement a quantum channel
\begin{equation}
    \mathcal{C} \coloneqq \mathsf{Samplize}_{\delta_C} \ave{\mathcal{B}^{U_C}}\sbra{\rho}
\end{equation}
that uses additional $S_C = \widetilde{O}\rbra{Q_C^2/\delta_C}$ samples of $\rho$ such that there is a unitary operator $U_\rho$ that is a $\rbra{1, a_\rho, 0}$-block-encoding of $\rho/2$ for some $a_\rho > 0$ satisfying
\begin{equation} \label{eq:samplizer-bc}
    \Abs*{\mathcal{B}^{U_\rho} - \mathcal{C}}_\diamond \leq \delta_C.
\end{equation}
By Eq.~\eqref{eq:samplizer-ab} and Eq.~\eqref{eq:samplizer-bc}, we have
\begin{equation} \label{eq:precision-ac}
    \Abs*{\mathcal{A}^{U_\sigma, U_\rho} - \mathcal{C}}_\diamond \leq \delta_A + \delta_C.
\end{equation}
By taking $A \coloneqq \sigma/2$ and $C \coloneqq \rho/2$, we know that $\mathcal{A}^{U_\sigma, U_\rho}$ is a $\rbra{4 \kappa_\sigma, b, \delta}$-block-encoding of $\rbra{\sigma/2}^{-1} \# \rbra{\rho/2} = \sigma^{-1} \# \rho$. 
Then, following the analysis of Theorem~\ref{thm:fidelity-estimation}, by the Hadamard test (given in Lemma~\ref{lemma:hadamard-test}), there is a quantum circuit $C$ that outputs $0$ with probability 
\begin{equation} \label{eq:sample-prob}
    p = \frac{1}{2}\rbra*{ 1+\Re{\Tr\rbra{\bra{0}_b \mathcal{A}^{U_\sigma, U_\rho} \ket{0}_b \sigma}} },
\end{equation}
using one query to $\mathcal{A}^{U_\sigma, U_\rho}$ and one sample of~$\sigma$.
Note that
\begin{equation} \label{eq:sample-error}
    \abs*{ 4\kappa_\sigma \Re{\Tr\rbra{\bra{0}_b \mathcal{A}^{U_\sigma, U_\rho} \ket{0}_b \sigma}} - \Tr\rbra{\rbra{\sigma^{-1} \# \rho}\sigma} } \leq \Theta\rbra{\delta}.
\end{equation}
If we construct another quantum circuit $C'$ by replacing $\mathcal{A}^{U_\sigma, U_\rho}$ by $\mathcal{C}$ in the implementation of $C$,
then $C'$ outputs $0$ with probability $p'$ such that
\begin{equation} \label{eq:sample-prob-prime}
    \abs*{p - p'} \leq \Theta\rbra{\delta_A + \delta_C}
\end{equation}
because of Eq.~\eqref{eq:precision-ac}. 
By $O\rbra{1/\epsilon_H^2}$ repetitions of $C'$, we can obtain an $\epsilon_H$-estimate $\tilde p$ of $p'$, i.e., 
\begin{equation} \label{eq:sample-hadamard-test}
    \abs{\tilde p - p'} \leq \epsilon_H.
\end{equation}
By Eqs.~\eqref{eq:sample-prob}, \eqref{eq:sample-error}, \eqref{eq:sample-prob-prime}, and \eqref{eq:sample-hadamard-test}, we have 
\begin{equation}
    \abs*{ 4\kappa_\sigma \rbra{2\tilde p - 1} - F\rbra{\rho, \sigma}} \leq \Theta\rbra{\delta + \kappa_\sigma\rbra{\delta_A + \delta_C + \epsilon_H}}.
\end{equation}
By taking $\delta = \Theta\rbra{\epsilon}$ and $\delta_A = \delta_C = \epsilon_H = \Theta\rbra{\epsilon/\kappa_\sigma}$, we can estimate $F\rbra{\rho, \sigma}$ to within additive error $\epsilon$.
The number of samples of $\sigma$ used is 
\begin{equation}
    O\rbra*{\frac{1}{\epsilon_H^2}} \cdot \rbra{S_A+1} = \widetilde{O}\rbra*{\frac{\kappa_\sigma^7 \kappa_\rho^2}{\epsilon^3}},
\end{equation}
and the number of samples of $\rho$ used is
\begin{equation}
    O\rbra*{\frac{1}{\epsilon_H^2}} \cdot S_C = \widetilde{O}\rbra*{\frac{\kappa_\sigma^5 \kappa_\rho^2}{\epsilon^3}}.
\end{equation}
Therefore, the total number of samples of $\rho$ and $\sigma$ used is $\widetilde{O}\rbra{\kappa_\sigma^7 \kappa_\rho^2/\epsilon^3}$.
By considering the symmetric case, the sample complexity for fidelity estimation is $\widetilde{O}\rbra{\min\cbra{\kappa_\rho^5, \kappa_\sigma^5} \cdot \kappa_\rho^2 \kappa_\sigma^2/\epsilon^3}$. 

\begin{lemma} [Sample complexity for fidelity estimation]
    Suppose that two quantum states $\rho$ and $\sigma$ satisfy $\rho \geq I/\kappa_\rho$ and $\sigma \geq I/\kappa_\sigma$ for some known parameters $\kappa_\rho, \kappa_\sigma \geq 1$. 
    Then, we can estimate their fidelity by using $\widetilde{O}\rbra{\min\cbra{\kappa_\rho^5, \kappa_\sigma^5} \cdot \kappa_\rho^2 \kappa_\sigma^2/\epsilon^3}$ samples of them.
\end{lemma}

We can also derive a sample lower bound for fidelity estimation.

\begin{lemma} [Sample lower bound for fidelity estimation]
    Suppose that two quantum states $\rho$ and $\sigma$ satisfy $\rho \geq I/\kappa_\rho$ and $\sigma \geq I/\kappa_\sigma$ for some known parameters $\kappa_\rho, \kappa_\sigma \geq 1$. 
    Then, every quantum algorithm that estimates $F\rbra{\rho, \sigma}$ within additive error~$\epsilon$ requires sample complexity $\Omega\rbra{1/\epsilon^2}$ even if $\kappa_\rho = \kappa_\sigma = \Theta\rbra{1}$.
\end{lemma}

\begin{proof}
    Using the same instance in the proof of Lemma~\ref{lemma:fidelity-lower-bound}, we can distinguish the following two quantum states $\rho$ and $\sigma$ defined by Eq.~\eqref{eq:hard-instance-fidelity-estimation} by estimating the fidelity $F\rbra{\rho, \eta}$ and $F\rbra{\sigma, \eta}$, where $\eta$ is defined by Eq.~\eqref{eq:hard-instance-eta}.
    On the other hand, distinguishing $\rho$ and $\sigma$ requires sample complexity $\Omega\rbra{1/\epsilon^2}$ by the Helstrom-Holevo bound \cite{Hel67,Hol73}. 
\end{proof}

\section{Proof of Lemma~\ref{lemma:BQP-containment}} \label{app:lemmaMGM}
We consider how to solve $\textup{MGM}\rbra{\kappa_A, \kappa_C}$ in quantum time $\poly\rbra{n}$ for $\kappa_A = \poly\rbra{n}$ and $\kappa_C = \poly\rbra{n}$.
    It is straightforward that the given uniform classical circuit $\mathcal{C}_n$ implies the quantum implementations of the sparse oracles of $A$ and $C$, which are (uniform) quantum circuits of size $\poly\rbra{n}$.
    By Lemma~\ref{lemma:sparse-to-block}, we can implement $U_A$ and $U_C$ such that $U_A$ and $U_C$ are $\rbra{O\rbra{1}, \poly\rbra{n}, \epsilon}$-block-encodings of $A$ and $C$, respectively, using $O\rbra{1}$ queries to the sparse oracles of $A$ and $C$ and $O\rbra{\poly\rbra{n}+\polylog\rbra{1/\epsilon}}$ one- and two-qubit quantum gates.
    Here, we choose $\epsilon = 1/\exp\rbra{n}$ for convenience, and  
     we assume that $U_A$ and $U_C$ are $\rbra{O\rbra{1}, \poly\rbra{n}, 0}$-block-encodings of $\hat A$ and $\hat C$ such that $\Abs{\hat A - A} \leq \epsilon$ and $\Abs{\hat C - C} \leq \epsilon$.
    
    Let $\delta = O\rbra{\kappa_A^{-5}\kappa_C^{-5}} = 1/\poly\rbra{n}$. 
    By Lemma~\ref{lem:simplericcati}, we can implement an $\rbra{O\rbra{1}, \poly\rbra{n}, \delta}$-block-encoding $U_{Y}$ of $\kappa_A^{-1} \hat Y$ using $\widetilde O\rbra{\kappa_A \kappa_C \log^2\rbra{1/\delta}} = \poly\rbra{n}$ queries to $U_C$, $\widetilde O\rbra{\kappa_A^2 \kappa_C \log^3\rbra{1/\delta}} = \poly\rbra{n}$ queries to $U_A$, and $\poly\rbra{n} \cdot \poly\rbra{\kappa_A, \kappa_C, \log\rbra{1/\delta}} = \poly\rbra{n}$ one- and two-qubit quantum gates, where $\hat Y = \hat A^{-1} \# \hat C$. 
    Moreover, the quantum circuit description of $U_Y$ can be computed in classical time $\poly\rbra{\kappa_A, \kappa_C, \log\rbra{1/\delta}} = \poly\rbra{n}$. 
    By Lemma~\ref{lmm:product of block-encoding}, we can implement an $\rbra{O\rbra{1}, \poly\rbra{n}, O\rbra{\delta}}$-block-encoding $U_{Y^2}$ of $\kappa_A^{-2} \hat Y^2$ using $O\rbra{1}$ queries to $U_Y$. 
    Here, we assume that $U_{Y^2}$ is an $\rbra{O\rbra{1}, \poly\rbra{n}, 0}$-block-encoding of
    \begin{equation}
    Z = \bra{0}^{\otimes a} U_{Y^2} \ket{0}^{\otimes a}    
    \end{equation}
    such that $\Abs{Z - \kappa_A^{-2} \hat Y^2} \leq O\rbra{\delta}$, and it can be easily shown that $\Abs{\hat Y - Y} \leq O\rbra{\epsilon}$.
    Then, 
    \begin{equation}
        \Abs{Z - \kappa_A^{-2} Y^2} \leq O\rbra{\delta + \kappa_A^{-2}\epsilon},
    \end{equation}
    \begin{equation}
        \rbra*{\kappa_A^{-3} \kappa_C^{-1} - O\rbra{\epsilon} \kappa_A^{-2} - O\rbra{\delta}} I \leq Z \leq I.
    \end{equation}
    The latter can be seen by noting that $\kappa_A^{-1}\kappa_C^{-1} I \leq Y^2 \leq \kappa_A^2 I$. 

    Now we prepare the quantum state $\ket{\psi} = U_{Y^2} \ket{0} = \ket{0}^{\otimes a} \otimes Z \ket{0} + \ket{\perp}$ where $\ket{\perp}$ is orthogonal to $\ket{0}^{\otimes a} \otimes \ket{\varphi}$ for any $\ket{\varphi}$.
    By measuring the first $a$ qubits of $\ket{\psi}$, the outcome will be $0^a$ with probability
    \begin{equation}
        \Abs{Z\ket{0}}^2 \geq \Theta\rbra{\kappa_A^{-6}\kappa_C^{-2}}  = \frac{1}{\poly\rbra{n}}
    \end{equation}
    and $\ket{\psi}$ will become the state $\ket{u_Z} \coloneqq Z\ket{0}/\Abs{Z\ket{0}}$. 
    Let $\ket{u_Y} \coloneqq Y^2\ket{0}/\Abs{Y^2\ket{0}}$. 
    We have
    \begin{align}
        \Abs{ \ket{u_Z} - \ket{u_Y} } 
        & \leq \Abs*{ \frac{Z\ket{0}}{\Abs{Z\ket{0}}} - \frac{\kappa_A^{-2}Y^2\ket{0}}{\Abs{Z\ket{0}}} } + \Abs*{ \frac{\kappa_A^{-2}Y^2\ket{0}}{\Abs{Z\ket{0}}} - \frac{\kappa_A^{-2}Y^2\ket{0}}{\Abs{\kappa_A^{-2}Y^2\ket{0}}} } \\
        & \leq O\rbra*{\frac{\Abs{Z - \kappa_A^{-2}Y^2}}{\Abs{Z\ket{0}}}} \\
        & \leq O\rbra*{ \frac{\delta + \kappa_A^{-2}\epsilon}{\kappa_A^{-3}\kappa_C^{-1}} } = \frac{1}{\poly\rbra{n}}. \label{eq:diff-uz-uy}
    \end{align}
    Let $p_Z$ (resp.\ $p_Y$) be the probability that outcome $0$ will be obtained by measuring the first qubit of $\ket{u_Z}$ (resp.\ $\ket{u_Y}$). 
    Note that $p_Z = \bra{u_Z}M\ket{u_Z}$ and $p_Y = \bra{u_Y}M\ket{u_Y}$, where $M = |0\rangle\!\langle 0| \otimes I$ measures the first qubit of $\ket{u_Z}$ and $\ket{u_Y}$. 
    Then, $\abs{p_Z - p_Y} \leq 1/\poly\rbra{n}$ by Eq.~\eqref{eq:diff-uz-uy}. 
    
    Finally, we can estimate $p_Z$ to precision, say $0.1$, by repeating the above procedure for $O\rbra{1}$ times; in this way, we can determine whether $p_Y \geq 2/3$ or $p_Y \leq 1/3$ with high probability. 
    As all the procedures mentioned above take $\poly\rbra{n}$ time, we obtain a polynomial-time quantum algorithm for $\textup{MGM}\rbra{\kappa_A, \kappa_C}$ if $\kappa_A = \poly\rbra{n}$ and $\kappa_C = \poly\rbra{n}$. 
    Therefore, we conclude that $\textup{MGM}\rbra{\poly\rbra{n}, \poly\rbra{n}}$ is in $\mathsf{BQP}$. 

\end{document}